\newif\iflong
\renewcommand\footnotetextcopyrightpermission[1]{}
\lstdefinelanguage{program}{%
  keywords={%
    let,pass,function,%
    var,const,bool,int,void,atomic,%
    while,do,if,then,else,assume,assert,call,return,rule,forall,with,new,choose,skip,%
    task,async,yield,for,wait,%
    type,relation,init, action, safety, invariant, axiom, input
  },
  morecomment=[l]{//},
  morecomment=[s]{/*}{*/},
  morecomment=[n]{(*}{*)},
  mathescape=true,
  escapeinside=`',
}
\newif\ifnitpick
\newif\ifproofs
\newcommand{\refappendix}[1]{\Cref{#1}}
\newcommand{\refappendix}[1]{the extended version~\cite{extendedVersion}}
\Crefname{conjecture}{Conjecture}{Conjectures}
\Crefname{proposition}{Proposition}{Propositions}
\Crefname{lemma}{Lemma}{Lemmas}
\Crefname{corollary}{Corollary}{Corollaries}
\Crefname{example}{Example}{Examples}
\Crefname{definition}{Definition}{Definitions}
\Crefname{algorithm}{Alg.}{Alg.}
\Crefname{theorem}{Theorem}{Theorems}
\Crefname{figure}{Figure}{Figures}
\crefname{line}{line}{lines}
\newtheorem{theo}{Theo}[section] %
\newtheorem{remark}[theo]{Remark}
\let\state\relax  %
\newcommand{\ignore}[1]{}
\newcommand{\powerset}[1]{\mathcal{P}(#1)}
\def\ImportFromMnSymbol#1{%
  \DeclareFontFamily{U} {MnSymbol#1}{}
  \DeclareFontShape{U}{MnSymbol#1}{m}{n}{
   <-6> MnSymbol#15
   <6-7> MnSymbol#16
   <7-8> MnSymbol#17
   <8-9> MnSymbol#18
   <9-10> MnSymbol#19
   <10-12> MnSymbol#110
   <12-> MnSymbol#112}{}
  \DeclareFontShape{U}{MnSymbol#1}{b}{n}{
   <-6> MnSymbol#1-Bold5
   <6-7> MnSymbol#1-Bold6
   <7-8> MnSymbol#1-Bold7
   <8-9> MnSymbol#1-Bold8
   <9-10> MnSymbol#1-Bold9
   <10-12> MnSymbol#1-Bold10
   <12-> MnSymbol#1-Bold12}{}
  \DeclareSymbolFont{MnSy#1} {U} {MnSymbol#1}{m}{n}
}
\newcommand\DeclareMnSymbol[4]{\DeclareMathSymbol{#1}{#2}{MnSy#3}{#4}}
\DeclareMnSymbol{\diamondminus}{\mathrel}{C}{120}
\newcommand{\para}[1]{\vspace{0.3cm}\noindent \textbf{#1.}}
\newcommand{\child}{\operatorname{child}}
\newcommand{\ikey}{\mathit{key}}
\newcommand{\ileft}{\mathit{left}}
\newcommand{\iright}{\mathit{right}}
\newcommand{\repfunc}{\mathcal{A}}
\newcommand{\XleftY}[3][=]{#2.\ileft#1#3}
\newcommand{\XrightY}[3][=]{#2.\iright#1#3}
\newcommand{\XparentY}[3][\in]{#3#1\child(#2)}
\newcommand{\QReachXYK}[3]{#1\searchpath{#3}{#2}}
\newcommand{\PReachXK}[2]{\diamondminus(\QReachXK{#1}{#2})}
\newcommand{\QReachXK}[2]{\rootobj\searchpath{#2}{#1}}
\newcommand{\QReachAbs}[2]{\textit{reach}({#1})}
\newcommand{\QXK}[3][=]{#2.\mathit{key}#1#3}
\newcommand{\QXV}[3][=]{#2.\mathit{tag}#1#3}
\newcommand{\QXD}[2][]{#1#2.\mathit{del}}
\newcommand{\QXR}[2][]{#1#2.\mathit{rem}}
\DeclareMathOperator*{\PastSymb}{\diamondminus}
\newcommand{\Past}[1]{\diamondminus (#1)}
\newcommand{\Pastlimit}[3]{\mathop{\diamondminus}_{#2}^{#3} (#1)}
\newcommand{\fkey}{\mathit{key}}
\newcommand{\fleft}{\mathit{left}}
\newcommand{\fright}{\mathit{right}}
\newcommand{\fparent}{\mathit{parent}}
\newcommand{\fnext}{\mathit{succ}}
\newcommand{\fprev}{\mathit{pred}}
\newcommand{\fdel}{\mathit{del}}
\newcommand{\frem}{\mathit{rem}}
\newcommand{\fversion}{\mathit{tag}}
\newcommand{\fontCONST}[1]{\mathit{#1}}
\newcommand{\NULL}{\fontCONST{null}}
\newcommand{\noamxr}[1]{}
\newcommand{\polC}[1][]{\mathrm{C}}
\newcommand{\polphi}[1][]{\phi}
\newcommand{\polCsyn}[1][]{\mathtt{C}}
\newcommand{\polphisyn}[1][]{\boldsymbol{\phi}}
\newcommand{\code}[1]{\ensuremath{\mathtt{#1}}}
\newcommand{\wlupdate}[1][]{\operatorname{upd}}
\newcommand{\wlupdateset}[1][]{\operatorname{updSets}}
\newcommand{\exec}{\pi}
\newcommand{\state}{\sigma}
\newcommand{\statedom}{\Sigma} %
\newcommand{\false}{\textit{false}}
\newcommand{\true}{\textit{true}}
\newcommand{\set}[1]{\{{#1}\}}
\newcommand{\rootobj}{{\tt root}}
\newcommand{\searchpath}[1]{\overset{#1}{\leadsto}}
\newcommand{\extend}[1]{\text{extend}_{{#1}}}
\newcommand{\succroot}{\{-\infty\}}
\newcommand{\predroot}{\{+\infty\}}
\newcommand{\QSReachXK}[2]{\succroot\searchpath{#2}{#1}}
\newcommand{\PSReachXK}[2]{\Past{\QSReachXK{#1}{#2}}}
\newcommand{\QSReachX}[1]{\QSReachXK{#1}{}}
\definecolor{assertion}{rgb}{0, 0, 255}
\newcommand{\oloc}[2]{({#1},{#2})}
\newcommand{\totalto}{\to}
\newcommand{\partialto}{\hookrightarrow}
\newcommand{\searchpathgap}[1]{\overset{#1}{\dashrightarrow}}
\newcommand{\QReachXYKgap}[3]{#1\searchpathgap{#3}{#2}}
\newcommand{\QReachXKgap}[2]{\QReachXYKgap{\rootobj}{#1}{#2}}
\newcommand{\PReachXKgap}[2]{\Past{\QReachXKgap{#1}{#2}}}
\newcommand{\QReachXKsucc}[2]{\QReachXK{#1}{{#2}+\epsilon}}
\newcommand{\PReachXKsucc}[2]{\Past{\QReachXKsucc{#1}{#2}}}
\newcommand{\ghost}[1]{\underline{#1}}
\newcommand{\unlocked}[1]{\mathit{unlocked}({#1})}
\newcommand{\fieldloc}{f}
\newcommand{\texbegin}{t_{\star}}
\newcommand{\texend}{t^{\star}} 
\begin{document}
\newif\ifcomments
\commentsfalse
\nochangebars
\definecolor{dg}{cmyk}{0.60,0,0.88,0.27}

\newcommand{\sharonnew}[1]{}
\newcommand{\yotamsmallnew}[1]{}
\newcommand{\artemnew}[1]{}
\newcommand{\adamnew}[1]{}

\ifcomments
\newcommand{\artem}[1]{{\footnotesize\color{olive}[{\bf Artem}: #1]}}
\newcommand{\yotamsmall}[1]{{\footnotesize\color{magenta}[{\bf Yotam}: #1]}}

\newcommand{\sharon}[1]{{\textcolor{blue}{SS: {\em #1}}}}
\newcommand{\adam}[1]{{\textcolor{teal}{AM: {\em #1}}}}
\newcommand{\mooly}[1]{{\textcolor{cyan}{MS: {\em #1}}}}
\newcommand{\noam}[1]{{\textcolor{dg}{NR: {\em #1}}}}
\newcommand{\yotam}[1]{{\textcolor{magenta}{{\bf #1}}}}
\newcommand{\TODO}[1]{{\textcolor{red}{TODO: {\em #1}}}}

\else
\newcommand{\sharon}[1]{}
\newcommand{\adam}[1]{}
\newcommand{\mooly}[1]{}
\newcommand{\neil}[1]{}
\newcommand{\yotam}[1]{}
\newcommand{\TODO}[1]{}
\newcommand{\artem}[1]{}
\newcommand{\yotamsmall}[1]{}

\fi

\newcommand{\commentout}[1]{}
\newcommand{\OMIT}[1]{}

\title{Proving Highly-Concurrent Traversals Correct}

\author[YMY. Feldman]{Yotam M. Y. Feldman}
\affiliation{
  \institution{Tel Aviv University}
  \country{Israel}
}
\email{yotam.feldman@gmail.com}

\author[A. Khyzha]{Artem Khyzha}
\affiliation{
  \institution{Tel Aviv University}
  \country{Israel}
}
\email{artkhyzha@mail.tau.ac.il}

\author[C. Enea]{Constantin Enea}
\affiliation{
  \institution{IRIF, Universit\'{e} de Paris}
  \country{France}
}
\email{cenea@irif.fr}

\author[A. Morrison]{Adam Morrison}
\affiliation{
  \institution{Tel Aviv University}
  \country{Israel}
}
\email{mad@cs.tau.ac.il}

\author[A. Nanevski]{Aleksandar Nanevski}
\affiliation{
  \institution{IMDEA Software Institute}
  \country{Spain}
}
\email{aleks.nanevski@imdea.org}

\author[N. Rinetzky]{Noam Rinetzky}
\affiliation{
  \institution{Tel Aviv University}
  \country{Israel}
}
\email{maon@cs.tau.ac.il}

\author[S. Shoham]{Sharon Shoham}
\affiliation{
  \institution{Tel Aviv University}
  \country{Israel}
}
\email{sharon.shoham@gmail.com}

\begin{abstract}

Modern highly-concurrent search data structures, such as search trees, obtain multi-core scalability and
performance by having operations traverse the data structure without any synchronization.  As a result,
however, these algorithms are notoriously difficult to prove linearizable, which requires identifying
a point in time in which the traversal's result is correct.  The problem is that
traversing the data structure as it undergoes modifications leads to complex behaviors, %
necessitating intricate reasoning about all interleavings of reads by traversals and writes mutating the data structure.

In this paper, we present a general proof technique for proving unsynchronized traversals correct in
a significantly simpler manner, compared to typical concurrent reasoning and prior proof techniques.
Our framework relies only on \emph{sequential properties} of traversals and on a conceptually
simple and widely-applicable condition about the ways an algorithm's writes mutate the data structure.
Establishing that a target data structure satisfies our condition requires only simple concurrent
reasoning,
without considering interactions of writes and reads.  This reasoning can be further
simplified by using our framework.

To demonstrate our technique,
we apply it to prove several interesting and challenging concurrent binary
search trees: the logical-ordering AVL tree, the Citrus tree, and the full contention-friendly
tree.  Both the logical-ordering tree and the full contention-friendly tree are beyond the reach
of previous approaches targeted at simplifying linearizability proofs.

\end{abstract}

\begin{CCSXML}
<ccs2012>
   <concept>
       <concept_id>10003752.10010124.10010138</concept_id>
       <concept_desc>Theory of computation~Program reasoning</concept_desc>
       <concept_significance>500</concept_significance>
       </concept>
   <concept>
       <concept_id>10003752.10003809.10011778</concept_id>
       <concept_desc>Theory of computation~Concurrent algorithms</concept_desc>
       <concept_significance>500</concept_significance>
       </concept>
 </ccs2012>
\end{CCSXML}

\ccsdesc[500]{Theory of computation~Program reasoning}
\ccsdesc[500]{Theory of computation~Concurrent algorithms}

\keywords{concurrent data structures, traversal, traversal correctness, proof framework, linearizability}

\maketitle
\section{Introduction}

A \emph{search data structure} provides a mutable, searchable set or dictionary (e.g., a binary search
tree or B+tree).
Many important systems, such as databases~\cite{Silo,MassTree} and operating systems~\cite{Clements:2012},
rely on highly-concurrent search data structures for multi-core scalability and performance.
In a \emph{highly-concurrent} algorithm, operations synchronize only when modifying the same node.
In particular, traversals searching for a key simply navigate the data structure, without performing synchronization, which
enables them to run completely in parallel on multiple cores.
This design principle is key to search data structure performance~\cite{David:2015:ACS,Gramoli:2015} and
underpins modern concurrent search trees~\cite{Arbel:2014,Brown:2014:GTN,Clements:2012,PFL16:CFTree,Drachsler:2014,Ellen:2010,Howley:2012,Natarajan:2014,intlf},
skip lists~\cite{NoHotSpotSkipList,fraser-phd,OptSkipList}, and lists/hash tables~\cite{HarrisList,HellerHLMMS05,MichaelList,Triplett:2011:RSC}.

Highly-concurrent algorithms are notoriously difficult to prove correct~\cite{phd/Vafeiadis08,PODC10:Hindsight,Lev-AriCK15,FeldmanE0RS18}.
The standard desired correctness condition is \emph{linearizability}~\cite{TOPLAS:HW90}, which requires that every
operation appears to take effect atomically at some point during its execution.
It is hard, however, to identify a point in which a highly-concurrent traversal's result holds, because
traversing the data structure while it is undergoing modifications can lead to following a path whose links
did not exist simultaneously in memory, navigating from a node after it becomes unreachable, and similar
complex behaviors.

Accordingly, an emerging research thrust is to design proof techniques that enable using \emph{sequential}
reasoning to simplify proving the correctness of highly-concurrent algorithms~\cite{Lev-AriCK15,FeldmanE0RS18}.
The vision is for correctness proofs to follow from a meta-theorem about properties of the algorithm's
sequential code, i.e., when running without interference.  The user's job then reduces to proving
that these sequential properties hold, which does not involve difficult \emph{concurrent} reasoning
about interleaved steps of concurrent operations.

While existing work does not yet fully remove concurrent reasoning on the user's part, the
amount of concurrent reasoning required is decreasing.
Whereas base points~\cite{Lev-AriCK15} require the user to prove properties of concurrent traversals,
local view arguments~\cite{FeldmanE0RS18} only rely on sequential properties of traversals, applying under certain conditions, which must be proven with concurrent reasoning.
Unfortunately, the local view framework's preconditions are complex and restrictive, and
are not satisfied by several data structures such as those by~\citet{PFL16:CFTree,Drachsler:2014}.

In this paper, we
present a proof technique based on a \emph{conceptually simpler and widely-applicable condition}, which
enables tackling data structures beyond the reach of previous approaches in addition to simplifying proofs of the data structures
supported by them.

Our technique targets proving \emph{traversal correctness}, a proof goal that was implicit in previous works~\cite{FeldmanE0RS18,PODC10:Hindsight}, which is defined to mean that a traversal searching
for key $k$ reaching a node $n$ implies that at some point during the traversal's execution so far, $n$ satisfied
a \emph{reachability predicate} $P_k$ over the memory state---e.g., that $n$ is on the search path for $k$.
Proving traversal correctness is typically the crux of the data structure's linearizability proof,
which then becomes straightforward to complete by the user.

Our key theorem establishes traversal correctness \emph{without having to reason about how a traversal is affected by concurrent writes},
by reasoning solely about how the algorithm's writes modify the memory state.
Specifically, writes should satisfy a \emph{forepassed} condition, which (informally) states that they do not
reduce the reachability of any memory location unless that location is either not modified later,
\begin{changebar}
or modified to point only to locations that have already been reachable.
\end{changebar}
The use of the forepassed condition \emph{alleviates} the need to reason about \emph{how reads in the traversal interleave with interfering writes}; the condition only depends on how different \emph{writes} interleave.
Proving that the forepassed condition holds can be done by relying (inductively) on traversal correctness,
which greatly simplifies the proof. %
This apparent circularity is valid because our theorem is also proven inductively, so %
both inductions
can be combined (\cref{sec:circular}).

The \emph{only} requirement for applying our technique is that the reachability predicates $P_k$ be compatible with
the traversals, which (informally) means that if a traversal searching for $k$ navigates from node $n$ to node
$n'$ and $P_k(n)$ holds, then so does $P_k(n')$.  Compatibility is a property of the traversal's code on a
static memory state, and so can be established with sequential reasoning.  In fact, it typically holds trivially,
when the predicates are defined based on how traversals navigate the data structure.

Overall, our technique facilitates clear and simple linearizability proofs,
\begin{changebar}
applicable to concurrent search data structures with optimistic traversals implementing sets/maps in sequentially-consistent shared-memory.\footnote{Our framework is also applicable to algorithms whose traversals use stronger synchronization (e.g. hand-over-hand-locking), but this stronger synchronization satisfies stronger properties allowing simpler proofs with alternative methods; see e.g.~\cite{DBLP:conf/popl/AttiyaRR10}.}
\end{changebar}
We demonstrate our technique on
several sophisticated binary search tress (BSTs): the logical ordering AVL tree~\cite{Drachsler:2014} and the
full contention-friendly tree~\cite{PFL16:CFTree}, which cannot be reasoned about with prior proof techniques,
and the Citrus tree~\cite{Arbel:2014}, which is a complex algorithm that has not been proven using
the prior techniques.

\begin{changebar}
A framework~\cite{DBLP:journals/tods/ShashaG88} simplifying proofs of concurrent data structures has recently been pivotal in several works on proof simplification using concurrent separation logic~\cite{DBLP:conf/pldi/KrishnaPSW,DBLP:journals/pacmpl/KrishnaSW18},
but this framework is inapplicable to optimistic traversals (see~\Cref{sec:related}).
\end{changebar}
Our work provides a new proof framework, suitable for highly-concurrent data structures, in which we use sequential reasoning to tackle one of the Gordian knots of this domain.
We believe that the theory underlying our result sheds a light on the
fundamental reasons for the correctness of these algorithms.  We further hope that the theory can be useful for the
design of new algorithms, which would explicitly target satisfying the forepassed condition.

\para{Contributions}
To summarize, this paper makes the following contributions:
\begin{enumerate}
	\item We formally define traversal correctness and present a new general proof argument for the correctness of highly-concurrent traversals. %
	\item We provide a simple condition on interfering writes and prove that it establishes traversal correctness.
	\item We apply our framework to prove several interesting and challenging concurrent data structures, including the logical ordering AVL tree~\cite{Drachsler:2014}, the Citrus tree~\cite{Arbel:2014}, and the full contention-friendly tree~\cite{PFL16:CFTree}.
\end{enumerate}

\section{Background: Proofs of Linearizability}
\label{sec:linearizability-background}
In this section we provide some background on proofs of
linearizability~\cite{TOPLAS:HW90}, which is the standard correctness
criterion for many highly-concurrent data structures.\footnote{Our technique may also be
applicable to proofs of non-linearizable objects~\cite{DBLP:conf/oopsla/SergeyNBD16}, a direction which we
plan to pursue in future work.}
\begin{changebar}
Throughout this paper we assume a sequentially consistent shared-memory system,
i.e., in which the execution is a sequence of interleaved memory operations performed by the threads.
\end{changebar}

Linearizability requires that every individual method invocation appears to take place instantaneously at some point between its invocation and its return. A classic approach to proving linearizability is to show that the concurrent data structure is \emph{simulated} by a reference implementation where methods execute in a \emph{single} step and according to the expected sequential semantics of the data type. Concretely, this corresponds to defining an \emph{abstraction function} (or simulation relation) that relates states of the concurrent data structure with states of a reference sequential implementation of the data type, such that any step in the concurrent data structure is mapped by the abstraction function to a step of the reference implementation (modulo stuttering).

We focus
the presentation on \emph{set} data structures, as are the examples in our paper.
Set data structures implement
the standard methods $\code{insert}(k)$ and $\code{delete}(k)$
for adding or removing an element from the set, respectively,
and $\code{contains}(k)$ which checks membership of an element.
For set data structures, the abstraction function $\repfunc: \statedom \totalto \powerset{\mathbb{N}}$ maps every concrete
memory state $\state \in \statedom$ of the concurrent data structure (a \emph{state} is a mapping from memory locations to values) %
to an abstract (mathematical) set. For simplicity, we assume that set elements are natural numbers.
Establishing linearizability then amounts to showing that in every execution, and
for every method invocation in the execution (also called an \emph{operation}), there is a point during its
execution interval where the operation ``takes effect'' on the
abstract set.  In our context, this means that
in every execution $\exec$ (an execution is a finite sequence of states $\state_0,\state_1,\ldots$):
\begin{itemize}
	\item For every invocation of $\code{contains}(k)$ that returns true, resp., false, there is a state $\state \in \pi$ during the invocation such that $k \in \repfunc(\state)$, resp., $k \not\in \repfunc(\state)$. (Note that both such states may exist, in which case $\code{contains}(k)$ may return either true or false.)%

	\item For every unsuccessful invocation of $\code{insert}(k)$, resp., $\code{delete}(k)$, (i.e., an invocation returning false), there is a state $\state \in \exec$ during the invocation such that $k \in \repfunc(\state)$, resp., $k \not\in \repfunc(\state)$.

	\item For every successful invocation of $\code{insert}(k)$ (returning true), there is an insert-\emph{decisive transition}: a consecutive pair of states $\state_i,\state_{i+1} \in \pi$ during the invocation such that $\repfunc(\state_{i+1}) = \repfunc(\state_i) \cup \set{k}$ and $\repfunc(\state_i) \neq \repfunc(\state_{i+1})$.

	\item For every successful $\code{delete}(k)$ (returning true), there is a delete-\emph{decisive transition}: a consecutive pair of states $\state_i,\state_{i+1} \in \pi$ during the invocation such that $\repfunc(\state_{i+1}) = \repfunc(\state_i) \setminus \set{k}$ and $\repfunc(\state_i) \neq \repfunc(\state_{i+1})$.
\end{itemize}
Further, there is only one point of modification associated with each successful modification: there is a one-to-one mapping between a pair of states $(\state_i,\state_{i+1}) \in \pi$ where $\repfunc(\state_i) \neq \repfunc(\state_{i+1})$ to a method invocation for which $(\state_i,\state_{i+1})$ is a decisive transition (this corresponds to the fact that the methods of the reference implementation execute in a single step).

The above conditions guarantee that all concurrent executions are linearizable.
In the algorithms we consider, the decisive transition of a modifying invocation (successful $\code{insert}$ or $\code{delete}$) can be identified statically to correspond to one fixed write in the method's code, constituting a \emph{fixed linearization point}.
The case of $\code{contains}$ and unsuccessful $\code{insert},\code{delete}$ is different. For such invocations, it is impossible to identify a fixed linearization point, i.e., \emph{statically} identify the state where the abstract set contains an element or not as the state reached when the method executes some \emph{fixed} instruction.
\section{Overview} \label{sec:motivation}
We use the logical-ordering tree~\cite{Drachsler:2014} as a running example for the challenge of proving
linearizability of a highly-concurrent search data structure \begin{changebar}\yotamsmallnew{shortened:}and how our framework simplifies such proofs. %
\end{changebar}

\subsection{Example: The Logical Ordering Tree}
The Logical Ordering (LO) tree is a self-balancing binary search tree (BST), in which keys are stored in both internal and leaf nodes.
To maintain a small height, a self-balancing tree modifies its structure in response to insertions and deletions.
These modifications are performed by \emph{tree rotations}, which mutate the tree's structure without changing
the order of the keys.  \Cref{LO:Rotation} shows an example, in which node $y$ is rotated right so that the
length of the path to the subtree $A$ decreases.

The main challenge for concurrent self-balancing trees is how to avoid having a rotation throw a concurrent traversal
``off track.'' \Cref{LO:Rotation} shows an example, in which a traversal headed towards $A$ that is located at $y$
before the rotation would instead reach $C$, if the rotation happens before the traversal reads $y$'s left child pointer.
The LO tree solves this problem using a \emph{logical ordering} technique.  In addition to the usual $\fleft$/$\fright$
pointers that induce the binary tree structure, each node also has $\fprev$/$\fnext$ pointers, which link the node
to its predecessor/successor, respectively, according to the logical ordering of the keys.  
The idea is that a traversal can follow the $\fprev$/$\fnext$ pointers to find its target node, should some rotation throw it
off track.

\begin{wrapfigure}[13]{r}{.45\textwidth}
\vspace{-5pt}
\includegraphics[scale=0.635]{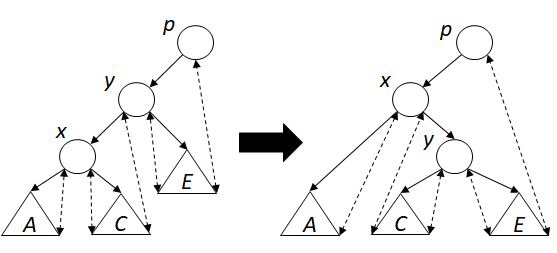}
\caption{\footnotesize Logical ordering tree.  Following a right rotation of $y$, a traversal about to navigate from
$y$ towards $A$ (before) would normally reach $C$ instead (after).  However, it can follow $\fprev$
pointers (dashed) to reach $A$ from $C$.}
\label{LO:Rotation}
\end{wrapfigure}%

Conceptually, the data structure consists of a doubly linked list sorted according to logical key order,
with a balanced binary tree superimposed on the list's nodes.  The binary tree structure is used to speed up
the set \code{insert}/\code{delete}/\code{contains} operations, but it is ultimately a key's presence or absence in
the list that determines an operation's result.  \Cref{Fi:LO-code} presents the code of the algorithm.
(The code is annotated with assertions in curly braces, which should be ignored at this point.)
For brevity, we omit the rebalancing logic, which decides when to perform rotations, as well as other auxiliary code
functions.

\begin{figure}
\centering
\begin{adjustbox}{width=\textwidth,center}
\begin{tabular}[t]{p{4.35cm}p{5.3cm}p{4cm}}
\begin{lstlisting}
type N 
  int key
  bool rem
  N left, right, parent
  Lock treeLock
  N succ, pred
  Lock succLock

N min$\leftarrow$new N($-\infty$);
N max$\leftarrow$new N($\infty$);
min$.$succ$\leftarrow$max
max$.$pred$\leftarrow$min
root$\leftarrow$max

private N tree-locate(int k)
  $\label{Ln:lo-locate-start}$x,y$\leftarrow$root
  while (y$\neq$null $\land$ y$.$key$\neq$k) $\label{Ln:lo-tree-search-start}$
    x$\leftarrow$y
    if (x$.$key<k)
      y$\leftarrow$x$.$right
    else
      y$\leftarrow$x$.$left  $\label{Ln:lo-tree-search-end}$
 $
  \color{assertion}
   {\begin{array}{l}
   \{\PSReachX{x} \\
   \ \ {}\land \PSReachX{y}\}
  \end{array}
  }
  \label{Ln:lo-RetLocate}
  $  
  $\label{Ln:lo-locate-end}$return (y=null? x : y)

bool contains(int k)
  x$\leftarrow$tree-locate(k)
  $\label{Ln:lo-contains-pred-start}$while x$.$key $>$ k
    $  
      \color{assertion}{\{ \PSReachX{x} \}}
      \label{Ln:lo-pred-search}
    $ 
    $\label{Ln:lo-contains-pred-end}$x$\leftarrow$x$.$pred
  $\label{Ln:lo-contains-succ-start}$while x$.$key $<$ k
    $  
      \color{assertion}{\{ \PSReachXK{x}{k} \}}
      \label{Ln:lo-succ-search}
    $ 
    $\label{Ln:lo-contains-succ-end}$x$\leftarrow$x$.$succ
  if (x$.$key $\neq$ k) 
   $  
    \color{assertion}{
    \begin{array}{l}
    \{ \Past{\QSReachXK{x}{k}} \land \QXK[>]{x}{k}  \}
    \end{array}
    }
    \label{Ln:lo-ContainsRetFalseBigger}
   $ 
    return false
  $\color{assertion}{\{ \PSReachXK{x}{k} \land \QXK{x}{k} \}  \label{Ln:lo-ContainsBeforeReadDEL}}$
  if (x$.$rem)$\label{Ln:lo-ContainsReadRem}$
   $  
      \color{assertion}{
      \begin{array}{l}
      \{ \Past{\QSReachXK{x}{k} \land \QXR{x}} \\
        \ \ {}\land \QXK{x}{k}  \}
      \end{array}
      }
      \label{Ln:lo-ContainsRetFalseRem}
    $ 
    return false
 $  
    \color{assertion}{
    \begin{array}{l}
    \{ \Past{\QSReachXK{x}{k} \land \neg \QXR{x}} \\
       \ \ {}\land \QXK{x}{k}  \}
    \end{array}
    }
    \label{Ln:lo-ContainsRetTrue}
  $ 
  return true
\end{lstlisting}
&
\begin{lstlisting}
bool delete(int k)
  x$\leftarrow$tree-locate(k)
  p$\leftarrow$(x$.$key$>$k ? x$.$pred : x)
  lock(p.succLock)
  s$\leftarrow$p$.$succ
  if k$\not\in$(p$.$key,s$.$key] $\lor$ p$.$rem
    restart
 $
  \color{assertion}{
  \begin{array}{l}
  \label{Ln:lo-delete-range}
  \{\QSReachXK{p}{k} \land k \in (p.\fkey,s.\fkey] \\
  \ \ {}\land \QXR[\neg]{p} \land p.\fnext=s \, \} \label{Ln:lo-delete-reachable-now}
  \end{array}}
  $  
  if s$.$key$\neq$k
    $
      \color{assertion}{\{\QSReachXK{s}{k} \land \QXK[>]{s}{k}\}} \label{Ln:lo-remove-ret-false}
    $  
    return false
  lock(s$.$succLock)
 $\color{assertion}{
  \begin{array}{l}
  \{\QSReachXK{s}{k} \land \QXK{s}{k} \land \QXR[\neg]{s}
  \}
  \end{array}
  }
  \label{Ln:lo-delete-write-rem}
  $
  $\label{Ln:lo-delete-write-rem-mark}$s$.$rem$\leftarrow$true
  removeFromTree(s)
  y$\leftarrow$s$.$succ
  y$.$pred$\leftarrow$p
 $
  \color{assertion}{
    \begin{array}{l}
    \{\QSReachXK{p}{k} \land p.\fnext=s \land s.\fnext=y   \\
      \ \ {}\land \QXK{s}{k} \land p.\fkey < k < y.\fkey \\
      \ \ {}\land \QXR[\neg]{p} \land \QXR{s}
    \}
    \end{array}
    }
  \label{Ln:lo-delete-write}
  $
  $\label{Ln:lo-delete-write-previous}$p$.$succ$\leftarrow$y
  return true

bool insert(int k)
  x$\leftarrow$tree-locate(k)
  p$\leftarrow$(x$.$key$>$k ? x$.$pred : x)
  lock(p.succLock)
  s$\leftarrow$p$.$succ
  if k$\not\in$(p$.$key,s$.$key] $\lor$ p$.$rem
    restart
 $
  \color{assertion}{
  \begin{array}{l}
    \label{Ln:lo-insert-range}
    \{\QSReachXK{p}{k} \land k \in (p.\fkey,s.\fkey] \\
      \ \ {}\land \QXR[\neg]{p} \land p.\fnext=s\,  \} \label{Ln:lo-insert-reachable-now}
  \end{array}}
  $ 
  if s$.$key$=$k
    $
      \color{assertion}{\{\QSReachXK{s}{k} \land \QXK{s}{k} \land \QXR[\neg]{s}\}} \label{Ln:lo-insert-ret-false} \label{Ln:lo-insert-ret-false-unmarked}
    $  
    return false
  n$\leftarrow$new N(k)
  $\label{Ln:lo-insert-write-new}$n$.$succ$\leftarrow$s
  n$.$pred$\leftarrow$p
  z$\leftarrow$chooseParent(p,s,n)
  n$.$parent$\leftarrow$z
 $\color{assertion}{
    \begin{array}{l}
    \{\QSReachXK{p}{k} \land \QXR[\neg]{p}   \\
      \ \ {}\land p.\fnext=s \land k \in (p.\fkey,s.\fkey) \\
      \ \ {}\land \QXK{n}{k}\land\QXR[\neg]{n} \land n.\fnext=s
    \}
    \end{array}
    }
  \label{Ln:lo-insert-write}
  $
  $\label{Ln:lo-insert-write-previous}$p$.$succ$\leftarrow$n
  s$.$pred$\leftarrow$n
  lock(z$.$treeLock)
  if (z$.$key$<$k)
    z$.$left$\leftarrow$n
  else
    z$.$right$\leftarrow$n
  return true
\end{lstlisting}
&
\begin{lstlisting}
private removeFromTree(n)  
  lock(n$.$treeLock)
  if (n$.$left $=$ $\NULL$)
    updateChild(n$.$parent,
                n,
                n$.$right)
    return
  if (n$.$right $=$ $\NULL$)
    updateChild(n$.$parent,
                n,
                n$.$left)
    return
  s$\leftarrow$n$.$succ
  lock(s$.$treeLock)
  // temporarily unlink s
  updateChild(s$.$parent,$\label{Ln:lo-remove2-unlink}$
              s,
              s$.$right)
  // s takes n's location
  s$.$left$\leftarrow$n$.$left $\label{Ln:lo-remove2-dup-start}$
  s$.$right$\leftarrow$n$.$right $\label{Ln:lo-remove2-dup-end}$
  n$.$left$.$parent$\leftarrow$s
  if (n$.$right $\neq$ $\NULL$)
    n$.$right$.$parent$\leftarrow$s
  $\label{Ln:lo-remove2-done}$updateChild(n$.$parent,n,s)

private updateChild(p,n,c)
  // pre: n locked
  // pre: n$.$parent $=$ p
  // pre: c = n's only child
  lock(p.treeLock)
  if (p$.$left $=$ n)
    p$.$left = c
  else
    p$.$right = c
  if (c $\neq$ $\NULL$)
    c$.$parent$\leftarrow$p

rotateRightLeft()
  p$\leftarrow$tree-locate($*$)
  lock(p.treeLock)
  y$\leftarrow$p$.$left
  if (y=null)
    return
  lock(y.treeLock)
  x$\leftarrow$y$.$left
  if(x=null)
    return
  lock(x.treeLock)
  $\label{Ln:lo-rotate-unlink-y}$p$.$left$\leftarrow$x
  p$.$parent$\leftarrow$y
  p$.$left$\leftarrow$x
  y$.$left$\leftarrow$x$.$right
  $\label{Ln:lo-rotate-link-y-back}$x$.$right$\leftarrow$y
\end{lstlisting}
\end{tabular}
\end{adjustbox}
\caption{\label{Fi:LO-code}
Logical-ordering tree~\cite{Drachsler:2014}. For brevity, \textbf{unlock} operations are omitted; a procedure releases all the locks it acquired when it terminates or \textbf{restart}s. $*$ denotes an arbitrary key.
}
\end{figure} 
The algorithm uses fine-grained locking. Each node has two locks, \code{treeLock} and \code{succLock}, for protecting
tree and list manipulations, respectively.  Every operation begins by traversing the tree, without acquiring any locks,
until reaching a leaf. A \code{contains} operation then traverses the linked list, searching for the target key,
and returns whether it was found.
\begin{changebar}
First it follows $\fprev$ pointers until it finds a node with a key not greater than the target, and if the target key was not yet found it continues to follow $\fnext$ pointers (see~\Cref{LO:traversal}). (The distinction between $\fprev$ and $\fnext$ pointers is important because they are not modified together atomically. As we shall see, the decisive view is of the list defined by $\fnext$ pointers.)
\end{changebar}
The \code{insert} and \code{delete} operations acquire locks, verify that they
have landed at the correct location in the list, and then perform their respective operation on both the list and
the tree. The \emph{update} operations, \code{insert} and \code{delete}, are \emph{successful} if they insert/delete
the key from the data structure, and \emph{failed} otherwise.
\begin{figure}
\includegraphics[scale=0.4]{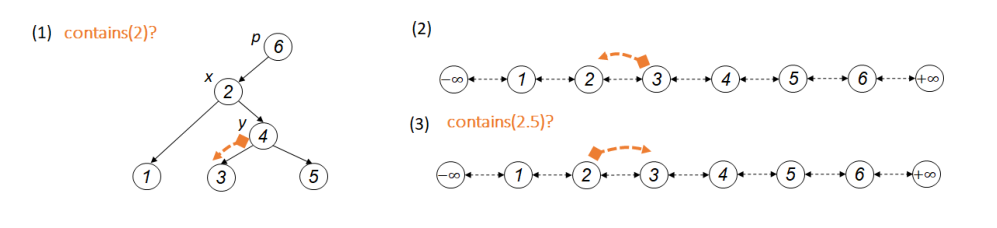}
\caption{\footnotesize Traversals in the LO tree. A traversal looking for key $2$ has reached node $y$ due to a concurrent rotation in the tree (see~\Cref{LO:Rotation}). In (1), it continues to perform a binary search in the tree and does not find $2$ (\crefrange{Ln:lo-tree-search-start}{Ln:lo-tree-search-end} in~\Cref{Fi:LO-code}). Nevertheless, in (2), the traversal continues by reading $\fprev$ pointers and finds the key, allowing it to return $\code{contains}(2)=\true$ (\cref{Ln:lo-pred-search}). Another traversal, this time looking for key $2.5$, encountering the same scenario in (1)--(2) performs an extra step of reading $\fnext$ pointers until it reaches a node with a larger key, allowing it to return $\code{contains}(2.5)=\false$ (\cref{Ln:lo-succ-search}).}
\label{LO:traversal}
\end{figure}%

Deletions are performed in two steps. The node is first \emph{logically} deleted, removing its key from the set
represented by the tree, by setting its boolean \code{rem} field. The node is then \emph{physically} removed from
the tree and list. Physical removal of a leaf or a node with a single child simply splices the node from the
tree.
\begin{figure}
\includegraphics[scale=0.635]{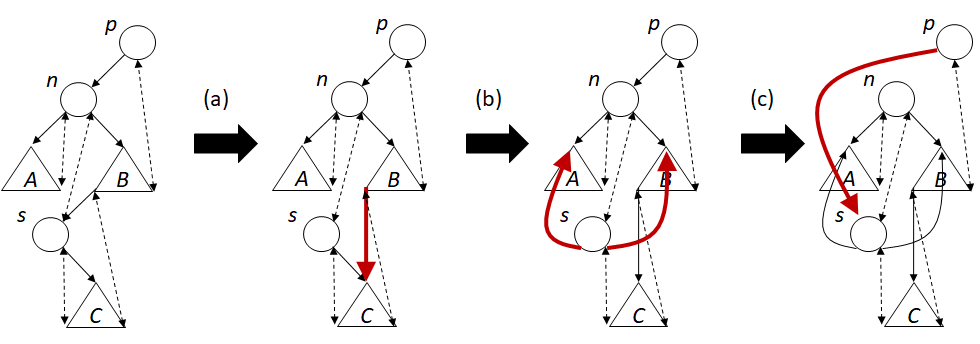}
\caption{\footnotesize Removing a node $\code{n}$ with with two children from the tree structure of the LO tree (\code{removeFromTree} in~\Cref{Fi:LO-code}). 
$\code{s}$ is the successor of $\code{n}$, found using the list layout. (a) $\code{s}$ is temporarily removed from the tree structure (\cref{Ln:lo-remove2-unlink}). (b) the children of $\code{n}$ are copied to $\code{s}$ (\crefrange{Ln:lo-remove2-dup-start}{Ln:lo-remove2-dup-end}). (c) $\code{n}$'s parent is modified to point to $\code{s}$ instead; thus $\code{s}$ takes the location of $\code{n}$ in the tree (\cref{Ln:lo-remove2-done}).
Note that $\fprev,\fnext$ are not modified yet, and are thus inconsistent with the tree layout (the list layout is updated afterwards). 
The updates to the $\fparent$ field are not presented.}
\label{Fi:LO:remove2c}
\end{figure} 
 Physical removal of a node with two children is more subtle (see~\Cref{Fi:LO:remove2c}). It is performed by
replacing the deleted node with its successor, which is obtained from the list. Similarly to a rotation, moving the
successor up the tree in this way%
\footnote{In the tree, the successor is the leftmost leaf of the node's right subtree.}
can cause concurrent tree traversals searching for the successor to miss it \begin{changebar}(in~\Cref{Fi:LO:remove2c}, \code{s} is unreachable in the tree after modification (a) and before (c))\end{changebar}. For this
reason, other concurrent BSTs~\cite{Arbel:2014,Bronson:2010,PFL16:CFTree} implement internal node deletion differently.
The LO tree solves the problem by relying on the underlying list structure for correctness.
\begin{changebar}
In insertion, \code{chooseParent} returns the location in the tree to which the new node should be linked.
We omit its code,
because (as we shall see in the proof below), the correctness of tree operations stems from the list structure. A node's
tree location only determines the efficiency of a traversal locating the node.
\end{changebar}

Overall, successful update operations mutate both the tree and list structure,
which involves writing to multiple memory locations. Concurrent traversals can observe a subset of these writes,
effectively observing the data structure in mid-update.  Reasoning about all possible interleavings of writes
and traversal reads is incredibly hard.  Indeed, we find that the LO tree's original linearizability proof~\cite{Drachsler:2014} is flawed.%
\footnote{This has been confirmed with Drachsler et al.}
The proof claims that an \code{insert}($k$) takes effect when $k$'s node gets pointed to by its predecessor's $\fnext$
pointer.
However, a \code{contains} searching for $k$ could find it before this update by following $\fprev$ pointers.  Consequently, the code we present in~\cref{Fi:LO-code} uses a different order of pointer
updates than the original code~\cite{Drachsler:2014}: in \code{insert}, we \emph{first} make the key reachable by $\fnext$ links, and only then add it to the tree and $\fprev$ links; in \code{delete}, we remove the node \emph{last} from the successors list, after it is unreachable by tree and $\fprev$ links.\footnote{
	Earlier versions of the current paper have shown in~\Cref{Fi:LO-code} an incorrect order of insertions in \code{insert}, despite referring to the correct order in the proof in~\Cref{sec:lo-traversal-full}. This was pointed out by~\citet{DBLP:journals/pacmpl/0001W023}.
}

\subsection{Linearizability of the Logical Ordering Tree}
\label{sec:lo-linearizability}
In this section, we give an overview of a standard linearizability proof for the LO tree.
We show that the proof boils down to reasoning about traversal correctness, and explain why proving traversal
correctness is non-trivial.

The proof uses an abstraction function $\repfunc$ mapping each concrete memory state to the (abstract, mathematical) set it represents, and showing that every operation ``takes effect'' at some point during its execution (see~\Cref{sec:linearizability-background}.)
To define the abstraction function $\repfunc$, we conceptually break the LO tree's doubly linked list into
the \emph{successor} and \emph{predecessor} lists, induced by following $\fnext$ pointers from the sentinel
node \code{min} (denoted $\succroot$) and by following $\fprev$ pointers from the sentinel node \code{max}
(denoted $\predroot$), respectively.  We define $\repfunc$ based on reachability in the successors list.
$\repfunc$ maps a state $\state$ to the set of the keys of the nodes on the successor list that are not logically
deleted.  Technically, for every key $k$, we define a state-predicate over memory locations $\ell$, $\QSReachXK{\ell}{k}$.
This predicate holds in state $\state$ if, when following the $\fnext$ pointers from $\succroot$ in $\state$ and stopping when $k$
or a greater key is found, we encounter $\ell$. \yotamsmallnew{added:}We then say that $\ell$ is \emph{$k$-reachable}.
The abstraction function is then defined as follows:
\[
\begin{multlined}[t]
\repfunc(\state) = \{ k \in \mathbb{N} \mid \state \models \exists x.\,\QSReachXK{x}{k} \land x.key=k \land \QXR[\neg]{x}\}, \\
\mbox{ where $\state \models P$ means that $P$ is true in $\state$.}
\end{multlined}
\]
The challenge now is
to identify when an operation's execution takes effect on the abstract state %
by establishing properties of the data structure and the algorithm.
These are displayed as~\emph{assertions} annotating the code in~\Cref{Fi:LO-code}. %
The assertion notation
should be interpreted as follows.
An assertion $\{P\}$ means that in every execution, $P$ holds in any state in which the next line of code executes,
which can be thought of as ``now'' (with respect to the executing operation's perspective).
An assertion $\{\Past{P}\}$ says that $P$ was true at some point between the invocation of the
operation and ``now.'' Note that assertions use both $\QSReachXK{\ell}{k}$ as well as another reachability predicate,
$\QSReachX{\ell}$, which holds in a state whenever $\ell$ is reachable from $\succroot$ by following $\fnext$ pointers (irrespective of any key).\footnote{
	For brevity, we omit from the assertions facts about locks, which in this algorithm can always be inferred from the scope.
}

Assuming that these assertions hold, it is a rudimentary exercise to show that any operation takes effect on the abstract set (see~\Cref{sec:linearizability-background}) during some point of its execution; \begin{changebar}see~\refappendix{sec:lo-appendix} for an elaboration\end{changebar}. The important point to notice is that reachability assertions are inherent to showing the necessary $k \in \repfunc(\sigma)$ or $k \not\in \repfunc(\sigma)$ during the operation's execution. For example, that a node with key $k$ was previously reachable when \code{contains} returns true (\cref{Ln:lo-ContainsRetTrue}) is crucial for showing that at some point $\sigma$ during the operation's execution indeed $k \in \repfunc(\sigma)$.

\para{Proving the assertions}
The proof's main goal is thus to prove the assertions in~\Cref{Fi:LO-code}, which would conclude the linearizablity proof. Which assertions are hard to prove?
The assertions %
that do not concern traversals and $\{\Past{P}\}$ properties
are straightforward, and rely on %
simple inductive invariants about the data structure, properties such as the immutability of keys, and
reasoning about interleavings of the lock-protected critical sections. \begin{changebar}(See~\refappendix{sec:lo-appendix} for an elaboration.)\end{changebar}

The crux of the linearizability proof, therefore, is to prove the assertions that concern the
unsynchronized traversals (\cref{Ln:lo-RetLocate,Ln:lo-pred-search,Ln:lo-succ-search,Ln:lo-ContainsRetFalseBigger,Ln:lo-ContainsRetTrue,Ln:lo-ContainsBeforeReadDEL,Ln:lo-ContainsRetFalseRem}).
These assertions state that nodes were reachable \emph{at some point in time} during an operation's execution. We formalize the problem of proving such properties as that of establishing \emph{traversal correctness}.

\subsection{The Need for Traversal Correctness Assertions}
\label{lo:reachable-only-in-past}
The focus of our work is $\{\Past{P}\}$ properties, for predicates $P$ that have to do with reachability.
We use such properties to capture traversal correctness.
An example is the assertion $\PSReachXK{x}{k}$ in~\cref{Ln:lo-succ-search}, which states that when the traversal on successor nodes in~lines~\ref{Ln:lo-contains-succ-start}--\ref{Ln:lo-contains-succ-end} executes, every node the traversal encounters has been $k$-reachable at some point between the method's invocation and now.

Traversal correctness guarantees the reachability of encountered nodes at \emph{some} time in the past.  (We formally define traversals and traversal correctness in~\Cref{sec:traversals}.)  Such a property seems less intuitive than the more natural claim that these nodes are reachable \emph{when they are encountered}. That easier claim would indeed have been nice to have, only that it does not hold, and $\QSReachXK{x}{k}$ would be an incorrect assertion in~\cref{Ln:lo-succ-search}---a traversal can arrive at a node \emph{after it is no longer reachable}.  For example, a traversal can arrive at node $x'$, but before it continues to $x=x'.\fnext$, $x'$ and then $x$ are removed.  Thus, when the traversal reads $x'.\fnext$ and arrives at $x$, the node $x$ is no longer reachable.

Consequently, the crux of the linearizability argument follows from traversal correctness properties.  Unfortunately, traversal correctness is fundamentally harder than proving the other assertions in the code, as explained next.

\subsection{The Challenge of Proving Traversal Correctness}
Proving traversal correctness requires showing that a node was reachable %
at \emph{some point in time} during the traversal's execution.  While it is immediate that if $x$ is reachable
now and $x$ points to $y$ now, then $y$ is reachable now, analogous reasoning is not straightforward for
reachability ``in the past.''
Consider, for example, a list traversal that reads a pointer $x.\fnext$ which points to node $y$.  Knowing that
 $\Past{\QSReachX{x}}$ does not directly imply the reachability of $y$---neither now nor in the past---because
 $\QSReachX{x}$ might \emph{not} hold when $x.\fnext$ is read.
Imagine what could happen in between: For example, as explained above, $x$ could be removed, and subsequently $y$ could be removed, and still $x.\fnext = y$, even though $y$ is no longer reachable. Although there is a link $x.\fnext = y$ and it is easy to see (from the assertions) that $y$ was reachable when this link was written, this is not evidence enough for $\Past{\QSReachX{y}}$, because this write could have taken place before the beginning of the traversal (and before the invocation of the current method).
Another tricky scenario to consider is that although $y$ can be removed, its key could be inserted elsewhere in a new node $z$, and the traversal may not be aware of this and miss $z$.
It is therefore possible for traversals to reach a key that has already been removed, ``miss'' insertions, etc.
Furthermore, the path that the traversal follows is built of pointers from many points in time, and may never have existed in memory in its entirety.
Surprisingly, it turns out that deciding existence/absence of a key based on such a traversal is correct.
That is, for a list traversal that starts at $\succroot$, it can indeed be established that $\Past{\QSReachX{y}}$ holds
in this example, but this involves %
intricate reasoning about how interfering writes interleave with the traversal's reads. Simplifying this has been the goal of previous works~\cite{PODC10:Hindsight,FeldmanE0RS18}. %

The traversal in the LO tree is even more complex.
While it is the reachability in the doubly linked list that determines the
result of a traversal, its first part, {\tt tree-locate}, traverses the nodes
over the binary tree links.
The traversal over the list \emph{does not start from the head of the list}, but from where the tree traversal landed, so it is necessary to prove properties of the tree-traversal as well.
Reasoning about the tree traversal is even more challenging than the traversal over the list, for two reasons: %
\begin{enumerate}
		\item Having to reason about lists while traversing a
tree makes ``off-the-shelf'' approaches not suitable for proving the
reachability assertions during traversals (see~\Cref{sec:related}).
Intuitively, {\tt tree-locate} accesses nodes in the list in almost a random-access way, so it is unclear why it cannot go wrong.
		\item Moreover, examining the
tree traversal is very challenging due to complex interference patterns: in between this traversal's reads, in-place tree rotations can take place, nodes can be unlinked and linked back, etc. Previous approaches fall short, and cannot be applied to liberate the user from considering the interleavings of writes in between the traversal's reads (see~\Cref{sec:related}).
\end{enumerate}

\subsection{Our Framework: Proving Traversal Correctness}
\label{sec:overview-framework-successor-links-correctness}
Our proof framework provides a \emph{general} method for proving traversal correctness %
while reasoning only about the effect writes have on memory, \emph{without} resorting to complex concurrent reasoning
about how reads in the traversal interleave with interfering writes (in the like of the corner cases above).
\begin{changebar}
\yotamsmallnew{postponed overview to the end, here just mention that we also do everything later}
In the remainder of this section we illustrate the ideas behind our framework by (informally) proving one of the traversal correctness assertions, $\PSReachXK{x}{k}$ in~\cref{Ln:lo-succ-search} for the segment of the traversal over successor links
(\yotamsmallnew{added:}we prove all the remaining traversal correctness assertions in~\Cref{sec:lo-traversal-full}, after formally presenting the framework).

Consider the traversal in~\crefrange{Ln:lo-contains-succ-start}{Ln:lo-contains-succ-end}. For us, a traversal is simply a sequence of read operations of locations $\ell_0,\ell_1,\ldots$ (\Cref{sec:traversals})\sharonnew{shouldn't this be explained much earlier?}\yotamsmallnew{I don't know. Only here we rely on the formalization of this, sequences of locations etc., no?}, and here this sequence traverses successor pointers until it reaches a node with a key greater or equal than $k$.

Suppose that we have already established the assertion $\PSReachXK{x}{k}$ before this traversal, that is, at the entry to the loop in~\cref{Ln:lo-contains-succ-start}. Our goal is to prove that, in spite of possible interference, $\PSReachXK{x}{k}$ holds also for the new $x$'s that the traversal reaches by following successor links.
Our framework achieves such a proof by showing that two properties hold.
The first, \emph{single-step compatibility}, connects the reachability predicate $\QSReachXK{\cdot}{k}$ to the way the traversal navigates from one node to the next. %
The second, \emph{forepassed}, constrains the effect over time of interfering writes on the reachability predicate.

\para{(1) Single-step compatibility}
The traversal chooses the next location to read based on the last read location and the key; for example, the traversal reads $\ell_i = o.\fkey$ and decides to proceed by reading $\ell_{i+1}=o.\fleft$ according to the binary search. This is called a \emph{step} of the traversal. The next step in this case is to read $\ell_{i+2}=o'.\fkey$ where $o'$ is the object to which $o.\fleft$ points to (dereferencing the pointer), from which the traversal proceeds by taking steps in a similar fashion.

The requirement of \emph{single-step compatibility} (\Cref{def:local-path-extension}) focuses on a single step of the traversal at each time. Consider the traversal advancing from $\ell_i$ to $\ell_{i+1}$ by reading the single value at the location $\ell_i$ from the current memory state $\sigma_t$.
Then single-step compatibility requires that if $\sigma_t \models \QSReachXK{\ell_i}{k}$, then also $\sigma_t \models \QSReachXK{\ell_{i+1}}{k}$ holds.
Single-step compatibility obviously holds for the traversal over successor links and this reachability predicate (see~\Cref{ex:lo-local-path-extension}).
Note that %
both $\QSReachXK{\ell_i}{k}$ and $\QSReachXK{\ell_{i+1}}{k}$ here are evaluated in the same memory state $\sigma_t$ (without interference).
Importantly, this means that establishing this condition relies on purely \emph{sequential} reasoning: the scope of this condition is a single read operation, and interference is irrelevant.

\para{(2) Forepassed interference}
This condition tracks writes that \emph{reduce k-reachability}: the $k$-reachability of a location $\ell$ is \emph{reduced} by the write $w$ if before $w$ it holds that $\QSReachXK{\ell}{k}$ but not after $w$.
For example, removing a node (by modifying the pointer of its parent) reduces its reachability. Intuitively, such an interfering write is ``dangerous'' because a traversal can reach $\ell$ and be unaware that $\ell$ is no longer $k$-reachable.
The \emph{forepassed} condition (\Cref{def:black-condition}) requires that a location $\ell$ whose reachability is reduced by $w$ at time $t$ either
\begin{itemize}
	\item cannot be later modified (we call this \emph{strong forepassed}), or,
	\item otherwise, if $\ell$ is modified by a later write $w'$, %
	then $w'$ writes a value that points to a location $\ell'$ that is known to have been $k$-reachable at some intermediate moment, %
	between the time immediately before $w$ was performed and the time immediately after $w'$ was performed. %
\end{itemize}

In the traversal over successor links in the LO tree, %
whenever the reachability of a node is reduced it is first \emph{marked}, inhibiting later writes to this location (see~\Cref{ex:lo-black-condition-holds}), which guarantees the forepassed condition.

The insight behind our approach and the forepassed condition is that accessing a memory location $\ell$ whose reachability has been reduced is not itself problematic; $\ell$ still \emph{has been} $k$-reachable.
The challenge is to prove that locations reached through $\ell$ have also been $k$-reachable.
The main idea of the forepassed condition is that to achieve that, we must limit the ways $\ell$ can be modified after its reachability is reduced.
Consider the next location the traversal visits, $\ell'$, which is pointed to by $\ell$. 
In any point in time where $\ell$ was $k$-reachable and contained the same value, $\ell'$ was also $k$-reachable. 
Thus, if $\ell$ was not later modified prior to reading its value and visiting $\ell'$ (this is the case of strong forepassed), then $\ell'$ has also been $k$-reachable. 
The forepassed condition still allows some writes to $\ell$ after its reachability was reduced, as long as they retain the property that the next location in the traversal $\ell'$---which is dictated by the values these writes put in $\ell$---also has been $k$-reachable.
Either way, the forepassed condition guarantees that the traversal continues to locations that have been $k$-reachable, and the traversal is not ``led astray''.

In almost all our examples, the interference satisfies the \emph{strong forepassed} condition, which is simpler to reason about, but the more general condition is required e.g.\ for an implementation of backtracking (see~\Cref{sec:cf-short}).

\para{Deducing traversal correctness}
The core of our framework is the theorem that if the traversal is single-step compatible and writes satisfy forepassed interference, both w.r.t.\ the reachability predicate, then the traversal is correct---every location it reaches has been reachable at some point (\Cref{thm:main-thm}). Thus,
\yotamsmallnew{rephrased} in the traversal over successor links of the LO tree,
we deduce that every $x$ the traversal reaches satisfies $\PSReachXK{x}{k}$, finally proving our much sought traversal correctness assertion.
\adamnew{suggest to tone down (and not by saying that there is just a bit of rejoicing)}\yotamsmallnew{I hope you're happy, I hope you're happy now, I hope you're happy how you've Hurt your cause forever}

\begin{remark}
\label{remark:choose-reach}
Traversal correctness, single-step compatibility, and the forepassed condition all depend on the choice of reachability predicate. In the example we have considered here, the reachability predicate was $\QSReachXK{\cdot}{k}$, and it arose directly from the definition of the abstraction function, which is usually the case.
Occasionally, choosing the right reachability predicate demands more care.
In the LO tree, the traversal over the tree and the predecessor links (lines~\ref{Ln:lo-locate-start}--\ref{Ln:lo-locate-end},\ref{Ln:lo-contains-pred-start}--\ref{Ln:lo-contains-pred-end}) is correct w.r.t.\ $\QSReachX{\cdot}$ irrespective of the key (which makes sense because it performs ``random accesses'' to the successors list). This is discussed in~\Cref{sec:lo-traversal-full}.
A more intricate scenario is the Citrus tree~\cite{Arbel:2014}, discussed in~\Cref{sec:citrus-short}. There, traversal correctness requires a weakening of the reachability predicate that captures (using ghost state) the potential of paths to go off track in certain ways.
\end{remark}
\end{changebar}

\para{Scope and limitations}
\begin{changebar}
We are not aware of algorithms in our domain where our proof argument is inherently inapplicable, although such examples are possible.
The Citrus tree (\Cref{sec:citrus-short}) comes close, in the sense that our framework does not apply with a straightforward choice of the reachability predicate (in fact, the traversal is not correct in the sense of~\Cref{def:traversal-correctness}), but we are nonetheless able to apply our technique using a modified reachability predicate and ghost code.
\end{changebar}

\begin{changebar}
In this paper we assume sequential consistency (SC), following most concurrent data structures and many papers on their verification. In many relaxed memory models, data-race free programs have SC semantics, making our techniques applicable (in C/C++, for example, the shared node fields would be accessed with SC atomics). Proving the correctness of algorithms that exploit weaker consistency models is an interesting future direction.
\end{changebar}

\begin{changebar}
\para{Outline}
We formally define traversals and traversal correctness in~\Cref{sec:traversals}, and develop the framework's conditions and main theorem in~\Cref{sec:proving-traversals}.
In~\Cref{sec:lo-traversal-full}, we apply it to prove all the traversal correctness assertions of the LO proof, thereby completing the linearizability proof here.
In~\Cref{sec:circular}, we discuss how traversal correctness assertions can rely on the very same assertions that they help prove, in an inductive manner, making such proofs simple.
We apply the framework to additional challenging examples in~\Cref{sec:additional}.
Related work is discussed in~\Cref{sec:related}, and~\Cref{sec:conclusion} concludes.
\end{changebar}

\section{Traversals and Their Correctness Criterion}
\label{sec:traversals}
Our framework targets traversals. In this section we describe our model of traversals as a sequence of reads determined by a sequence of local steps (\Cref{sec:sub-traversal}),
\begin{changebar}
a view which is essential to phrase our single-step compatibility condition below (in~\Cref{sec:local-path-extension}).
\end{changebar}
We then formally define traversal correctness w.r.t.\ a reachability predicate (\Cref{sec:traversal-correctness-formal})
\begin{changebar}
which is the central proof goal of our framework.\footnote{A similar notion was implicit in~\cite{FeldmanE0RS18}.}
\end{changebar}
As a running example, we use \yotamsmallnew{added:}the same traversal whose correctness we described informally in~\Cref{sec:overview-framework-successor-links-correctness}: the traversal in \crefrange{Ln:lo-contains-succ-start}{Ln:lo-contains-succ-end} of \code{contains} of~\Cref{Fi:LO-code} %
and its correctness with respect to the reachability predicate $\QSReachXK{x}{k}$.
We begin with some preliminary definitions.

\subsection{States, Locations, Executions, and Writes} \label{sec:definitions}
A \emph{state}, %
denoted $\sigma$, is a mapping from memory locations to values.
We use an indexing of memory locations by pairs, $(o,\mathbf{f})$,
where $o$ is an object identifier, and $\mathbf{f}$ is a field name. The value in a location can be another location (a ``pointer''---e.g.\ to $\oloc{o}{\fkey}$). %
A \emph{write} is a pair $(\ell,v)$ of a location $\ell$ and the value $v$ being written to it.
We use discrete timestamps $\ldots, t-1, t,t+1,\ldots$ to model the order in which writes occur.
An \emph{execution} is a sequence of (atomic) %
writes (performed by different threads) at increasing timestamps %
performed by the algorithm---this corresponds to recording just the write operations in a run of the algorithm.
Given an execution, we denote by $\sigma_t$ the state of the algorithm at time $t$,
and by $\sigma_t(\ell)$ the value in location $\ell$ in $\sigma_t$.
States are modified by writes: if a write $(\ell,v)$ occurs at time $t$, then $\sigma_{t}(\ell') = \sigma_{t-1}(\ell')$ for every $\ell' \neq \ell$, and $\sigma_{t}(\ell) = v$
(that is, $\sigma_t$ is the state after the write is performed).
When we consider an execution in the \emph{timespan} $[\texbegin, \texend]$, then the first state in the execution is $\sigma_{\texbegin}$, and the execution consists of writes with timestamps in $(\texbegin, \texend]$. (The first timestamp may be an arbitrary point in the algorithm's run, not necessarily the beginning of any operation.)
A \emph{read} is performed to a memory location $\ell$ from a memory state $\sigma_t$, observing the value $\sigma_t(\ell)$. For our purposes, we shall not need to model the exact time when a read occurs, only the state from which it reads the value (when ordering the reads and writes together the read occurs after write $t$ and before $t+1$).

\begin{changebar}
\begin{remark}
\label{remark:code-vs-reads-writes}
In our formalization, reads and writes occur in specific memory states, to/from specific memory locations. There is a subtle gap between this and the \emph{code} of the algorithm (such as the code in~\Cref{Fi:LO-code}); it is necessary to translate the read and write program instructions to the actual read and write operations performed when the code executes, which is the level of abstraction our formalization uses. Bridging such a gap in a formal proof is usually the role of a program logic; in this paper, when we apply the framework to prove algorithms presented in code, this connection is straightforward.
\end{remark}
\artemnew{I do not understand how what is in this remark is a role of a
program logic. Don't executions come from running a program? (For instance,
histories for linearizability are not generated with a program logic.) I see
this as the role of the semantic model for a programming language.

This subsection is introducing such a model though (?).}
\end{changebar}

\subsection{Traversals}
\label{sec:sub-traversal}
To %
formally define a traversal, we use
a relation $\extend{p}(\ell,v,\ell')$, which encodes that after reading location $\ell$ and seeing value $v$, the location $\ell'$ is the next to be read. %
This relation captures the sequence of reads a traversal (e.g.\  \crefrange{Ln:lo-contains-succ-start}{Ln:lo-contains-succ-end} of \code{contains} in the~\Cref{Fi:LO-code}) performs.
The specialization by $p$ denotes the dependence of the traversal's logic on certain parameters, such as a key in our running example. %
For a given $p$, the relation $\extend{p}$ only depends on $\ell,v,\ell'$. %
The intuition is that the next location $\ell'$ to be read is chosen based on the last location read $\ell$ and the value read $v$ (but nothing else). The definition of the relation reflects the code that implements the traversal.

Given an execution, a \emph{traversal} is
a sequence $(\ell_1,t_1),\ldots,(\ell_n,t_n),\ell_{n+1}$, where $\ell_i$'s are locations, $t_i$'s are timestamps,
and every consecutive pair satisfies $\extend{p}$, namely, $\forall i=1,\ldots,n. \ \extend{p}(\ell_i, v_i, \ell_{i+1})$ where $v_i = \sigma_{t_i}(\ell_i)$.
The pair $(\ell_i,t_i)$ indicates that the traversal reads $\ell_i$ from the state $\sigma_{t_i}$ (observing the value $\sigma_{t_i}(\ell_i)$).
When
the traversal performs the read of location $\ell_i$, we say that
the traversal \emph{reaches} $\ell_{i+1}$. Note that at this point,
$\ell_{i+1}$ itself is not (yet) read---as an illustration, $\ell_{i+1}$ may be reached by ``following the pointer'' in $\ell_i$, which amounts to reading $\ell_i$.

\ignore{
When considering the \emph{timespan} of a traversal, we use a designated timestamp $t_0 \leq t_1$,
called the traversal's~\emph{base}, and take the timespan to be $[t_0,t_n]$ (note that it includes all the writes in this interval, not just $t_0,\ldots,t_n$). \sharon{next is probably unclear before reading the next sections, but we can keep it. Actually, does it make sense to postpone this definition to where it is used (only if it will be in a place where this text would make more sense)?}\yotamsmall{def of traversal? don't see where. of the base, postponing makes it even less natural, no?}Below, when we consider the relation between traversals and reachability properties (in~\Cref{sec:traversal-correctness-formal}), $t_0$ will serve as the base time in which the first location $\ell_1$ is reachable. Most often, $t_0 = t_1$ and $\ell_1$ is the root that is always reachable, but a more general definition is useful, e.g.\ in the running example (see~\Cref{sec:lo-traversal-full}).
\artem{Is $t_0$ a constant? If not, given a traversal, where does $t_0$ come from? P.S. I see how this is an anticipated question :)}
\sharon{perhaps a cleaner way that would address Artem's remark is not to define a base for the traversal. Instead, use it only when needed. For example, correctness would be defined w.r.t. reach and a base time. Timespan would not be defined here or ever, just use the interval when relevant. If accepting this suggestion, need to revise text that precedes the def of traversal correctness. Actually, this forward explanation regarding what the base would be can move there.}
}

\begin{example}\label{ex:traversal}
The \code{contains} operation of the LO tree visits a sequence of nodes that can be split into two sequences:
\begin{inparaenum}[(1)]
\item from the beginning of \code{contains} until
\cref{Ln:lo-contains-pred-end}, the operation visits nodes by following {\tt
left}, {\tt right} and {\tt pred} links;
\item afterwards, at
\crefrange{Ln:lo-contains-succ-start}{Ln:lo-contains-succ-end}, the operation
visits nodes by following {\tt succ} links.
\end{inparaenum}
We use the latter in our illustrations %
of the framework
throughout \Cref{sec:traversals} and~\Cref{sec:proving-traversals}, and give full details for the LO tree in~\Cref{sec:lo-traversal-full}.

Consider the traversals over successor links that a {\tt contains} operation
performs searching for a key $k$ at
\crefrange{Ln:lo-contains-succ-start}{Ln:lo-contains-succ-end} in
\Cref{Fi:LO-code}. To analyze them, we instantiate $\extend{p}$ with a relation $\extend{k}$,
which is parametrized by the key $k$ and is true only for the following cases (for every objects $o$ and $o'$, and value $m$):
\begin{align*}
	&\extend{k}(\oloc{o}{\fkey},m,\oloc{o}{\fnext}) &\quad \mbox{if $m<k$}
	\\
	&\extend{k}(\oloc{o}{\fnext},\oloc{o'}{\fkey},\oloc{o'}{\fkey}) & %
\end{align*}
Informally, this definition states that \begin{inparaenum}[(i)]
	\item from a $\fkey$ field of an object, the traversal is extended to the $\fnext$ field of the same object, in case the value in the $\fkey$ field is less than $k$, and
	\item from a $\fnext$ field the traversal is extended to the location holding the $\fkey$ field of the object to which $\fnext$ points (following the pointer, in short).
\end{inparaenum}
(In the assertions in~\Cref{Fi:LO-code}, when we write $\QSReachXK{x}{k}$, reachability to the object $x$ is a shorthand to reachability to $(x,\fkey)$.)
It is immediate that $\extend{k}$ correctly summarizes the code performing the traversal in~\crefrange{Ln:lo-contains-succ-start}{Ln:lo-contains-succ-end} of \code{contains},
namely, in the sequence of reads performed by this operation %
executing this code section, every consecutive pair of locations satisfies $\extend{k}$ with the corresponding value read (see also~\Cref{remark:code-vs-reads-writes}).
\end{example}

\subsection{Traversal Correctness}
\label{sec:traversal-correctness-formal}
Roughly, traversal correctness requires that
every location reached by the traversal has been ``reachable''
within a certain preceding timespan.
The notion of reachability and the timespan are formalized next.

\para{Reachability}
Reachability
is defined by a reachability predicate,
chosen to be useful in the overall correctness argument, such as
$\QSReachXK{x}{k}$ in~\Cref{sec:lo-linearizability} \begin{changebar}(see also~\Cref{remark:choose-reach})\end{changebar}. Formally, a
\emph{reachability predicate} is a unary state-predicate over locations that can
be parameterized, e.g.\ by a (fixed) root and by a key of interest.
We denote this predicate $\QReachAbs{\cdot}{k}$. For
a location $\ell$, we use $\models_t \QReachAbs{\ell}{k}$ to denote that the
predicate $\QReachAbs{\ell}{k}$ holds in the state $\sigma_t$ (the state at time
$t$). We also say that $\ell$ satisfies the reachability predicate at time $t$.

\begin{example}\label{ex:reach-pred}
\label{ex:k-reach-def}
In the running example, the reachability predicate of interest for the traversal in \crefrange{Ln:lo-contains-succ-start}{Ln:lo-contains-succ-end} %
searching for some key $k$
is $\QSReachXK{x}{k}$, called $k$-reachability. %
The $k$-reachability predicate is defined
by the existence of a sequence of locations that follows $\extend{k}$:
$\QSReachXK{x}{k}$ holds in state $\sigma$ if there is a sequence of locations $\ell_0,\ldots,\ell_n$ s.t.\ $\ell_0 = \oloc{\succroot}{\fkey}$,
$\ell_n = x$, and $\forall i < n. \ \extend{k}(\ell_i,\sigma(\ell_i),\ell_{i+1})$.
\end{example}

Our framework can accommodate versatile reachability predicates, including reachability and $k$-reachability in a list (\Cref{sec:lo-traversal-full}), $k$-reachability by binary search in a tree (\Cref{sec:cf-short}), and even sophisticated reachability using ghost state (\Cref{sec:citrus-short}). %

\para{Base time}
In addition to the reachability predicate, traversal correctness is also relative
to a \emph{base} time $\texbegin \leq t_1$ within the timespan of the current operation (that is, the state $\sigma_{\texbegin}$ was present concurrently with the operation),\sharonnew{note time issue. timespan was defined, but not timespan of operation. Does it start at the last write before the operation?}
such that the first location $\ell_1$ is reachable at base time $\texbegin$.
The base time, similarly to the reachability predicate, is chosen as part of the proof. %
Most often, $\texbegin = t_1$ and $\ell_1$ is the root that is always reachable, but a
base $\texbegin \leq t_1$ is needed e.g.\ in the running example (see~\Cref{sec:lo-traversal-full}). %
Below, for convenience, we consider executions whose timespan begins at the base time $\texbegin$.

\para{Traversal correctness}
Given the reachability predicate and the base, \emph{traversal correctness} requires that every location $\ell_{i+1}$ reached by the traversal at time $t_i$
as defined in~\Cref{sec:sub-traversal} has satisfied the reachability predicate at some point (state) %
in the execution between $\texbegin$ and $t_i$ (inclusive).
Let $\Pastlimit{\QReachAbs{\ell}{k}}{t}{t'}$ be a shorthand for $\exists t''. \ t \leq t'' \leq t' \land  \models_{t''} \QReachAbs{\ell}{k}$. Then we can define traversal correctness as follows.

\begin{definition}\label{def:traversal-correctness}
A traversal $(\ell_1,t_1),\ldots,(\ell_n,t_n),\ell_{n+1}$ is \emph{correct} w.r.t.\ $\QReachAbs{\cdot}{}$ and base $\texbegin$ if $$\Pastlimit{\QReachAbs{\ell_{i+1}}{}}{\texbegin}{t_i}$$ holds
for every $i=0,\ldots,n$, with $t_0 = \texbegin$.
\end{definition}
Note that traversal correctness requires that $\ell_1$ is reachable at the base time $\texbegin$.

In our applications of the framework to prove different algorithms we always use timestamps that are concurrent with the current operation, and so use $\PastSymb$ without time bounds to indicate that $\texbegin$ is the beginning of the operation and $\texend$ is the current time. Most often we are interested in the reachability of the last location reached in the traversal, and deduce the assertion $\Past{\QReachAbs{\ell_{i+1}}{k}}$ in the code performing the traversal.
The goal of our framework is to prove traversal correctness using simple concurrent reasoning, as we describe next.

\section{The Framework: Proving the Correctness of Traversals}
\label{sec:proving-traversals}
In this section we describe how our framework proves traversal correctness.
We first explain how the reachability predicate needs to be tied to the traversal itself via single-step compatibility (\Cref{sec:local-path-extension}), and explain the forepassed condition about writes (\Cref{sec:condition-on-inteference}) that guarantees traversal correctness in spite of interference (\Cref{sec:main-theorem}). Lastly, we extend the framework to deduce reachability together with a property of a single field (\Cref{sec:reachability-with-field}).%

Throughout this section, we fix an execution of the algorithm in timespan $[\texbegin,\texend]$.

\subsection{Single-Step Compatibility}
\label{sec:local-path-extension}

Our framework is applicable to prove traversal correctness w.r.t.\ reachability predicates that are compatible with the $\extend{p}$ relation underlying the traversal in the following way:

\begin{definition}\label{def:local-path-extension}%
We say that a reachability predicate $\QReachAbs{\cdot}{k}$ is {\em single-step compatible}
with %
an $\extend{p}$ relation if for every state $\sigma$ %
in the execution and every pair of locations $\ell,\ell'$ it holds that
$\sigma \models \QReachAbs{\ell}{k} \land \extend{p}(\ell,\sigma(\ell),\ell') \implies \sigma \models \QReachAbs{\ell'}{k}$.
\end{definition}

\begin{example}
\label{ex:lo-local-path-extension}
In the LO tree running example, single-step compatibility of $\QSReachXK{\cdot}{k}$ (\Cref{ex:reach-pred}) with $\extend{k}$ (\Cref{ex:traversal}) holds by construction---and this is usually the case---since this reachability predicate is formally defined
by the existence of a sequence of locations that follows $\extend{k}$ (see \Cref{ex:k-reach-def}).
At times, reachability is defined not through sequences of locations following $\extend{p}$, in which case the compatibility %
relies on a different argument.
This scenario arises when we analyze the traversal in~\crefrange{Ln:lo-locate-start}{Ln:lo-locate-end} and \crefrange{Ln:lo-contains-pred-start}{Ln:lo-contains-pred-end} (see~\Cref{sec:lo-traversal-full}).
\end{example}

\subsection{The Condition on Interference: Forepassed}
\label{sec:condition-on-inteference}
We now describe our main condition about how the writes in the execution, potentially interfering with traversals, affect the reachability predicate. In essence, the idea is that if a write $w$ reduces the reachability of a memory location $\ell$, then afterwards writes to $\ell$ are not allowed, unless they modify $\ell$ in a very specific way:
\begin{changebar}
by pointing only to locations that have already been reachable %
at some point in between these writes.
\end{changebar}

\begin{definition}[Forepassed]
\label{def:black-condition}
A write $w$ at time $t > \texbegin$ in the execution %
satisfies the \emph{forepassed} condition
if for every location $\ell$, either
\begin{enumerate}
	\item \label{cond:still} $\models_{t-1} \QReachAbs{\ell}{k} \implies \models_{t} \QReachAbs{\ell}{k}$ (that is, $\ell$'s reachability is not reduced by $w$);
	or
	\item \label{cond:pastreach} %
	for every %
	write $w'$ to $\ell$ in time $t' \in [t,\texend]$, if $w'$ writes a value $v$ to $\ell$ and $\extend{p}(\ell, v, \ell')$ for some $\ell'$,
	then $\Pastlimit{\QReachAbs{\ell'}{k}}{t-1}{t'}$ holds
 (that is, every subsequent write to $\ell$, including the current write $w$ (if it writes to $\ell$), points to a location that has been reachable %
 at some point from just before $w$ to just after $w'$.)
\end{enumerate}
\end{definition}

In the algorithms we consider, most interfering writes satisfy a stronger, and simpler, condition: that if $w$ reduces the reachability of $\ell$, then $\ell$ is not modified by any write,
including $w$: %
\begin{definition}[Strong Forepassed]
\label{def:preservation-modification}
A %
write $w$ at time $t$ in the execution satisfies \emph{strong forepassed} %
if for every location $\ell$, either %
\begin{enumerate}
\item $\models_{t-1} \QReachAbs{\ell}{k} \implies \models_{t} \QReachAbs{\ell}{k}$ (that is, $\ell$'s reachability is not reduced by $w$, as in \Cref{def:black-condition}); or
\setcounter{enumi}{2}
	\item \label{cond:immu} %
	no %
	write in time $[t,\texend]$  modifies $\ell$ (that is, $\ell$ is ``immutable'' from this time on; this includes the current write $w$, so $w$ cannot write to $\ell$ itself).
\end{enumerate}
\end{definition}
Note that condition (\ref{cond:immu}) is a special case of (\ref{cond:pastreach}) in the presence of single-step compatibility, but it is often conceptually simpler, so we often allude to it in our applications of the framework to different data structures.

When all the writes satisfy the strong forepassed condition, this corresponds to the preservation of reachability to locations of modification from~\citet{FeldmanE0RS18}.

\newcommand{\still}{\ref{cond:still}}
\newcommand{\immu}{\ref{cond:immu}}
\newcommand{\pastreach}{\ref{cond:pastreach}}

\begin{example}
\label{ex:lo-black-condition-holds}
In the LO tree running example, %
the (strong) forepassed condition holds w.r.t.\ $\QSReachXK{\cdot}{k}$.
The idea is that only deletions reduce the $k$-reachability of mutable locations, these locations become immutable afterwards, thanks to marking them, thereby satisfying the strong forepassed condition.
In more detail, we need only consider modifications to $\fnext$ fields, since $\fkey$ is immutable, and other fields are not involved in the definition of $\QSReachXK{x}{k}$, and thus cannot reduce the reachability of any location (hence satisfying condition~\still). These writes are:
\begin{inparaenum}
	\item \cref{Ln:lo-insert-write-new}, which modifies an unreachable (newly allocated) node, hence does not reduce the reachability of any location;
	\item \cref{Ln:lo-insert-write-previous}, which reduces the $k$-reachability of $\code{z}.\fkey$ for $k$ that is the inserted key.
However, the key field would not be modified later (it is immutable); %
	\item \cref{Ln:lo-delete-write-previous}, which reduces the $k$-reachability of \begin{changebar}$\code{s}$\end{changebar}. %
	However, \begin{changebar}\code{s}\end{changebar} is marked before the locks are released, and thus future
operations would refrain from modifying it (see the assertions in~\cref{Ln:lo-insert-write,Ln:lo-delete-write}). %
\end{inparaenum}
Note that this argument considers how writes affect reachability and the possibility of later writes, and need not resort to complex reasoning about how writes interleave with the traversal's reads (as typical linearizability proofs require). %
For an illustration of case \pastreach{} of the forepassed condition, see~\Cref{sec:cf-short}.
\yotamsmallnew{reading this, I don't really feel that the issue of writes vs the code arises here? maybe the remark above is enough?}
\end{example}
The forepassed condition is a property of the writes in the execution, so establishing it requires concurrent reasoning on interleavings of writes (but not on how reads interleave with writes). As illustrated by the example, the necessary reasoning is often very simple, as it does not require the correctness of traversals for the sake of proving the forepassed condition.
In other cases, establishing the forepassed condition can benefit from the properties of writes which are themselves proved based on the correctness of preceding traversals. This is in fact possible, using a proof by induction, as we explain in~\Cref{sec:circular}.
Overall this leads to simple proofs of the forepassed condition and, as a result, of traversal correctness.

\subsection{Main Theorem}
\label{sec:main-theorem}
We are now ready to state our main theorem that establishes traversal correctness from the ingredients above. 
We begin with a lemma that captures the important effect of the forepassed condition: that from whenever a location becomes reachable and onwards, the value it holds directs traversals only to locations that themselves have been reachable (although not necessarily at the same time).
\begin{lemma}
\label{lemma:red-condition}
Consider an execution in timespan $[\texbegin,\texend]$.
If $\QReachAbs{\cdot}{k}$ is single-step compatible with $\extend{k}{}$ (\Cref{sec:local-path-extension}) and all writes satisfy the forepassed condition (\Cref{def:black-condition}), %
then if a location is reachable, it will always afterwards point to a location that was reachable:
\begin{equation*}
	\forall t \in [\texbegin,\texend]. \ \forall t' \in [t,\texend]. \ \forall \ell,\ell'. \
		 \models_{t} \QReachAbs{\ell}{k} \land \extend{p}(\ell,\sigma_{t'}(\ell),\ell')
			\implies
		\Pastlimit{\QReachAbs{\ell'}{k}}{\texbegin}{t'}
\end{equation*}
\end{lemma}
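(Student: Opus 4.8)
The plan is to reduce everything to a single well-chosen timestamp: for the fixed location $\ell$, the \emph{last} moment it was still reachable before it is read at time $t'$. So I would fix $t,t',\ell,\ell'$ as in the statement, with $\models_{t}\QReachAbs{\ell}{k}$ and $\extend{p}(\ell,\sigma_{t'}(\ell),\ell')$, and set $s = \max\{u \in [t,t'] : \models_{u}\QReachAbs{\ell}{k}\}$. This set contains $t$ and is finite, so $s$ is well defined with $\texbegin \le t \le s \le t'$. The argument then splits on whether $\ell$ is written after $s$. First I would dispatch the case where $\ell$ is not written anywhere in $(s,t']$: its value is then unchanged, $\sigma_{t'}(\ell)=\sigma_{s}(\ell)$, so $\extend{p}(\ell,\sigma_{s}(\ell),\ell')$ holds, and since $\models_{s}\QReachAbs{\ell}{k}$, single-step compatibility (\Cref{def:local-path-extension}) applied in the state $\sigma_{s}$ gives $\models_{s}\QReachAbs{\ell'}{k}$. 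As $\texbegin \le s \le t'$, this witnesses $\Pastlimit{\QReachAbs{\ell'}{k}}{\texbegin}{t'}$. Note this case already subsumes the situation $s=t'$, where $\ell$ is still reachable when it is read.

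The substantive case is when $\ell$ \emph{is} written somewhere in $(s,t']$, which forces $s<t'$. By maximality of $s$, location $\ell$ is unreachable throughout $(s,t']$; in particular $\models_{s}\QReachAbs{\ell}{k}$ but $\not\models_{s+1}\QReachAbs{\ell}{k}$, so the write $w$ performed at time $s+1$ reduces $\ell$'s reachability. Hence clause~(\ref{cond:still}) of the forepassed condition (\Cref{def:black-condition}) fails for $w$ at location $\ell$, and clause~(\ref{cond:pastreach}) must hold instead, with its lower bound instantiated to $(s{+}1)-1 = s$. Now I would take $r$ to be the last time in $(s,t']$ at which $\ell$ is written; then $\sigma_{t'}(\ell)$ equals the value $v$ written at $r$, and $\extend{p}(\ell,v,\ell')$ holds by hypothesis. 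Since $r \in [s+1,\texend]$, clause~(\ref{cond:pastreach}) applied to this write yields $\Pastlimit{\QReachAbs{\ell'}{k}}{s}{r}$, and because $\texbegin \le s$ and $r \le t'$ this again gives $\Pastlimit{\QReachAbs{\ell'}{k}}{\texbegin}{t'}$, completing the proof.

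The step I expect to be the main obstacle is the second case, and specifically the observation that the reachability-reducing write $w$ at time $s+1$ need \emph{not} itself write to $\ell$: it may reduce $\ell$'s reachability by editing a pointer elsewhere on the path from the root, while $\ell$'s own value is untouched at $s+1$. The crux is recognizing that the forepassed guarantee attached to $w$ nonetheless governs \emph{every} later write to $\ell$, and in particular the decisive last write at $r$ that fixes the value read at $t'$. Once that is in place, lining up the time windows ($[s,r]\subseteq[\texbegin,t']$) is routine. I would also remark that the same statement admits an alternative proof by induction on $t'$—peeling off the write at $t'$ and appealing either to single-step compatibility (if $\ell$ is unchanged at $t'$) or to the induction hypothesis together with forepassed (if it is written)—but the direct last-reachable-time argument avoids that bookkeeping and makes the role of each hypothesis transparent.
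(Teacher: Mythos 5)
Your proof is correct, and at its core it is the same argument as the paper's: a direct (non-inductive) case analysis that uses single-step compatibility when the value of $\ell$ at some moment of reachability agrees with the value read at $t'$, and otherwise applies clause~(\pastreach) of the forepassed condition (\Cref{def:black-condition}) --- attached to a write that reduces $\ell$'s reachability --- to the \emph{last} write to $\ell$ before $t'$, whose value is the one read. The difference is organizational, and in your favor: the paper anchors on $t$ and on the time $t_w$ of the write installing the final value $v$, and then distinguishes whether the value changed, whether any write ever reduces $\ell$'s reachability, and whether such a reduction falls before or after $t_w$; your choice of $s$ as the last reachable timestamp in $[t,t']$ collapses this into two cases, since maximality of $s$ automatically produces the reachability-reducing write at $s+1$ (a state change forces a write there) and hands you the forepassed guarantee with lower bound exactly $s$. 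This buys a tighter case analysis and makes fully explicit the point you flag --- that the reducing write need not itself touch $\ell$, yet its forepassed clause governs every later write to $\ell$, including the decisive one at $r$ --- which in the paper's proof is somewhat obscured by the three-way sub-case split.
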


\begin{proof}%
Let $t,t',\ell,\ell'$ be as in the premise of the lemma, and denote
$v = \sigma_{t'}(\ell)$.
\begin{changebar}
Our goal is to find $\tilde{t} \in [\texbegin,t']$ such that $\models_{\tilde{t}} \QReachAbs{\ell'}{k}$.
\end{changebar}

We consider two cases
depending on whether $\sigma_{t}(\ell) = \sigma_{t'}(\ell) = v$ holds. Let us first
assume that it does. Note that $\models_{t} \QReachAbs{\ell}{k}$ and
$\extend{p}(\ell,v,\ell')$ both hold. Thus, by single-step compatibility,
we get $\models_{t} \QReachAbs{\ell'}{k}$. Thus, letting $\tilde{t} = t$
concludes the lemma for this case. %

We now consider the case when $\sigma_{t}(\ell) \neq \sigma_{t'}(\ell) = v$.
There must exist a write $w$ in time $t_w \in (t,t']$ that modifies $\ell$ to
$v$ (that is, $\sigma_{t_w}(\ell) = v$ holds). Recall that all writes satisfy
the forepassed condition. %
Let us first assume that they all satisfy
\Cref{def:black-condition}.(\still{}) on $\ell$: i.e., no write in
$[\texbegin,\texend]$ reduces the reachability of $\ell$. Therefore, neither
do writes in $(t,t_w]$. In that case, knowing that $\models_{t}
\QReachAbs{\ell}{k}$ holds, we get that so does $\models_{t_w}
\QReachAbs{\ell}{k}$. %
From the premise we have that $\extend{p}(\ell,v,\ell')$ %
and since $\sigma_{t_w}(\ell) = v$
by
single-step compatibility
we get $\models_{t_w}
\QReachAbs{\ell'}{k}$. Overall, in this case, taking $\tilde{t} = t_w$ yields the desired.

Let us now consider the case when there is at least one write $w^\dagger$ at
time $t^\dagger \in (t,\texend]$ reducing the reachability of $\ell$. When
$t^\dagger \in (t_w, \texend]$, we establish the lemma analogously to the
previous case. Let $t^\dagger \in (t,t_w]$ hold. %
\Cref{def:black-condition}.(\pastreach) holds of
$w^\dagger$. Hence, for $w$, which occurs at $t_w > t^\dagger$ and writes a value $v$ satisfying $\extend{p}(\ell,v,\ell')$, we get that there exists 
\yotamsmallnew{note the change:}$\tilde{t} \in [t^\dagger-1,t'] \subseteq [\texbegin, t']$ such that $\models_{\tilde{t}} \QReachAbs{\ell'}{k}$, which
concludes the proof.
\end{proof}

\begin{theorem}
\label{thm:main-thm}
Consider an execution in timespan $[\texbegin,\texend]$ and a traversal
$\tau = (\ell_1,t_1),\ldots,(\ell_n,t_n),\ell_{n+1}$ defined through $\extend{p}$, such that %
$[t_1,t_n] \subseteq [\texbegin,\texend]$ and $\models_{\texbegin}
\QReachAbs{\ell_1}{k}$ hold. If $\QReachAbs{\cdot}{k}$ is single-step compatible
with $\extend{p}$, and all writes in the execution satisfy the forepassed
condition (\Cref{def:black-condition}), %
then $\tau$ is a correct traversal w.r.t.\ $\QReachAbs{\cdot}{k}$ and $\texbegin$. %
\end{theorem}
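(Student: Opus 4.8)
The plan is to prove \Cref{def:traversal-correctness} by induction on the index $i$, ranging from $0$ to $n$, with \Cref{lemma:red-condition} doing all the real work in the inductive step. Concretely, the goal is to show $\Pastlimit{\QReachAbs{\ell_{i+1}}{}}{\texbegin}{t_i}$ for every $i$. The base case $i=0$ (where $t_0 = \texbegin$) asks exactly that $\ell_1$ be reachable at $\texbegin$, which is supplied directly by the hypothesis $\models_{\texbegin} \QReachAbs{\ell_1}{k}$. So the base case is immediate.

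For the inductive step, I assume the claim for $i-1$, namely $\Pastlimit{\QReachAbs{\ell_i}{}}{\texbegin}{t_{i-1}}$, and I want it for $i$. Unfolding the $\PastSymb$ shorthand, the inductive hypothesis yields a witness time $\hat{t} \in [\texbegin, t_{i-1}]$ with $\models_{\hat{t}} \QReachAbs{\ell_i}{k}$. The traversal reads $\ell_i$ at time $t_i$, observing $v_i = \sigma_{t_i}(\ell_i)$, and by the definition of a traversal the consecutive pair satisfies $\extend{p}(\ell_i, v_i, \ell_{i+1})$. I then invoke \Cref{lemma:red-condition} with the instantiation $t := \hat{t}$, $t' := t_i$, $\ell := \ell_i$, and $\ell' := \ell_{i+1}$: the premise $\models_{\hat t}\QReachAbs{\ell_i}{k} \land \extend{p}(\ell_i, \sigma_{t_i}(\ell_i), \ell_{i+1})$ holds, so the lemma delivers $\Pastlimit{\QReachAbs{\ell_{i+1}}{k}}{\texbegin}{t_i}$, which is precisely the claim at index $i$. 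This closes the induction.

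The only bookkeeping to check is that the lemma's side conditions on times are met, namely $t' = t_i \in [\hat{t}, \texend]$. The lower bound follows from $\hat{t} \le t_{i-1} \le t_i$, where I use that a single traversal issues its reads in non-decreasing time order, so $t_{i-1} \le t_i$; the upper bound follows from $t_i \le t_n \le \texend$, using the hypothesis $[t_1,t_n] \subseteq [\texbegin,\texend]$. I do not expect any genuine obstacle here: the conceptually hard content—reconciling reachability ``in the past'' with interference via the forepassed condition and single-step compatibility—has already been isolated into \Cref{lemma:red-condition}. The theorem is essentially a corollary obtained by chaining that lemma along the reads of the traversal, with the induction merely threading ``$\ell_i$ was reachable somewhere in $[\texbegin,t_{i-1}]$'' forward into ``$\ell_{i+1}$ was reachable somewhere in $[\texbegin,t_i]$.'' The mildest subtlety worth stating explicitly is the monotonicity of the read timestamps, which I would flag as a standing assumption on the traversal's sequence rather than something requiring proof.
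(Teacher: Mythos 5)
Your proposal is correct and follows essentially the same route as the paper's own proof: induction over the traversal's reads, with the base case discharged by the hypothesis $\models_{\texbegin}\QReachAbs{\ell_1}{k}$ and the inductive step obtained by feeding the witness time from the induction hypothesis, together with the $\extend{p}$ edge given by the definition of a traversal, into \Cref{lemma:red-condition}. The paper likewise silently relies on the monotonicity of read timestamps ($t_i \leq t_{i+1}$) that you flag explicitly, so there is no substantive difference between the two arguments.
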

\begin{proof}%
We do a proof by induction on the length of the traversal.
By~\Cref{def:traversal-correctness}, we need to show
that $\models_{\texbegin} \QReachAbs{\ell_1}{}$ holds, and that
for every $i$, $1 \leq i \leq n$, it holds that
$\Pastlimit{\QReachAbs{\ell_{i+1}}{}}{\texbegin}{t_i}$. The former is the
base case of induction and holds trivially as a premise of the theorem. For the
latter, we let $t_0 = \texbegin$ for convenience of notation, and prove the
induction step: assuming the induction hypothesis
$\Pastlimit{\QReachAbs{\ell_{i+1}}{}}{\texbegin}{t_i}$, we show
$\Pastlimit{\QReachAbs{\ell_{i+2}}{}}{\texbegin}{t_{i+1}}$.

From the induction hypothesis we have that there there exists $\tilde{t}_{i} \in [\texbegin, t_i]$
such that $\models_{\tilde{t}_i} \QReachAbs{\ell_{i+1}}{}$. %
Hence,
since $t_{i+1} \geq t_i \geq \tilde{t}_i$,
we apply \Cref{lemma:red-condition} to obtain %
\begin{equation*}
 \models_{\tilde{t}_i} \QReachAbs{\ell_{i+1}}{k} \land \extend{p}(\ell_{i+1},\sigma_{t_{i+1}}(\ell),\ell_{i+2})
	\implies
\Pastlimit{\QReachAbs{\ell_{i+2}}{k}}{\texbegin}{t_{i+1}}.
\end{equation*}
According to the definition of traversals, %
we have $\extend{p}(\ell_{i+1},\sigma_{t_{i+1}}(\ell),\ell_{i+2})$, and so
the premise of the equation above holds. We conclude
$\Pastlimit{\QReachAbs{\ell_{i+2}}{k}}{\texbegin}{t_{i+1}}$, and with it the
induction step.
\end{proof}

\begin{example}
\label{ex:lo-succ-correctness}
We now apply our main theorem to deduce traversal correctness for traversals over successor links.
We showed in~\Cref{ex:lo-local-path-extension} that $\QSReachXK{\cdot}{k}$ is single-step compatible with $\extend{k}$, and in~\Cref{ex:lo-black-condition-holds} that interfering writes satisfy the forepassed condition \Cref{def:black-condition}. Therefore, any traversal with a base $\texbegin$ s.t.~$\models_{\texbegin} \QReachAbs{\ell_1}{k}$ starting from $\ell_1$ is a correct traversal.
In~\Cref{sec:lo-traversal-full} we use this fact to prove properties of the entire traversal in~\Cref{Fi:LO-code} (which starts from tree- and predecessor-links before it traverses successor links).

\ignore{
	From the local path extension (\Cref{sec:local-path-extension}) and forepassed interference (\Cref{def:black-condition}), our main theorem shows the correctness of a traversal along~\crefrange{Ln:lo-contains-succ-start}{Ln:lo-contains-succ-end} that starts from $\succroot$---if it reaches $x$ at time $t'$, then $\Pastlimit{\QSReachXK{x}{k}}{\texbegin}{t'}$,
	where $\texbegin$ is some time $\texbegin \leq t_1$.
	The traversal in the running example does \emph{not} start from the root, but
	the same theorem also shows the correctness of such a traversal if it starts not from $\succroot$ itself, but from a node $y$ s.t.\ $\models_{\texbegin} \QSReachXK{y}{k}$ for some base time $\texbegin$.
}
\ignore{
	Therefore, to prove the correctness of the entire traversal in the tree all that remains is to show that the traversal in~\crefrange{Ln:lo-contains-succ-start}{Ln:lo-contains-succ-end} starts from such a $y$ for some $t_0$ that is chosen accordingly, to which end we apply this theorem again, this time to the traversal along tree- and predecessor-links (\crefrange{Ln:lo-locate-start}{Ln:lo-locate-end} and \crefrange{Ln:lo-contains-pred-start}{Ln:lo-contains-pred-end}). This is explained in~\Cref{sec:lo-traversal-full}.
}
\end{example}

\subsection{Reachability with Another Field}
\label{sec:reachability-with-field}
The assertions in~\cref{Ln:lo-ContainsRetTrue,Ln:lo-ContainsRetFalseRem} are about some point in time in which a location was reachable \emph{and at the same time} a certain field ($\frem$) had a certain value (true or false). Our proof technique extends to properties involving reachability of an object \emph{and} the value of a single field in the following way:
Consider a location $\ell$ and a location $\fieldloc$ (intuitively, $\fieldloc$ is a field of an object that resides in $\ell$). We require the following condition, akin to the strong forepassed condition (\Cref{def:preservation-modification}), %
but that focuses only on $\ell,\fieldloc$:

\begin{definition}
\label{def:black-condition-field-extension}
A %
write $w$ at time $t$ in the execution %
satisfies the \emph{forepassed} condition
w.r.t.\ to locations $\ell,\fieldloc$ if
either
\begin{enumerate}
	\item $\models_{t-1} \QReachAbs{\ell}{k} \implies \models_{t} \QReachAbs{\ell}{k}$ (that is, $\ell$'s reachability is not reduced by $w$); or
	\item %
	no write in time $t' \in [t,\texend]$  modifies $\fieldloc$ (that is, $\fieldloc$ is ``immutable'' from this time on; this includes the current write $w$).
\end{enumerate}
\end{definition}

\begin{theorem}
\label{thm:field-extension}
Consider an execution in timespan $[\texbegin,\texend]$. If $t$ and $t'$ are
such that $\texbegin \leq t \leq t' \leq \texend$, $\models_{t}
\QReachAbs{\ell}{k}$ and $\sigma_{t'}(\fieldloc) = v$ hold, and all writes
satisfy \Cref{def:black-condition-field-extension} in the interval
$[\texbegin,\texend]$, then it is the case that $\Pastlimit{\QReachAbs{\ell}{k}
\land \fieldloc = v}{t}{t'}$ holds. %
\end{theorem}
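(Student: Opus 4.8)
The plan is to reduce the statement, via the definition of the bounded past-operator $\Pastlimit{P}{t}{t'}$ (which unfolds to $\exists t''\in[t,t']$ with $\models_{t''} P$), to exhibiting a single witness time $\tilde{t}\in[t,t']$ at which both $\models_{\tilde{t}}\QReachAbs{\ell}{k}$ and $\sigma_{\tilde{t}}(\fieldloc)=v$ hold \emph{simultaneously}. The whole argument then hinges on one observation about \Cref{def:black-condition-field-extension}: the only way a write may reduce $\ell$'s reachability (violating its first disjunct) is to freeze $\fieldloc$ from that point onwards (satisfying its second disjunct). Consequently, the last instant before reachability is first lost is necessarily also an instant at which $\fieldloc$ already carries its final value $v$.

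Concretely, I would split on whether any write in the interval $(t,t']$ reduces the reachability of $\ell$. In the first case, where no such write exists, a straightforward induction on timestamps starting from the hypothesis $\models_{t}\QReachAbs{\ell}{k}$ shows that reachability is preserved across every write of $(t,t']$ (each such write satisfies the first disjunct on $\ell$), so $\models_{t'}\QReachAbs{\ell}{k}$; together with the given $\sigma_{t'}(\fieldloc)=v$ this makes $\tilde{t}=t'$ a valid witness.

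In the second case, I would pick the \emph{earliest} write $w^\dagger$ in $(t,t']$ that reduces $\ell$'s reachability, occurring at time $t^\dagger$. The same inductive argument, now applied to $(t,t^\dagger)$ on which no reachability is lost, yields $\models_{t^\dagger-1}\QReachAbs{\ell}{k}$. Since $w^\dagger$ violates the first disjunct of \Cref{def:black-condition-field-extension}, it must satisfy the second: no write in $[t^\dagger,\texend]$ modifies $\fieldloc$; in particular $\fieldloc$ is constant across $[t^\dagger-1,t']$, so $\sigma_{t^\dagger-1}(\fieldloc)=\sigma_{t'}(\fieldloc)=v$. As $t^\dagger\in(t,t']$ gives $t^\dagger-1\in[t,t']$, the time $\tilde{t}=t^\dagger-1$ witnesses both conjuncts.

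I do not expect a genuine obstacle here; this result is the simpler ``single-field'' analogue of \Cref{lemma:red-condition} and needs no appeal to single-step compatibility, relying only on the strong form of the condition. The sole point requiring care is the bookkeeping around the earliest reachability-reducing write: the induction must stop exactly one step before $w^\dagger$ (so that $\models_{t^\dagger-1}\QReachAbs{\ell}{k}$ is legitimately concluded), and one must check that the immutability granted by the second disjunct covers the entire sub-interval $[t^\dagger-1,t']$, so that the value $v$ observed at $t'$ is already present at $t^\dagger-1$.
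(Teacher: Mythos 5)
Your proof is correct, and it exploits exactly the same dichotomy as the paper's: under \Cref{def:black-condition-field-extension}, a write can only reduce $\ell$'s reachability at the price of freezing $\fieldloc$ for the remainder of the execution, so the two conjuncts can be made to meet at a single instant. The difference is in the decomposition and the resulting witness. The paper splits on whether $\sigma_{t}(\fieldloc) = \sigma_{t'}(\fieldloc)$: if equal, the witness is $t$ itself (no induction needed, both conjuncts hold there by hypothesis); if not, it takes the write $w$ at time $t_w \leq t'$ that installs the final value $v$ and argues \emph{by contradiction} that no write in $(t,t_w]$ reduces $\ell$'s reachability (such a write would freeze $\fieldloc$, forbidding $w$), so reachability transports forward from $t$ to $t_w$ and the witness is $t_w$. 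You instead split on whether any write in $(t,t']$ reduces reachability: if none, the witness is $t'$; otherwise you take the \emph{earliest} reducing write $w^\dagger$ at $t^\dagger$, transport reachability forward to $t^\dagger-1$, and push the field value \emph{backwards} through the frozen interval $[t^\dagger-1,t']$ to conclude $\sigma_{t^\dagger-1}(\fieldloc)=v$, making $t^\dagger-1$ the witness. The two arguments are dual — the paper's witness sits at the last write to $\fieldloc$, yours just before the first loss of reachability — and both are elementary; yours buys a direct (contradiction-free) argument with the bookkeeping made explicit, while the paper's buys a slightly shorter first case and localizes the reasoning around the field write rather than around the reachability loss.
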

\begin{proof}%
Let $t,t',\ell,\ell'$ be as in the premise. When $\sigma_{t}(\fieldloc) =
\sigma_{t'}(\fieldloc)$ holds, we have $\models_t (\QReachAbs{\ell}{k} \land
\fieldloc = v)$, and the theorem holds trivially. In the following, we consider
the case when $\sigma_{t}(\fieldloc) \neq
\sigma_{t'}(\fieldloc) = v$. There must exist a write $w$ in time $t_w \in
(t,\texend]$ that modifies $\fieldloc$ to $v$ (and then $\sigma_{t_w}(\fieldloc)
= v$ holds). Recall that all writes satisfy the forepassed condition. Let us
assume that there is a write $w'$ at $t'_w \in (t,t_w]$ reducing reachability of
$\ell$. By \Cref{def:black-condition-field-extension}.(2), no later write in
$[t'_w, t_w]$ modifies $\fieldloc$. Since $w$ modifies $\fieldloc$ at
time $t_w$, we arrive to a contradiction. Thus, all writes in $(t,t_w]$
satisfy \Cref{def:black-condition-field-extension}.(1): no write in
$(t,t_w]$ reduces the reachability of $\ell$. Knowing that
$\models_{t} \QReachAbs{\ell}{k}$ holds, we get that so does $\models_{t_w}
\QReachAbs{\ell}{k}$. Finally, when $\sigma_{t_w}(\fieldloc) = v$, we get
$\models_{t_w} (\QReachAbs{\ell'}{k} \land \fieldloc = v)$.
\end{proof}

\begin{example}\label{ex:field-extension}
In the LO tree, we use this extension to deduce properties such as $\Past{\QSReachXK{x}{k} \land \QXR{x}}$ in~\cref{Ln:lo-ContainsRetFalseRem} and similarly in~\cref{Ln:lo-ContainsRetTrue}. The condition holds for the reachability of $x$ the field and $x.\frem$ because when the reachability of $x$ is reduced, it is marked (see~\Cref{ex:lo-black-condition-holds}), so future writes refrain from modifying $x.\frem$ (\cref{Ln:lo-delete-write-rem}).
\end{example}  %

\section{Logical Ordering Traversal, The Full Story}
\label{sec:lo-traversal-full}
In this section, we give an overview of how our framework applies in proving the challenging past-reachability assertions, whose proof completes the linearizability proof of the LO tree (\Cref{sec:lo-linearizability}).
To this end, we consider an execution from when the operation begins, $t_\mathsf{begin}$,
 that at time $t$ reaches the following assertions appearing
in the proof outline for {\tt contains} in \Cref{Fi:LO-code}:
\begin{enumerate}[(i)]
	\item $\Pastlimit{\QSReachX{x}}{t_\mathsf{begin}}{t}$ at the traversal over tree- and predecessor-links (lines~\ref{Ln:lo-locate-start}--\ref{Ln:lo-locate-end},\ref{Ln:lo-contains-pred-start}--\ref{Ln:lo-contains-pred-end}); 
	\item $\Pastlimit{\QSReachXK{x}{k}}{t_\mathsf{begin}}{t}$ afterwards, at the traversal over successor links (lines~\ref{Ln:lo-contains-succ-start}--\ref{Ln:lo-contains-succ-end});
	\item $\Pastlimit{\QSReachXK{x}{k} \land \QXR{x}}{t_\mathsf{begin}}{t}$ and $\Pastlimit{\QSReachXK{x}{k} \land \lnot \QXR{x}}{t_\mathsf{begin}}{t}$ afterwards, after further reading the $\frem$ field (\cref{Ln:lo-ContainsRetFalseRem,Ln:lo-ContainsRetTrue}).
\end{enumerate}
We establish the first two assertions by applying~\Cref{thm:main-thm}, proving traversal correctness~(\Cref{def:traversal-correctness}) w.r.t.\ an appropriate reachability predicate and the base time $t_\mathsf{begin}$, with an $\extend{p}$ relation capturing the traversal. %
We establish the last assertions using~\Cref{thm:field-extension}, our extension for reachability with the value of a single field.
The proof of each assertion relies on the preceding ones; we now describe how these proofs progress using our framework.

\para{Case (i)}
We capture this traversal, %
using the following \textbf{extend} relation:
\begin{equation*}
\begin{array}{lr}
\begin{array}{l}
	\extend{}(\oloc{o}{\fkey},\cdot,\oloc{o}{\mathbf{f}})
	\\
	\extend{}(\oloc{o}{\mathbf{f}},\oloc{o'}{\fkey},\oloc{o'}{\fkey})
\end{array} 
& \qquad \qquad \mathbf{f} \in \{\fleft,\fright,\fprev\}.
\end{array}
\end{equation*}

For the \textbf{reachability predicate}, we take
$\QSReachX{x}$,
which holds of a state $\sigma$ iff there is a sequence of
locations $\ell_1,\ldots,\ell_{n+1}$ starting from $\ell_1 = (\succroot,
\fkey)$, ending in $\ell_{n+1} = (x, \fkey)$, which is connected via the successors
list: for every $i \in [0,n]$, if $\ell_i = (o,\fkey)$ then
$\ell_{i+1} = (o,\fnext)$, and if $\ell_i = (o,\fnext)$ with
$\sigma(\ell_i)=(o',\fkey)$ then $\ell_{i+1} = (o',\fkey)$.

\begin{figure}[H]
\includegraphics[scale=0.4]{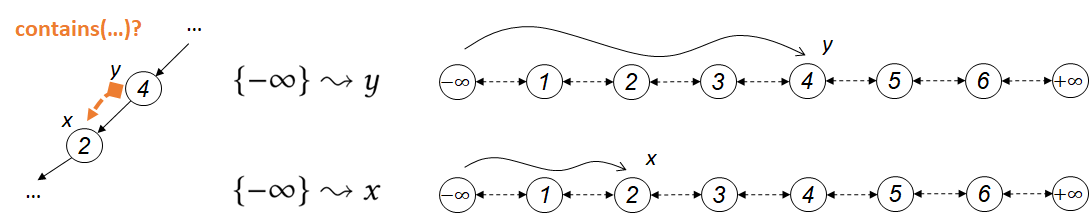}
\caption{Single-step compatibility of the tree traversal w.r.t.\ $\QSReachX{\cdot}$.}
\label{LO:tree-traversal-reach}
\end{figure}
\textbf{single-step compatible} with the relation $\extend{}$. This is because it is an invariant that in $\QSReachX{}$-reachable nodes, $\fleft$-,
$\fright$- and $\fprev$-links point to other $\QSReachX{}$-reachable nodes: $\forall x,y. \ \QSReachX{x} \land x.\mathbf{f} = y \implies \QSReachX{y}$, for $\mathbf{f} \in \set{\fleft,\fright,\fprev}$. 
\Cref{LO:tree-traversal-reach} illustrates how a traversal moving across $\fleft$/$\fright$ pointers remain on nodes that are reachable in the successors list.
This invariant holds
since \code{insert} first links a node to the successors list, before attaching it to the tree or the successors list; and
\code{remove} unlinks from the successors list only after it unlinks from the
tree\footnote{This relies, in~\cref{Ln:lo-remove2-done}, on the invariant that
the sole parent of $x$ is $x.\fparent$, when $x$ is unlocked.
\begin{changebar}
Note that modifications to the $\fparent$ field are protected by the parent's \code{treeLock} (in other words, a linked node's parent field is written to only when the parent's treeLock is held---which occurs during removal and rotations), and thus $\code{c}$, $\code{n.left}$, and $\code{n.right}$ do not have to be locked in~\code{removeFromTree} since only their $\fparent$ field is modified.
\end{changebar}} %
and the predecessors list.
Other operations, such as rotations, may unlink a node
from the tree but not from the successors list. %
To be able to apply the framework, we need to prove that interfering writes satisfy the \textbf{forepassed} condition. Indeed, the writes in LO satisfy case \still{} or case \immu{} of the forepassed condition, with a gist similar to~\Cref{ex:lo-black-condition-holds}: the only write that reduces the reachability of a location is the removal of a node in~\cref{Ln:lo-delete-write-previous}, reducing the reachability of \code{y}, but this node is marked removed and no further writes to it will occur, satisfying case \immu{}.
From these ingredients,
\Cref{thm:main-thm} \textbf{yields the desired assertion}.
Formally, we consider any traversal $\tau =
(\ell_1,t_1),\ldots,(\ell_n,t_n),\ell_{n+1}$ w.r.t.\ $\extend{}$ occurring within an execution in
timespan $[t_\mathsf{begin}, t]$, so that $\ell_1 = (\succroot,
\fkey)$, ending in $\ell_{n+1} = (x, \fkey)$.
By~\Cref{thm:main-thm},  $\tau$ is correct w.r.t.
$\QSReachX{\cdot}$ and $t_\mathsf{begin}$, which concludes the case (i).

\para{Between (i) and (ii)}
The choice of the reachability predicate in assertion (i) is with the aim of proving assertion (ii), which concerns $\QSReachXK{\cdot}{k}$ in the successors list. In assertion (i) we used a different reachability predicate, of plain-reachability through successor links, without considering any specific key.
The reason is that the traversal over tree- and predecessor-links visits nodes in an order that is, in some sense, ``random access'' into the successors list, and does not respect the search for $k$ in the successors list.
However, assertion (i) is important for proving assertion (ii). %
The necessary ``glue'' is the following observation of what holds in between them, before the first iteration in~\cref{Ln:lo-contains-succ-start}:
Let $z$ be the value of \code{x} just after the loop at~\crefrange{Ln:lo-contains-pred-start}{Ln:lo-contains-pred-end}.
From this loop's condition, necessarily $\QXK[\leq]{z}{k}$.
From assertion (i), there is some timestamp $t' \geq t_\mathsf{begin}$ when $\QSReachX{z}$ holds.
The successors list is sorted and contains
unique values, an invariant that can be established from the assertions as (see~\Cref{sec:lo-linearizability}).
 Hence, it also holds that $\QSReachXK{z}{k}$ at time $t'$.
Thus assertion (ii) holds just before the loop in~\crefrange{Ln:lo-contains-succ-start}{Ln:lo-contains-succ-end}.
Our goal now is to prove this assertion also when this loop executes and \code{contains} traverses successor links.

\para{Case (ii)}
To prove this assertion we consider a traversal over (only) successor links that starts from $z$---where the traversal over tree- and predecessor-links left off---strictly after that traversal:
$
\tau' = (\ell'_1, t'_1), (\ell'_{2}, t'_{2}), \ldots, (\ell'_m, t'_m), \ell'_{m+1}$, 
where $\ell'_1 = (z, \fkey)$ %
and $t'_1 \geq t'$ (as $t'$ occurred sometime during the previous traversal).
As a traversal over successor links, it
visits locations connected by the {\bf $\extend{k}$} relation from
\Cref{ex:traversal}. We study its correctness w.r.t. the {\bf reachability
predicate} $\QSReachXK{\cdot}{k}$ from \Cref{ex:k-reach-def}. As we have shown
in \Cref{ex:lo-local-path-extension},  $\QSReachXK{\cdot}{k}$ is {\bf single-step
compatible} with $\extend{k}$. We also proved in
\Cref{ex:lo-black-condition-holds} that the LO tree's writes satisfy the {\bf forepassed} condition.
At time $t'$, the first location is $k$-reachable: $\QSReachXK{(z,\fkey)}{k}$.
From these, by
\Cref{thm:main-thm} we get that $\tau'$ is a correct traversal w.r.t.
$\QSReachXK{\cdot}{k}$ and the base time $t'$. By~\Cref{def:traversal-correctness},
$\Pastlimit{\QSReachXK{x}{k}}{t'}{t}$ holds, and in particular $\Pastlimit{\QSReachXK{x}{k}}{t_\mathsf{begin}}{t}$,  which concludes the case (ii).

\para{Case (iii)}
We apply the extended framework of \Cref{sec:reachability-with-field} in our
proofs of $\Past{\QSReachXK{x}{k} \land \QXR{x}}$ and $\Past{\QSReachXK{x}{k}
\land \lnot \QXR{x}}$.
From assertion (ii), there is a point in time $\tilde{t} \in [t_\mathsf{begin},t]$ where $\QSReachXK{x}{k}$ holds.
We consider any possible execution with timespan
$[\tilde{t}, t]$. %
Let
$v$ be the value of $x.{\tt rem}$ returned by the read at
\cref{Ln:lo-ContainsReadRem}, i.e. $v = \sigma_{t'}(x.{\tt rem})$.
As we have shown in \Cref{ex:field-extension}, the LO tree's writes
satisfy the {\bf forepassed} condition w.r.t. $\ell$ and $x.{\tt rem}$. By
\Cref{thm:field-extension}, $\Past{\QSReachXK{x}{k} \land
\QXR{x} = v}$ holds, which concludes the case (iii).

\section{Discussion: On Proving the Forepassed Condition}
\label{sec:circular}
The forepassed condition (\Cref{def:black-condition}) is the key requirement to algorithm implementations in our framework. While it is simple to establish in the LO tree (\Cref{sec:lo-traversal-full}),
in general this requires reasoning about concurrent executions.
However, this task can be simplified by relying on assertions showing the correctness of writes.
Since proofs of traversal correctness with our framework are carried out by induction on the length of concurrent executions, inductive arguments for the forepassed condition can be integrated into the proof. This introduces a curious circularity: the forepassed condition on the prefix of an execution is used to conclude correctness of the corresponding traversal, which in turn can be leveraged in justifying the forepassed condition on a longer prefix of the execution. The integration is possible because justifying traversal correctness after a prefix of an execution requires the forepassed condition to hold only on that prefix. 
A similar approach has been previously proposed by~\citet{FeldmanE0RS18}.

\ignore{
	The main condition of our framework requires a temporal invariant over writes. At times, establishing this condition is significantly simplified by relying on the assertions that express the correctness of individual writes. If these assertions are justified while relying on our framework to prove traversal correctness, the argument is seemingly circular---traversal correctness relies on the assertions, and the assertions rely on traversal correctness. In fact, it is possible to unleash this kind of powerful arguments by an induction over time, and this is a very useful feature of our framework (as in~\citet{DBLP:conf/wdag/FeldmanE0RS18}\sharon{citation broken}).
	The reason is that the correctness of a write uses only  guarantees provided by traversals that occurred \emph{earlier}, and the induction over reads in a traversal (in the proof of~\Cref{thm:main-thm}) uses properties of interfering writes that occurred \emph{before} the read. Thus the entire proof---of traversal correctness and the writes---becomes a mutual induction over time.
	(The LO tree does not benefit from this sort of argument because the proof of the modifiers does not rely on traversal correctness. A ``circular'' argument is however useful in the proofs of the Contention-Friendly Tree, in~\Cref{sec:cf-short}, and the Citrus Tree, in~\Cref{sec:citrus-short}.)
}

\section{Additional Case Studies}
\label{sec:additional}

\subsection{List-Based Structures}
Our method can prove all the list-based structures handled by the local view framework~\cite{FeldmanE0RS18}: Lazy List~\cite{HellerHLMMS05}, lock-free list~\cite[Chapter 9.8]{TAOMPP}, and and lock-free skiplist~\cite[Chapter 14.4]{TAOMPP}. This is because the traversals in all these examples are single-step compatible with the reachability predicate (similar to~\Cref{ex:lo-local-path-extension}), and the preservation condition of~\citet{FeldmanE0RS18} is a special case of forepassed interference (see~\Cref{def:preservation-modification}).

\subsection{Contention-Friendly Tree with Backtracking}
\label{sec:cf-short}
The contention-friendly tree~\cite{EuroPar13:CFTree,PFL16:CFTree} is a self-balancing binary search tree, in which traversals operate without synchronization, and rotations are performed by allocating a copy of the rotated node (see~\Cref{Fi:CF-Rotation}). The linearizability proof of this tree uses the $k$-reachability predicate $\QReachXK{x}{k}$, meaning that a node $x$ is on a path from the $\rootobj$ in a standard tree binary search for key $k$~\cite{FeldmanE0RS18}.
\begin{figure}
\begin{subfigure}{.65\textwidth}
\includegraphics[scale=0.44]{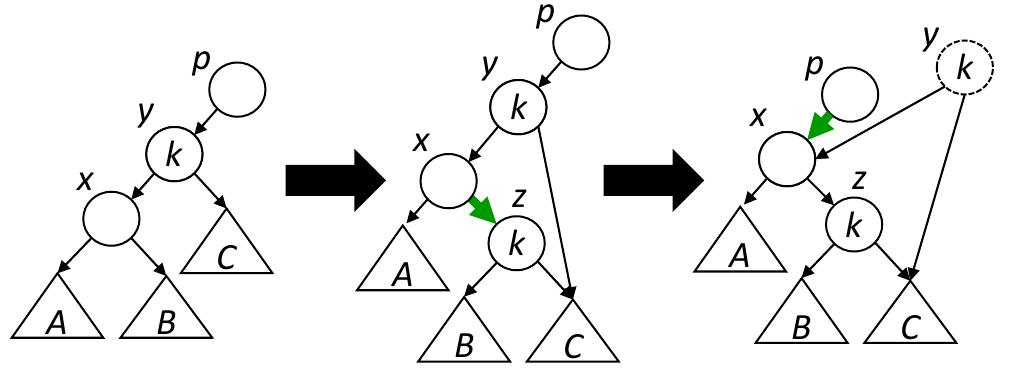}
\caption{\footnotesize Right rotation of $y$ in the Contention-Friendly Tree, from~\cite{FeldmanE0RS18}.  (The bold green link is the one written in each step.
The node with a dashed border has its $\frem$ bit set.)
}
\label{Fi:CF-Rotation}
\end{subfigure}
\ \ \ 
\begin{subfigure}{.3\textwidth}
\centering
\includegraphics[scale=0.44]{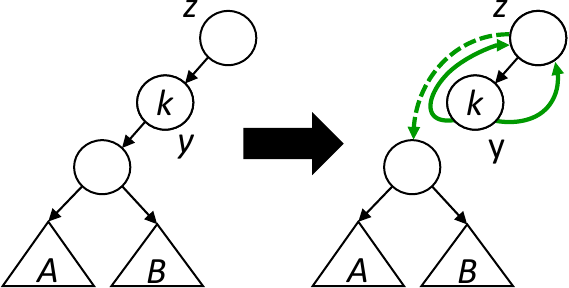}
\subcaption{\footnotesize Bypassing $y$ and backtracking from it by pointing from $y$ to its parent $z$.}
\label{Fi:Backtracking}
\end{subfigure}
\caption{Operations in the contention-friendly tree.}
\end{figure} \citet{FeldmanE0RS18} proved traversal correctness of a variant of this algorithm, but their proof cannot handle \emph{backtracking}. In the original version~\cite{EuroPar13:CFTree,PFL16:CFTree}, when a node $x$ is physically removed, its $\fleft$/$\fright$ pointers are modified to point to its parent (see~\Cref{Fi:Backtracking}). In this way, a traversal reaching a physically removed node backtracks until it can continue from a node
still linked to the tree~\cite{PFL16:CFTree}, without requiring traversals to perform explicit
synchronization or validation steps~\cite{Bronson:2010}.
This backtracking-like operation is inherently problematic for the framework of~\citet{FeldmanE0RS18} because it breaks their temporal acyclicity requirement: what was once a child of a node is now its parent.

We apply our framework to prove traversal correctness even in the presence of backtracking. %
For a key $k$, we capture the traversal searching for $k$ using the following $\extend{k}$ relation, defined 
to be true iff it is one of the following cases:
\begin{align*}
	&\extend{k}(\oloc{o}{\fkey},m,\oloc{o}{\fright}) &\quad \mbox{if $m<k$}
	\\
	&\extend{k}(\oloc{o}{\fkey},m,\oloc{o}{\fleft}) &\quad \mbox{if $m>k$}
	\\
	&\extend{k}(\oloc{o}{\fleft},\oloc{o'}{\fkey},\oloc{o'}{\fkey}) &
\\
	&\extend{k}(\oloc{o}{\fright},\oloc{o'}{\fkey},\oloc{o'}{\fkey}) &
\end{align*}
We define the $k$-reachability predicate through sequences of locations that follow extend:
$\QReachXK{x}{k}$ holds in state $\sigma$ if there is a sequence of locations $\ell_0,\ldots,\ell_n$ s.t.\ $\ell_0 = \oloc{\rootobj}{\fkey}$, $\ell_n = x$, and $\forall i < n. \ \extend{k}(\ell_i,\sigma(\ell_i),\ell_{i+1})$.
These definitions exactly follow 
a binary search in the tree, that is: if $k$ is greater (smaller) than the current key, the path continues through the right (left) child respectively. Note that the path does not continue after finding the target key.
Since $\QSReachXK{\cdot}{k}$ is defined using $\extend{k}$, their {\bf single-step compatibility} is evident.

The {\bf forepassed} condition holds because in this algorithm,
when the $k$-reachability of a node is reduced, the node is marked, so future operations do not modify it (similar to~\Cref{ex:lo-black-condition-holds}, \emph{except} the backtracking modifications---which do modify the node's pointers \emph{after} it is no longer reachable.
However, these modifications satisfy case \pastreach{} of the forepassed condition, because they point to the parent, which has been $k$-reachable itself at the time that its child's reachability was reduced.
It is important to note that rotations, which could reduce the $k$-reachability of the node rotated downwards (because binary searches now encounter the node rotated upwards first and can continue in the other direction), satisfy the forepassed condition in the CF tree, because rotations in the CF tree use a \emph{newly allocated} node to represent the node rotated down (see~\Cref{Fi:CF-Rotation}).
See~\refappendix{sec:cf-backtrack} for the code and a detailed discussion of the traversal correctness proof of this algorithm.
\subsection{Citrus Tree}
\label{sec:citrus-short}
The Citrus tree~\cite{Arbel:2014} is a concurrent binary tree implementing a key-value map with
the standard operations \code{insert(}$k$,$d$\code{)}, \code{delete(}$k$\code{)}, and \code{contains}($k$\code{)} operations. %
The tree's nodes include a
\code{rem} boolean field indicating logical removal (like in the LO tree),
and a \code{tag} integer field, used to prevent an ABA problem
due to multiple nullifications of the \code{left} field upon insertion. (Nodes also contain \code{key}, \code{data}, \code{left}, and \code{right} fields.) %
The operations of this concurrent map performs lock-free binary search in the tree. %
In the following, we discuss the use of our framework in proving their correctness (more details are given in~\refappendix{app:citrus}).

\begin{figure}
\begin{center}
\centering
\includegraphics[scale=0.35]{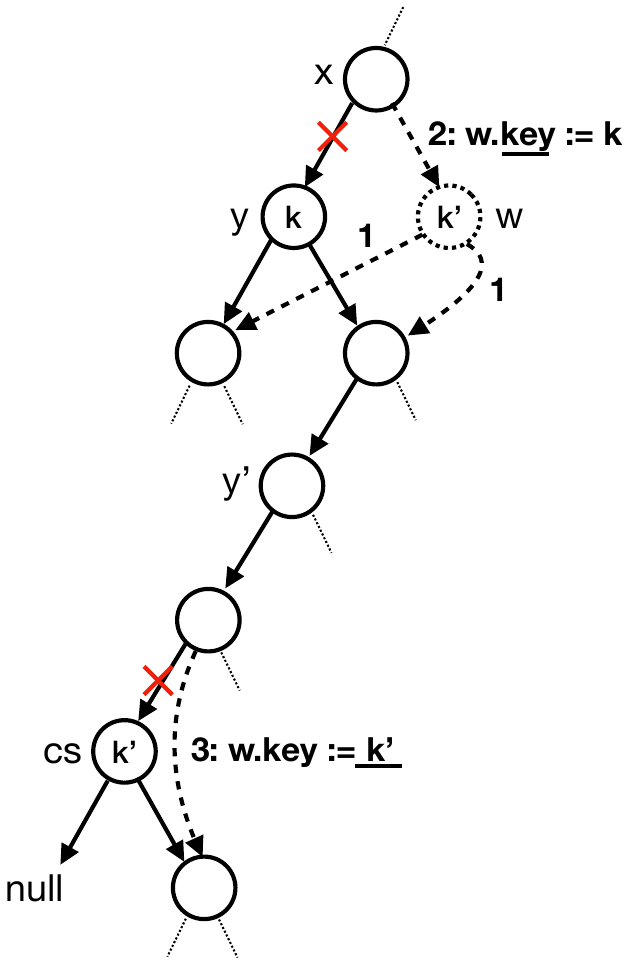}
\caption{\footnotesize Removing a node $y$ with two children by creating a copy $w$ of the node $cs$ containing the smallest key $k'$ bigger than the key $k$ of $y$ (the \emph{successor}). Dashed edges represent changes of pointers made during this removal. They are labeled with integers that show their order and updates of the ghost field \ghost{\code{key}}. %
}
\label{fig:citrus:delete}
\end{center}
\end{figure}
 
\begin{changebar}
The most intricate part of this algorithm is the physical removal of a node with two children as part of \code{delete}.
(Indeed, many concurrent tree algorithms with optimistic traversals refrain from physically removing nodes until they have only one child~\cite[e.g.][]{PFL16:CFTree,Bronson:2010}.)
In the Citrus tree, this is done as pictured in~\Cref{fig:citrus:delete}. (The assignments to the field \ghost{\code{key}} should be ignored for now.)
Let $y$ be a node with two children. The operation finds $cs$, the successor of $y$ in the tree, and creates a copy $w$ of $cs$. It then performs the following mutations (see the labels in~\Cref{fig:citrus:delete}):
\begin{inparaenum}
	\item setting $w$'s left and right children to be the same as in $y$;
	\item linking $w$ to the tree as the left child of $y$ (at this point $y$ is unlinked from the tree); and
	\item finally, unlinking $cs$ from the tree.
\end{inparaenum}
Nodes are first marked logically deleted before being unlinked from the tree. Every write to a node is protected by a lock associated to that node. Also, crucially, unlinking $cs$ from the tree is guarded by an RCU lock~\cite{rcu,rcu-thesis,userrcu} that synchronizes this write with traversals executing in parallel: it blocks this write until all the traversals that have already started finish.
\end{changebar}

\begin{changebar}
Proving that removals of nodes with two children do not hinder the results of concurrent lock-free traversals is quite tricky.
We show that the lock-free traversals in Citrus are \emph{not} correct with respect to the standard $k$-reachability predicate $\QReachXK{\cdot}{k}$ (\Cref{sec:citrus-short-standard-reach-fails}), leading us to defined a weaker version of the reachability predicate (\Cref{sec:citrus-short-weak-reachability-pred}), to which the framework applies (\Cref{sec:citrus-short-applying}). Correctness with respect to this weaker $k$-reachability predicate is however enough to prove linearizability (\Cref{sec:citrus-short-linearizability}).
\end{changebar}

\subsubsection{Warmup Attempt: Traversals Fail Standard Reachability}
\label{sec:citrus-short-standard-reach-fails}

\begin{changebar}
We first show that our framework (\Cref{thm:main-thm}) fails to prove traversal correctness w.r.t. $\QSReachXK{\cdot}{k}$ (which follows a binary search in the tree like in the previous case studies), and that this is due to the fact that the traversals are not correct w.r.t. this reachability predicate.

The traversal is single-step compatible with $\QSReachXK{\cdot}{k}$. %
However, the forepassed condition w.r.t. $\QSReachXK{\cdot}{k}$ does not hold.
When a node with two children is removed as in~\Cref{fig:citrus:delete}, the write that links $x$ to $w$ reduces reachability in the subtree of $w$. Specifically, it no longer holds that $\QSReachXK{\alpha.\fleft}{\tilde{k}}$ for every node $\alpha$ in the right subtree of $w$ and for every $\tilde{k}$ in the interval $(k,k']$ (upon this modification, $\tilde{k}$-paths go to the left of $w$ or stop at $w$, rather than going to the right as before).
However, these fields \emph{can} be modified afterwards, in a way that increases reachability, thereby \emph{violating the forepassed condition}.
This occurs when a new node $\beta$ is inserted as the left child of some $\alpha$, thus modifying $\alpha.\fleft$ to point to $\beta$ which is \emph{not} $\tilde{k}$-reachable at the point of insertion and of course also not before.
(Such a node $\beta$ can be inserted when the left child of $\alpha$ is deleted and it has two children, in which case $\beta$ serves as a copy of the successor of the left child of $\alpha$. Note that this successor is not $cs$, so this additional removal operation is possible concurrently.)
This constitutes a violation the forepassed condition, and~\Cref{thm:main-thm} does not apply.

In fact, it is not only that our main theorem cannot prove traversal correctness, but traversal correctness \emph{does not hold} with respect to the standard $k$-reachability predicate.
In the same scenario, a $\tilde{k}$-traversal that happens to reside at $y$ when the modification linking $w$ occurs and continues from there may reach the new node $\beta$ from above although $\beta$ was never $\tilde{k}$-reachable.
\end{changebar}
We use our framework to show that the traversals in this algorithm are correct w.r.t. $\QReachXKgap{\cdot}{k}$, a \emph{weaker} reachability predicate.
\begin{changebar}
Intuitively, $\QReachXKgap{x}{k}$ %
allows %
some searches to ``take a wrong turn''---in a way we make accurate hereafter---due to a concurrent removal of a node with two children.
The assertion $\Past{\QReachXKgap{x}{k}}$ implied by traversal correctness makes it possible to infer $\PReachXK{x}{k}$ (the form of reachability that underlies the abstraction function in the linearizability proof) in special cases, such as when $x$ is the endpoint of the traversal.
We first explain how $\QReachXKgap{\cdot}{k}$ is defined, then why our framework applies to this reachability predicate, and then sketch how $\QReachXK{\cdot}{k}$ can be derived from it as necessary.
\end{changebar}

\subsubsection{Weak Reachability Predicate Using Ghost State}
\label{sec:citrus-short-weak-reachability-pred}
The predicate $\QReachXKgap{\cdot}{k}$ is defined on top of an instrumentation of the algorithm with \emph{ghost code} that statically captures the ways traversals looking for a key $k$ can deviate from standard $k$-search paths (defined by the predicate $\QReachXK{\cdot}{k}$). Looking at~\Cref{fig:citrus:delete}, we refer to the deviation that occurs after linking $x$ to $w$ and define this predicate such that $\QReachXKgap{\ell}{\tilde{k}}$ holds for keys $\tilde{k}\in (k,k']$ and locations $\ell$ in the right sub-tree of $w$ even after $x$ is linked to $w$ (the problematic case mentioned above). Technically, we introduce a \emph{ghost} field \ghost{\code{key}} for every node in the tree, which does not change the actual behavior of the algorithm. The field \ghost{\code{key}} equals \code{key} unless the node is a copy $w$ introduced during the physical removal of a node $y$ with two children (cf.~\Cref{fig:citrus:delete}): \ghost{\code{key}} is set to $k$ atomically with linking $x$ to $w$, and it is set to $k'$ atomically with unlinking $cs$ from the tree (thus becoming equal to \code{key}).
\begin{changebar}
We refer to the last operation as ``collapsing'' the ghost interval; this indicates the end of this operation---from this point on, traversals are \emph{not} permitted to steer off  course because of this operation. (Collapsing the ghost interval is necessary to be able to infer interesting properties of $\QReachXK{\cdot}{k}$ out of weak reachability.)
This ghost state is defined \emph{per traversal}: a traversal starts with a copy of the state in which there is no ghost state ($\ghost{\code{key}} = \code{key}$ in all nodes), and a write modifies the ghost state of all the traversals that have already started and did not terminate.
\end{changebar}

The predicate $\QReachXKgap{x}{k}$ holds in a state $\sigma$ if and only if there is a sequence of
locations $\ell_1,\ldots,\ell_{n+1}$ starting from $\ell_1 = (\rootobj,\fkey)$, ending in $\ell_{n+1} = (x, \fkey)$, and connected via the \code{left} and \code{right} fields in the following manner:
for every $i=1,\ldots,n$,
\begin{align}
\label{eq:citrus-reachability-def}
\begin{split}
&\mbox{if $\ell_i = (o,\code{left})$ or $\ell_i = (o,\code{right})$ with $\sigma(\ell_i)=(o',\fkey)$, then} \\
&\hspace{2cm}\ell_{i+1} = (o',\fkey), \mbox{ and} \\
&\mbox{if $\ell_i = (o,\code{key})$, $\sigma(o,\code{key})=m$, $\sigma((o,\ghost{\code{key}}))=\ghost{m}$, then } \\
&\hspace{2cm}\ell_{i+1}=(o,\code{right})\mbox{ if $(k>\ghost{m} \land k\neq m) \lor (k=m \land \ghost{m}\neq m),$ and} %
 \\
&\hspace{2cm}\ell_{i+1}=(o,\code{left})\mbox{ if $k<m$}.
\end{split}
\end{align}

Note that standard $k$-reachability $\QReachXK{\ell}{k}$ implies that $\QReachXKgap{\ell}{k}$.
However, this predicate is weaker than $\QReachXK{\ell}{k}$ because:
\begin{inparaenum}
	\item the search for $k \in (\ghost{m},m)$ can go either left or right, and
	\item the search for $m$ can continue (to the right) after finding the key ($k=m$) in case the ghost key and the real key are different.
\end{inparaenum}
The updates on the ghost field \ghost{\code{key}} are visible only to the traversals executing in parallel (this is related to the use of the RCU lock). In more detail, each operation has its own instance of \ghost{\code{key}} for every node, and an update to this field should be read as modifying all the instances of the traversals executing in parallel atomically in one shot.

\subsubsection{Applying the Framework}
\label{sec:citrus-short-applying}
\begin{changebar}
First, the predicate $\QReachXKgap{\cdot}{k}$ is {\bf single-step compatible} with the standard $\extend{k}$ relation corresponding to binary search tree traversals (formally defined in~\refappendix{sec:cf-backtrack}). Essentially, extending a sequence of locations satisfying the relationship of~\Cref{eq:citrus-reachability-def} with a another location chosen according to binary search continues to satisfy~\Cref{eq:citrus-reachability-def}.
\end{changebar}

Second, showing that all the writes in this algorithm satisfy the strong {\bf forepassed} condition w.r.t. $\QReachXKgap{\cdot}{k}$ is relatively straightforward: %
Writes made within an \code{insert} operation %
only increase the reachability of locations. %
\begin{changebar}
Almost all the writes in \code{delete} satisfy forepassed for the usual reason (cf.~\Cref{ex:lo-black-condition-holds}): they reduce the reachability only of nodes that are marked logically deleted, which would never be modified. %
The only tricky write is the last step in the removal of a node with two children, unlinking the node $cs$ in~\Cref{fig:citrus:delete}, because it ``collapses'' the ghost field \ghost{\code{key}} $k'$ to the actual key $k$.
This reduces $\QReachXKgap{\cdot}{\tilde{k}}$ for all the nodes in the right sub-tree of $w$ and $\tilde{k} \in (k,k']$, which, when $k \neq k'$, is problematic because they can later be modified (as explained above in why $\QReachXK{\cdot}{k}$ does not satisfy forepassed).
This is where the RCU synchronization comes into play, together with our definition of the ghost code: if $\ghost{\code{key}} \neq \code{key}$, then concurrently with the traversal there was a write introducing this ghost. But then this traversal must terminate \emph{before} the collapsing of the ghost interval, because the RCU synchronization waits for existing traversals to terminate before performing this write. Thus, this \emph{reduction} of reachability---from $\ghost{\code{key}} \neq \code{key}$ to $\ghost{\code{key}} = \code{key}$---cannot occur.
Note that this argument uses the fact that the ghost state is per traversal and introduced by a write only in the existing, concurrent, traversals.

We can now apply~\Cref{thm:main-thm} to deduce that traversals in the Citrus tree are correct w.r.t.\ $\QReachXKgap{\cdot}{k}$.
\end{changebar}

\subsubsection{Proving Linearizability}
\label{sec:citrus-short-linearizability}
\begin{changebar}
So far we have established that every location $x$ the traversal reached satisfies $\Past{\QReachXKgap{x}{k}}$. %
To prove linearizability, we use this fact in order to infer properties of the standard reachability predicate. %
For example, in~\code{contains}, $\Past{\QReachXK{x}{k}}$ holds when $x$ is the \emph{endpoint} of the traversal (the last node it reached).
This can be inferred from $\QReachXKgap{x}{k}$ as follows.
\end{changebar}
This implication is immediate for traversals that do not pass through copies of nodes $w$ like in~\Cref{fig:citrus:delete} because the ghost field \ghost{\code{key}} can differ from \code{key} only for such nodes. Moreover, a traversal passing through such a node can deviate from a standard search path only when looking for a key $\tilde{k}\in (k,k']$. If it ends in $\NULL$, then $\tilde{k}\in (k,k')$ and $\QReachXK{\NULL}{\tilde{k}}$ was indeed true in the past (the algorithm ensures that $k'$ is the smallest key bigger than $k$).
Otherwise, if it ends in a non-null node $x$, then $\tilde{k}=k'$ and $x$ is the node $cs$ in~\Cref{fig:citrus:delete}. Again, %
this node was $k$-reachable in the past, concluding that indeed $\Past{\QReachXK{x}{k}}$, as desired.
\begin{changebar}
Slightly different properties are required for the linearizability of~\code{insert} and~\code{delete}. For instance, inserting to a left child uses \code{tag} validation to infer the that the modified node is reachable \emph{now} w.r.t. the \emph{standard} reachability predicate 
(from $\Past{\QReachXKgap{x}{k}}$). %
These are discussed in~\refappendix{app:citrus}.
The correctness of the traversal that searches for the successor node (as part of the operation that removes a node with two children), which is subject to different reachability patterns from the one in~\code{contains} (and serves a different purpose), is also discussed in~\refappendix{app:citrus}.
\end{changebar}   
\section{Related Work}
\label{sec:related}
\begin{changebar}
A couple of prior methods~\cite{PODC10:Hindsight,FeldmanE0RS18} also prove traversal correctness by reasoning strictly about how the algorithm's writes modify the memory state (i.e., by considering executions as sequences of interleaving writes, rather than interleavings of writes with traversals' reads) plus static \emph{sequential} properties of the traversal code.
Our method is more general:
the hindsight lemma~\cite{PODC10:Hindsight} is specific to linked lists; in particular, they do not clearly divide between the invariants used for traversal correctness and the rest of the proof.
\citet{FeldmanE0RS18} can handle algorithms beyond the list, but require that pointers always form an acyclic structure. %
In this work we relieve traversal correctness from the acyclicity condition, and instead build on single-step compatibility (\Cref{sec:local-path-extension}).
This allows us to prove examples that~\citet{FeldmanE0RS18} cannot handle, including the contention-friendly tree with backtracking and the Logical-Ordering tree (which includes in-place rotations).
Our framework also has the benefit of significantly simpler theory behind it.
The proof of our main theorem (\Cref{thm:main-thm}) is more similar to the hindsight lemma, by induction over the traversals' reads and case-splitting on whether the link was modified, rather than induction over interleaving writes as in~\citet{FeldmanE0RS18}.
\end{changebar}

Other works aim to simplify elements of the linearizability other than traversal correctness. %
\citet{DBLP:conf/wdag/Lev-AriCK15} harness properties of sequential executions in the linearizability proof. Their methodology relies on \emph{base points}, points during the concurrent execution where certain predicates hold, thus necessitating concurrent reasoning. When applying their framework to the lazy list, they rely on the tricky concurrent reasoning from previous works~\cite{PODC10:Hindsight,LazyListSafety} to establish base points. Our work is thus complementary, aiming to simplify concurrent reasoning of the sort that could also be used to establish base points.

\begin{changebar}
The Edgeset framework~\cite{DBLP:journals/tods/ShashaG88} proves the linearizability of concurrent search algorithms based on the notion of \emph{keyset}, which is reminiscent of $k$-reachability. This framework has recently been the algorithmic basis for the mechanizations by~\citet{DBLP:journals/pacmpl/KrishnaSW18,DBLP:conf/pldi/KrishnaPSW}, where the main technical contribution is showing how to obtain a mechanized proof of the Edgeset arguments in the Iris separation logic~\cite{DBLP:journals/jfp/JungKJBBD18} using the novel notion of flows (cf.~\cite[][\S7]{DBLP:conf/pldi/KrishnaPSW}). This development is orthogonal to our work, where we focus on the algorithmic essence of correctness, independent of a particular assertion language or program logic.
\end{changebar}
\citet{DBLP:journals/tods/ShashaG88} provide three algorithmic templates, and conditions for when these templates guarantee linearizability.
Of the three templates, the one closest to optimistic traversals is the link template.
\begin{changebar}
The Edgeset condition for the link template requires that a node gets \emph{accessed} for key $k$ \emph{only when} it is $k$-reachable or has an outgoing path leading to a $k$-reachable node.
This condition does not hold in our examples, where a node reached by $\code{contains}(k)$ may no longer be reachable, nor lead to a $k$-reachable node afterwards.\footnote{
	For example, consider the lazy list~\cite{HellerHLMMS05} (which is essentially the list traversal in the LO tree---also see~\Cref{sec:additional}). Suppose the list is $A \mapsto B \mapsto C \mapsto D$ and a read-only $\code{contains}(D)$ has read the pointer $A \mapsto B$. If now $B$, $C$, and $D$ are removed---in this order---by another thread, the $\code{contains}(D)$ will traverse the path $B \mapsto C \mapsto D$. Such a traversal is prohibited by the link template, because the traversed nodes are not reachable nor lead to a reachable node, %
	and so the $\code{contains}(D)$ is not allowed to access $B$, and similarly for $C$
	(cf.~\citet[][\S4.5]{DBLP:journals/tods/ShashaG88}).
	Still, $C$ and $D$ \emph{have been} reachable, and so such accesses are allowed by our framework.  %
} In contrast, the forepassed condition allows such accesses, but only constrains later modifications to such locations. %
The fundamental difference between the forepassed condition and the Edgeset conditions is that their conditions guarantee the existence of a path \emph{now}, whereas the forepassed condition applies to algorithms where it can only be shown that a path existed \emph{at some point}.
Furthermore, proving the link condition involves reasoning about interleavings of both reads and writes, whereas our forepassed condition involves interleavings of writes only.
Indeed, \citet{DBLP:conf/pldi/KrishnaPSW} use a strengthening of the Edgeset condition for the link technique, requiring that the set of nodes that is $k$-reachable or leads to a $k$-reachable node is never reduced by interleaving writes. This simplifies the reasoning, but is also too strong for optimistic traversals.
\end{changebar}

The designers of certain data structures proved correctness in ways that share some of the structure of the argument in our framework. \citet{Arbel:2014} prove that every node accessed in a traversal has been reachable at some point (Lemma 1), but this is shown for plain reachability rather than $k$-reachability, which is essential in binary search trees (e.g.\ for the correctness of insertions). The proof is specific to the Citrus tree, and does not seem to use a condition similar to forepassed.
\citet{Brown:2014:GTN} prove a lock-free self-adjusting binary tree, establishing traversal correctness (Lemma 18) based on the fact that nodes that are not in the abstract set are finalized. This is a specific case of our condition (see~\Cref{def:preservation-modification}). Our work distills the ingredients needed for a general proof technique, which can prove traversal correctness for multiple different algorithms.

Program logics for compositional reasoning about concurrent programs and data structures have been studied extensively. In this context, the goal is to define a proof methodology that allows   composing proofs of program's components to get a proof for the entire program, which can also be reused in every valid context of using that program. Improving on the classical Owicki-Gries~\cite{DBLP:journals/cacm/OwickiG76} and Rely-Guarantee~\cite{DBLP:conf/ifip/Jones83} logics, various extensions of Concurrent Separation Logic~\cite{DBLP:conf/popl/BornatCOP05,DBLP:conf/concur/Brookes04,DBLP:conf/concur/OHearn04,DBLP:conf/popl/ParkinsonBO07} have been proposed in order to reason compositionally about different instances of fine-grained concurrency, e.g.~\cite{DBLP:journals/jfp/JungKJBBD18,DBLP:journals/pacmpl/JungLPRTDJ20,DBLP:conf/popl/Ley-WildN13,DBLP:conf/ecoop/PintoDG14,DBLP:conf/esop/SergeyNB15,DBLP:journals/pacmpl/Nanevski0DF19,DBLP:conf/esop/RaadVG15,DBLP:conf/icfp/TuronDB13,conf/cav/DragoiGH13,conf/vmcai/Vafeiadis09,phd/Vafeiadis08,DBLP:journals/pacmpl/KrishnaSW18}. 
However, they focus on the reusability of a proof of a component in a larger context (when composed with other components) while our work focuses on simplifying the proof goals that guarantee linearizability. The concurrent reasoning needed for our framework could be carried out using one of these logics. It is interesting to note that the lazy list has played an important case study in several of these works~\cite{phd/Vafeiadis08,conf/ppopp/VafeiadisHHS06}, and recently some works~\cite{DBLP:conf/pldi/KrishnaPSW,DBLP:journals/pacmpl/KrishnaSW18} have used the Edgeset framework~\cite{DBLP:journals/tods/ShashaG88} for abstracting some of the reasoning.

Some works~\cite[e.g.][]{conf/tacas/AbdullaHHJR13,conf/cav/AmitRRSY07} attempt at more automatic verification of concurrent data structures.
However, they apply in cases where the linearization point of every invocation is \emph{fixed} to a particular statement in the code.
 This is not the case in the algorithms considered in this paper where for instance, the linearization point of \code{contains(k)} invocations is not fixed. Generic reductions of linearizability to assertion checking~\cite[e.g.][]{conf/esop/BouajjaniEEH13,DBLP:conf/icalp/BouajjaniEEH15,DBLP:conf/cav/BouajjaniEEM17,DBLP:conf/pldi/LiangF13,conf/concur/HenzingerSV13,DBLP:conf/cav/Vafeiadis10,DBLP:conf/cav/ZhuPJ15} apply also to algorithms with non-fixed linearization points, but they do not provide a systematic methodology for proving the assertions, which is the main focus of our paper.  %
\section{Conclusion}
\label{sec:conclusion}
In this paper we have presented a simple and effective method for proving traversal correctness and showed its applicability to several complex concurrent search data structures. Our main observation is that proving traversal correctness is possible by analyzing the effect of writes on (static) reachability, guaranteeing a consistent extension of the traversal in each point, while relying on the local nature of the decisions made by traversals in each step.
In a sense, this result demonstrates, surprisingly, that extremely sophisticated concurrency techniques can be tamed using general, comprehensible principles. We hope that this can direct exploration of new algorithms in the design space.
Moving forward to even more intricate data structures, it would be useful to explore general concepts for elements of correctness beyond traversals, such as synchronization patterns in lock-free algorithms.

\begin{changebar}
We have demonstrated the applicability of our proof framework in simplifying ``pen and paper'' proofs. Leveraging our proof argument in a mechanized proof as important future work. %
The technique of~\citet{DBLP:conf/pldi/KrishnaPSW} seems a promising starting point. It does not currently handle unfixed linearization points, which are required in the algorithms we consider. It will also be interesting to see how our proof technique and its decomposition to traversal correctness will interact with new ideas in concurrent separation logics, such as the use prophecy variables in Hoare-style proofs~\cite{DBLP:journals/pacmpl/JungLPRTDJ20}, to successfully mechanize the proofs of the challenging algorithms we consider.
Such mechanization would also need to tackle the need to reason about reachability invariants. Our experience is that the specific forms of reasoning required by applications of our framework usually benefit from the local nature of the modifications. For example, calculating which locations suffered a $k$-reachability reduction is typically obtained from the premise that the location of modification is $k$-reachable, sometimes employing a few relatively simple invariants (e.g. that a list is sorted). Translating these arguments into a mechanized proof will be interesting future work.
\end{changebar}  
\section*{Acknowledgments}
We thank the anonymous referees for their helpful comments.
This research was partially supported by the European Union's Horizon 2020 research and innovation program (grant agreement No.\ 678177, 724464, and 759102), the Spanish MICINN project BOSCO (PGC2018-102210-B-I00), the United States-Israel Binational Science Foundation (BSF) grant No. 2016260, the Len Blavatnik and the Blavatnik Family foundation, the Blavatnik Interdisciplinary Cyber Research Center at Tel Aviv University, the Pazy Foundation, and the Israel Science Foundation (ISF) grant No. 1996/18, 2005/17, and 1810/18.

\clearpage
\bibliography{refs}

\iflong
\clearpage
\appendix

\section{Linearizability of the Logical Ordering Tree: Addendum}
\label{sec:lo-appendix}

\begin{figure}[H]
\begin{framed}
\justify

We consider the LO tree operations return values, and leverage the assertions in Figure~\ref{Fi:LO-code} in showing that they correspond to taking effect on the abstract set.

\vspace{7pt}%
\noindent\textbf{Case 1:\ }
When ${\tt contains}(k)$ returns {\rm false} at
line~\ref{Ln:lo-ContainsRetFalseBigger}, $\Past{\QSReachXK{x}{k}} \land
 \QXK[>]{x}{k}$ holds. The {\tt key} field of every node is
immutable, so
$\Past{\QSReachXK{x}{k} \land \QXK[>]{x}{k}}$ holds too
\begin{changebar}
(so the reachability property and the property of the key have held simultaneously).
\end{changebar}
Therefore, during the
execution of ${\tt contains}(k)$ there is a state $\state$ such that $\QSReachXK{x}{k} \land \QXK[>]{x}{k}$ holds of it.
Since $\QSReachXK{x}{k}$ asserts reachability in the \emph{sorted} linked list, we get that
$\state \models \lnot \exists x' \ldotp
\QSReachXK{x'}{k} \land \QXK{x'}{k} \land \QXR[\neg]{x'}$ holds. We conclude
that $k\notin \repfunc(\state)$. %
\vspace{7pt}%
\noindent\textbf{Case 2:\ }
When ${\tt contains}(k)$ returns {\rm false} at
line~\ref{Ln:lo-ContainsRetFalseRem}, $\Past{\QSReachXK{x}{k} \land \QXR{x}}
\land \QXK{x}{k}$ holds. The {\tt key} field of every node is immutable, so
$\Past{\QSReachXK{x}{k} \land \QXR{x} \land \QXK{x}{k} }$ holds too.
Therefore, during the
execution of ${\tt contains}(k)$ there is a state $\state$ such that $\QSReachXK{x}{k} \land \QXR{x} \land \QXK{x}{k}$ holds of it.
By definition of $\QSReachXK{x}{k}$, there is no other $x'$ such
that both $\QSReachXK{x'}{k}$ and $\QXK{x'}{k}$ hold of the state $\state$. We
get that $\state \models \lnot \exists x' \ldotp
\QSReachXK{x'}{k} \land \QXK{x'}{k} \land \QXR[\neg]{x'}$ holds. We conclude that
$k \notin \repfunc(\state)$. %
\vspace{7pt}%
\noindent\textbf{Case 3:\ }
When ${\tt contains}(k)$ returns {\rm true} at
line~\ref{Ln:lo-ContainsRetTrue}, $\Past{\QSReachXK{x}{k} \land \lnot\QXR{x}}
\land \QXK{x}{k}$ holds. The {\tt key} field of every node is
immutable, so $\Past{\QSReachXK{x}{k}
\land \lnot\QXR{x} \land \QXK{x}{k} }$ holds too. Therefore,
during the execution of ${\tt contains}(k)$ there is a state $\state$ such
that $\state \models \exists
x \ldotp \QSReachXK{x}{k} \land \QXK{x}{k} \land \QXR[\neg]{x}$ holds.
We conclude that $k \in \repfunc(\state)$. %
\end{framed}
    \caption{\footnotesize Proving linearizability from the assertions of the LO tree (part 1/3, $\code{contains}$).}
    \label{lo:assertions-imply-linearizability-contains}
\end{figure}

\begin{figure}[H]
\begin{framed}
\justify
There are only two ways the {\tt insert} operation can take effect. %

\noindent\textbf{Case 4:\ }
When ${\tt insert}(k)$ returns {\rm false} at line~\ref{Ln:lo-insert-ret-false},
$\QSReachXK{s}{k} \land \QXK{s}{k} \land \QXR[\neg]{x}$ holds of the current state $\state$.
It is easy to see that $k \in \repfunc(\state)$, so the abstract set
already contains $k$. %
\vspace{7pt}
\noindent\textbf{Case 5:\ }
When ${\tt insert}(k)$ returns {\rm true} at the end of the operation,
we let $\state$ and $\state'$ be the states before and after
line~\ref{Ln:lo-insert-write-previous}. Since $\state$ satisfies the assertion
at line~\ref{Ln:lo-insert-write}, $\QSReachXK{p}{k} \land
\QXR[\neg]{p} \land\ p.\fnext=s \land k \in (p.\fkey,s.\fkey) \land \QXK{n}{k}
\land \QXR[\neg]{n} \land n.\fnext=s$ holds of $\state$.
It is easy to see that
$\Past{\QSReachXK{s}{k}} \land \QXK[>]{s}{k}$ holds.
Analogously to Case~1, we can conclude
that $k\notin \repfunc(\state)$.
Moreover, since $\QXK{n}{k} \land
\QXR[\neg]{n}$ holds of $\state$, as a result of
line~\ref{Ln:lo-insert-write-previous}, we get that $\state' \models \exists n
\ldotp \QSReachXK{n}{k} \land \QXK{n}{k} \land \QXR[\neg]{n}$ holds and,
therefore, $k \in \repfunc(\state')$. %

It remains to show that for each $k' \neq k$, that $k' \in \repfunc(\state)$ iff
$k' \in \repfunc(\state')$. It is easy to see that ${\tt insert}(k)$ does not affect reachability of nodes preceding $n$ via $\fnext$-links, and line~\ref{Ln:lo-insert-write-previous} ensures that subsequent nodes also remain reachable.
Overall, $\repfunc(\state') = \repfunc(\state) \cup \set{k}$ and $\repfunc(\state) \neq \repfunc(\state')$.

\end{framed}
    \caption{\footnotesize Proving linearizability from the assertions of the LO tree (part 2/3, $\code{insert}$).}
    \label{lo:assertions-imply-linearizability-insert}
\end{figure}

\begin{figure}[H]
\begin{framed}
\justify
There are only two ways the {\tt delete} operation can take effect. %

\noindent\textbf{Case 6:\ }
When ${\tt delete}(k)$ returns {\rm false} at line~\ref{Ln:lo-remove-ret-false},
$\QSReachXK{s}{k} \land \QXK[>]{s}{k}$ holds of the current state
$\state$. Since $\QSReachXK{s}{k}$ asserts reachability in the \emph{sorted}
linked list, we get that $\state \models \lnot \exists x' \ldotp
\QSReachXK{x'}{k} \land \QXK{x'}{k} \land \QXR[\neg]{x'}$ holds.
We conclude that $k \notin \repfunc(\state)$. %
\vspace{7pt}
\noindent\textbf{Case 7:\ }
When ${\tt delete}(k)$ returns {\rm true}, we let $\state$ and $\state'$
be the states before and after line~\ref{Ln:lo-delete-write-rem-mark}. Since
$\state$ satisfies the assertion at line~\ref{Ln:lo-delete-write-rem},
$\QSReachXK{s}{k} \land \QXK{s}{k} \land \QXR[\neg]{s}$ holds of $\state$, so we
immediately get $k \in \repfunc(\state)$. Moreover, once
line~\ref{Ln:lo-delete-write-rem-mark} marks $s$ as removed, we get that 
$\QSReachXK{s}{k} \land \QXK{s}{k} \land \QXR{s}$ holds of $\state'$.
Analogously to Case~2, we can conclude $k \notin \repfunc(\state')$.

It remains to show that for each $k' \neq k$, that $k' \in \repfunc(\state)$ iff
$k' \in \repfunc(\state')$. It is easy to see that ${\tt delete}(k)$ does not affect reachability of nodes preceding $s$ via $\fnext$-links, and~\cref{Ln:lo-delete-write-previous} ensures that subsequent nodes remain reachable.
Overall, $\repfunc(\state') = \repfunc(\state) \setminus \set{k}$ and $\repfunc(\state) \neq \repfunc(\state')$.

\end{framed}
    \caption{\footnotesize Proving linearizability from the assertions of the LO tree (part 3/3, $\code{delete}$).}
    \label{lo:assertions-imply-linearizability-delete}
\end{figure}

\begin{figure}[H]
\begin{framed}
\raggedright

The assertions that do not concern traversals and $\{\Past{P}\}$ properties 
are easy to prove using the following reasoning: %
\begin{itemize}
	\item The \emph{immutability of keys} ensures properties such as $\QXK[>]{x}{k}$, $k \in (p.\fkey,s.\fkey]$, etc.\ in~\cref{Ln:lo-ContainsRetFalseBigger,Ln:lo-ContainsReadRem,Ln:lo-delete-reachable-now,Ln:lo-remove-ret-false,Ln:lo-delete-write-rem,Ln:lo-delete-write} etc. %

	\item Reading fields under \emph{the protection of a lock} ensures properties concerning the $\frem$ field, such as $\QXR[\neg]{p}$ in~\cref{Ln:lo-insert-range}, %
	and properties concerning $\fnext$ such as $p.\fnext=s$ in~\cref{Ln:lo-delete-range}, $p.\fnext=s$, $s.\fnext=y$ in~\cref{Ln:lo-delete-write}, etc.
	
	\item Reasoning about interleavings of the lock-protected critical sections to establish \emph{simple inductive invariants} about the data structure. 
	That when a node
	$p$ is written to by update operations, then $\QSReachXK{p}{k}$ holds (\cref{Ln:lo-insert-reachable-now,Ln:lo-delete-reachable-now}), follows from two invariants of the successors list:
			\begin{itemize} 
				\item If a previously reachable node $p$ is locked and $\QXR[\neg]{p}$ then $\QSReachX{p}$. This invariant is established by considering the possible interleavings of writes, and noting that a node is marked (\cref{Ln:lo-delete-write-rem}) before it is made unreachable in the successors list (\cref{Ln:lo-delete-write-previous}).
			 	\item The successors list is sorted, which follows from the assertions in~\cref{Ln:lo-insert-write,Ln:lo-delete-write}. This implies that if in addition to $\QSReachX{p}$ also $\QXK[\geq]{p}{k}$, then $\QSReachXK{p}{k}$.
			 \end{itemize}
			The same reasoning applies also to $\QSReachXK{s}{k}$ in~\cref{Ln:lo-remove-ret-false,Ln:lo-delete-write-rem,Ln:lo-insert-ret-false}.

			Another invariant justifies $\QXR[\neg]{s}$ in~\cref{Ln:lo-insert-ret-false-unmarked}. The invariant is that every node $s$ that is not locked as part of its deletion such that $\QSReachXK{s}{k}$ is necessarily unmarked, which holds because only $\code{delete}$ removes a node from the successors list, and it marks the node before releasing the lock.

			The proof of these invariants can rely on the assertions by induction (similar to~\Cref{sec:circular}).
\end{itemize} 
Overall, none of these assertions require complex reasoning about the interference of writes with the sequence of reads a traversal performs.

\end{framed}

\caption{\footnotesize Proving non-traversal assertions of the LO tree.}
\label{lo:proving-assertions-illustrations}
\end{figure}

\section{Contention-Friendly Tree with Backtracking}
\label{sec:cf-backtrack}

The code of this algorithm with its assertions, based on~\cite{FeldmanE0RS18}, appears in~\Cref{Fi:CF-BT}. The difference from the algorithm in~\cite{FeldmanE0RS18} is \emph{backtracking}, implemented by pointing from a removed node to its parent in~\crefrange{cft-code:remove-backtrack-start}{cft-code:remove-backtrack-end} (see~\Cref{Fi:Backtracking} for an illustration).

\para{Linearizability and assertions}
Except for traversal correctness, the linearizability proof of this example is exactly as in~\cite{FeldmanE0RS18}. The proof of traversal correctness is different, because the two conditions of the local view argument are violated:
\begin{inparaenum}
	\item the traversal may observe cycles during its run, through the backtracking pointers, which violates temporal acyclicity;
	\item locations are modified even after they are no longer reachable, violating preservation.
\end{inparaenum}

\para{$k$-search paths and $\extend{k}$}
See~\Cref{sec:cf-short} for the definition of $\extend{k}$ and $\QReachXK{\cdot}{k}$.
It follows directly from the definition that $\sigma \models \QReachXK{\ell}{k} \land \extend{k}(\ell,\sigma(\ell),\ell') \implies \sigma \models \QReachXK{\ell'}{k}$.

\begin{figure}[t]
\centering
\begin{tabular}[t]{p{5.6cm}p{5.0cm}p{4cm}}
\begin{lstlisting}
type N 
  int key  
  N left, right 
  bool del,rem

N root$\leftarrow$new N($\infty$);

N$\times$N locate(int k)
  x,y$\leftarrow$root
  while (y$\neq$null $\land$ y$.$key$\neq$k)
    x$\leftarrow$y
    if (x$.$key<k)
      y$\leftarrow$x$.$right
    else
      y$\leftarrow$x$.$left
 $
  \color{assertion}
   {\begin{array}{l}
   \{\PReachXK{x}{k}  \land \PReachXK{y}{k}   \\
    \ \ \land\ \QXK[\neq]{x}{k} \land y \neq \NULL \implies  \QXK{y}{k}\}
  \end{array}
  }
  \label{Ln:RetLocate}
  $  
  return (x,y)

bool contains(int k)  
  (_,y)$\leftarrow$locate(k) 
  if (y = null) 
    $  
    \color{assertion}{\{ \Past{\QReachXK{\NULL}{k}} \} \label{Ln:ContainsRetFalseNULL}}
    $
    return false
  $\color{assertion}{\{ \PReachXK{y}{k} \}  \label{Ln:ContainsBeforeReadDEL}}$
  if (y.del)
    $  
    \color{assertion}{\{ \Past{\QReachXK{y}{k} \land \QXD{y}} \land \QXK{y}{k} \}} \label{Ln:ContainsRetFalseDEL}
    $
    return false
  $  
    \color{assertion}{\{ \Past{\QReachXK{y}{k} \land \neg \QXD{y}}  \land  \QXK{y}{k}  \}}
    \label{Ln:ContainsRetTrue}
  $ 
  return true  
\end{lstlisting}

&

\begin{lstlisting}
bool delete(int k) 
  (_,y)$\leftarrow$locate(k) 
  if (y = null)
    $  
      \color{assertion}{\{ \Past{\QReachXK{\NULL}{k}} \} \label{Ln:DeleteYNull}}
    $  
    return false $\label{Ln:DeleteRetFalseNull}$
  lock(y)
  if (y$.$rem) restart
  ret $\leftarrow$ $\neg$y$.$del
  $
       \color{assertion}{\{\QReachXK{y}{k} \land \QXK[=]{y}{k} \land \QXR[\neg]{y} \}} \label{Ln:DeleteSetDel}%
  $
  y$.$del$\leftarrow$true
  return ret                $\label{Ln:DeleteRetNotDel}$

bool insert(int k)  
  (x,y)$\leftarrow$locate(k)
  $
  %
  \color{assertion}{\{\PReachXK{x}{k}\land \QXK[\neq]{x}{k} \}} \label{Ln:InsertLocate}
  $  
  if (y$\neq$null)
    $
        \color{assertion}{\{\PReachXK{y}{k}\land \QXK[=]{y}{k}\}}
    $
    lock(y)
    if (y$.$rem) restart
    ret $\leftarrow$ y$.$del
    $
        \color{assertion}{\{\QReachXK{y}{k}\land \QXK[=]{y}{k} \land \QXR[\neg]{y}\}} \label{Ln:InsertSetDel} 
    $
    y.del$\leftarrow$false 
    return ret $ \label{Ln:InsertRetDel} $
  lock(x)
  if (x$.$rem) restart
  if (k < x$.$key  $\land$ x.left$=$null)
    $
    \color{assertion}{
    \begin{array}{l}
    \{\QReachXK{x}{k} \land \QXR[\neg]{x}   \\
      \ \ \land\  k < x.key \land x.\mathit{left}=\text{null} \}
    \end{array}}
    \label{Ln:InsertInsertLeft}
    $
    x.left $\leftarrow$ new N(k)
  else if (k > x$.$key  $\land$ x.right$=$null)
    $
    \color{assertion}{
    \begin{array}{l}
    \{\QReachXK{x}{k} \land \QXR[\neg]{x}   \\
      \ \ \land\  k > x.key \land x.\mathit{right}=\text{null} \}
    \end{array}}
    \label{Ln:InsertInsertRight}
    $         
    x.right $\leftarrow$ new N(k)  
  else restart
  return true
\end{lstlisting}

&

\begin{lstlisting}
removeRight()  
  (z,_) $\leftarrow$ locate(*)
  lock(z)
  y $\leftarrow$ z.right
  if(y=null $\lor$ z.rem) 
    return
  lock(y)
  if ($\neg$y.del)
    return
  if (y.left$=$null)
    $\label{Ln:RemoveRight} $ z.right $\leftarrow$ y.right
  else if (y.right$=$null)
    $\label{Ln:RemoveLeft} $ $\label{cft-code:remove-bypass}$z.right $\leftarrow$ y.left
  else
    return
  $\label{cft-code:remove-backtrack-start}$@*\bf{y.right}*@ $\leftarrow$ @*\bf{z}*@
  $\label{cft-code:remove-backtrack-end}$@*\bf{y.left}*@ $\leftarrow$ @*\bf{z}*@
  $\label{cft-code:remove-rem-y}$y.rem $\leftarrow$ true

rotateRightLeft()  
  (p,_) $\leftarrow$ locate(*)
  lock(p)
  y $\leftarrow$ p$.$left
  if(y=null $\lor$ p.rem) 
    return
  lock(y)
  x $\leftarrow$ y$.$left
  if(x=null)
    return
  lock(x)
  z $\leftarrow$ duplicate(y)
  z.left $\leftarrow$ x.right   $\label{Ln:RotateFresh}$ 
  x.right $\leftarrow$ z        $\label{Ln:RotateXRight}$  
  p.left $\leftarrow$ x         $\label{Ln:RotatePLeft}$  $\label{cft-code:rotate-change-parent}$
  y.rem $\leftarrow$ true       $\label{cft-code:rotate-rem-y}$
\end{lstlisting}
\end{tabular} 
\caption{\label{Fi:CF-BT}
Contention-Friendly Tree with backtracking. For brevity, \textbf{unlock} operations are omitted; a procedure releases all the locks it acquired when it terminates or \textbf{restart}s. $*$ denotes an arbitrary key.
}
\end{figure}

\para{Traversals}
It is immediate that the traversal performed by \code{locate} follows extend.

\para{Forepassed}
Deducing assertions of the form $\PReachXK{x}{k}$ (\cref{Ln:RetLocate}).

Consider a traversal in some execution, the effect of concurrent writes on reducing $k$-reachability, and later writes that modify locations whose reachability has been reduced.
We ignore (for now) writes that modify $\frem,\fdel$ fields, since they do not affect reachability.

\begin{itemize}
	\item \textbf{Insert}: Reduces the reachability of $\NULL$ for the key inserted, but $\NULL$ is immutable (\immu). The reachability of other locations is not decreased (\still).

	\item \textbf{Rotate} (see~\Cref{Fi:CF-Rotation}): Writes that modify newly-allocated nodes (such as~\cref{Ln:RotateFresh}) satisfy strong forepassed vacuously, as these were never reachable (\still). \Cref{Ln:RotateXRight} does not decrease $k$-reachability (it increases it for \code{z}) (\still). \Cref{Ln:RotatePLeft} reduces the reachability of \code{y}, but the node has its $\frem$ field set before the procedure releases the lock; later operations would refrain from writing to this object by checking that $\neg\frem$ (\immu).

	\item \textbf{Remove}: \Cref{Ln:RemoveRight} reduces the reachability of \code{y} only (because $\XleftY{\code{y}}{\NULL}$). The only further writes possible to this location are \cref{cft-code:remove-backtrack-start,cft-code:remove-backtrack-end} of the same operation; later operations will refrain from writing due to $\frem$. However, condition (\pastreach) holds: there is a point in time during the traversal where $\code{z}$, to which these writes point, have been reachable. Specifically, let $k$ be s.t.\ $\code{y}$ was $k$-reachable before the write in \cref{Ln:RemoveRight}, but not afterwards. Since the search path went through the link $\code{z}.\fright$, necessarily we had $\QReachXK{\code{z}}{k}$ at the moment before the write in \cref{Ln:RemoveRight}, and the premise is that this write is concurrent to the traversal, so this provides the required moment.

	The case of \cref{Ln:RemoveLeft} is dual to \cref{Ln:RemoveRight}.
\end{itemize}

\para{Reachability + $\fdel$}
Deducing assertions of the form $\Past{\QReachXK{y}{k} \land \QXD{y}},\Past{\QReachXK{y}{k} \land \neg\QXD{y}}$ (\cref{Ln:ContainsRetFalseDEL,Ln:ContainsRetTrue}) through the extension of our framework to reachability with another field (\Cref{sec:reachability-with-field}). To establish these assertions we need to show that after the reachability of $y.\fkey$ is reduced, $y.\fdel$ is not modified. This holds because, as above, when the reachability of a node is reduced, it is marked before releasing the lock, and writes to $\fdel$ first check that the node is unmarked (\cref{Ln:DeleteSetDel}). Note that these are \emph{not} the writes that, in this algorithm, modify a node after its reachability is reduced.
\section{\textsc{Citrus}}\label{app:citrus}

\newcommand{\Since}[2]{#1\mathrel{\mathcal{S}}#2}
\newcommand{\reachleft}[3][\in]{#3#1\mathit{left}^{*}(#2)}

\begin{figure}
\hspace{-3ex}
\begin{tabular}[t]{p{4cm}p{5.3cm}p{5.8cm}}
\begin{lstlisting}
type N 
  int key, data
  N left, right 
  int tag
  bool rem 
  $\mathbf{ghost~int}~\ghost{\code{key}}$

N root$\leftarrow$new N($-1$);
root$.$right$\leftarrow$new N($\infty$);

(bool, N) hasNullChild(N y)
   if (y$.$left$=$null)
      return (y$.$right,true)
   if (y$.$right$=$null)
      return (y$.$left,true)
   return (null,false)

setChild(x,k,t)
  if (k<x$.$key)
    x$.$left$\leftarrow$t 
    if (t$=$null)
      x.tag$\leftarrow$x.tag+1
  else
    x$.$right$\leftarrow$t 
 
N$\times$int$\times$N locate(int k)
  x,y$\leftarrow$root
   
  rcu_read_lock
  while (y$\neq$null $\land$ y$.$key$\neq$k) $\label{LnCitrus:TraverseStart}$
    x$\leftarrow$y
    if (y$.$key<k)
      y$\leftarrow$x$.$right
    else
      y$\leftarrow$x$.$left
  tag$\leftarrow$x$.$tag $\label{LnCitrus:TraverseEnd}$
  rcu_read_unlock
  $
  \begin{array}{l}
    \{\Past{\QReachXKgap{x}{k} \land x.\mathit{tag}=\mathit{tag}} \land \\
    \quad \PReachXKgap{y}{k}   \land \QXK[\neq]{x}{k} \land \\
    \quad \neg \isNull(y) \implies \QXK{y}{k}
    \}
  \end{array} \label{LnCitrus:RetLocate}$
  return (x,tag,y)
\end{lstlisting}

&

\begin{lstlisting}
bool delete(int k) 
  (x,_,y)$\leftarrow$locate(k) 
  if (y$=$null)
    $  
      \{ \Past{\QReachXK{\NULL}{k}} \} \label{LnCitrus:DeleteYNull}
    $  
    return false $\label{LnCitrus:DeleteRetFalseNull}$
  $\{y.\mathit{key}=k\} \label{LnCitrus:DeleteFoundKey}$
  lock(x)
  lock(y)
  if (y$.$rem $\lor$ x$.$rem  $\lor$  $\XparentY[\nin]{x}{y}$)
    restart
  $\{\QReachXK{y}{k} \land \QReachXK{x}{k}\} \label{LnCitrus:DeleteReachableNow}$
  (b,otherChild)$\leftarrow$hasNullChild(y)
  if(b)
    y.rem$\leftarrow$true $\label{LnCitrus:BypassMarkY}$
    $
     \begin{array}{l}
       \{\QReachXK{x}{k} \land \XparentY{x}{y} \land {}\\
       \quad \{y.\fleft,y.\fright\}=\{\NULL,\mathit{otherChild}\}\} 
     \end{array}
     \label{LnCitrus:DeleteOne}
    $
    setChild(x,y$.$key,otherChild) $\label{LnCitrus:DeleteOneBypass}$
    return true
  
  ps$\leftarrow$y; cs$\leftarrow$y.right; nx$\leftarrow$cs.left $\label{LnCitrus:SuccTraversalStart}$
  while(nx$\neq$null)
    ps$\leftarrow$cs;  cs$\leftarrow$nx; nx$\leftarrow$cs$.$left $\label{LnCitrus:SuccTraversalEnd}$
  $\{\PReachXKsucc{ps}{k} \land \PReachXKsucc{cs}{k}\}$ $\label{LnCitrus:SuccTraversalPast}$
  
  lock(ps)
  lock(cs) $\label{LnCitrus:UsefulLock1}$
  if (ps$.$rem $\lor$ cs$.$rem  $\lor$
      (ps$\neq$y $\land$ ps$.$left$\neq$cs) $\lor$ 
      cs.left$\neq$null) 
    restart 
  $\{\QReachXKsucc{ps}{k} \land \QReachXKsucc{cs}{k} \land \XleftY{cs}{\NULL}\}$ $\label{LnCitrus:SuccTraversalNow} \label{LnCitrus:SuccEndNow}$

  $
    \{\forall k'.\, (\QReachXK{y}{k'}  \land k < k' \land \QXKR[<]{cs}{k'}) \implies \QReachXK{\NULL}{k'}\}
    \label{LnCitrus:DeleteWithCopySuccessor} \label{LnCitrus:DeleteSuccessorIntervalEmpty}
  $
  w $\leftarrow$ duplicate(y)   
  w.key $\leftarrow$ cs.key     $\fbox{w.\ghost{\code{key}} \ensuremath{\leftarrow} \code{k}}$ $\label{LnCitrus:DeleteSetWKey}$
  w.data $\leftarrow$ cs.data   $\label{LnCitrus:DeleteSetWData}$
  y.rem$\leftarrow$true        $\label{LnCitrus:DeleteSetYMark}$
  lock(w)
 $
  \begin{array}{l}
  \{\QXK[>]{w}{k} \land (\forall k'.\, \QReachXK{y}{k'}  \land \QXK[<]{w}{k'}) \implies \QReachXYK{w}{y.\iright}{k'}) 
    \land {}\\
   \qquad\qquad\qquad(\forall k'.\, \QReachXK{y}{k'} \land \QXKR[<]{w}{k'}) \implies \QReachXYK{w}{y.\ileft}{k'}) 
    \land {}\\
   \qquad\qquad\qquad(\forall k'.\, \QReachXK{y}{k'} \land k < k' \land \QXKR[<]{w}{k'}) \implies (\QReachXK{y.\iright}{k'} \land\QReachXYK{w}{y.\ileft}{k'}) 
  \} 
  \end{array}
  \label{LnCitrus:DeleteWithCopy}
  $ 
  setChild(x,y$.$key,w) $\label{LnCitrus:DeleteWithCopyBypass}$
  $\{\forall k'.\,(\QReachXK{x}{k'} \land ((\QXKR[<]{x}{k} \land \QXKR[<]{x}{k'}) \lor (\QXK[<]{x}{k} \land \QXK[<]{x}{k'}))) 
  \iff \QReachXK{w}{k'} \}$
  synchronize_rcu $\label{LnCitrus:SyncRcu}$
  cs.rem$\leftarrow$true
  if (ps$=$y)
    $
     \{
      \QReachXK{w}{cs.\ikey} \land w.\iright=cs \land w.\ikey = cs.\ikey \land  \XparentY{ps}{cs}
      %
      %
     \} 
     \label{LnCitrus:DeleteRemoveCSRight}
    $ 
    $
     \{
      \forall k'.\,\QReachXK{cs.\iright}{k'} \implies  (\QReachXK{w}{k'}  \land \QXK[<]{w}{k'}) 
     \} 
     \label{LnCitrus:DeleteRemoveCSRightStructural}
    $ 
    w$.$right$\leftarrow$cs$.$right $\fbox{w.\ghost{\code{key}} \ensuremath{\leftarrow} \code{w.key}}$ $\label{LnCitrus:DeleteRemoveCSRightBypass}$
  else  
    $
     \{
      \QReachXK{w}{cs.\ikey} \land ps.\ileft = cs \land cs.\ileft = \NULL \land w.\ikey = cs.\ikey 
      %
      %
     \} 
     \label{LnCitrus:DeleteRemoveCSLeft}
    $ 
    $
     \{
       \forall k'.\,\QReachXK{cs.\iright}{k'} \implies  (\QReachXK{ps}{k'}  \land \QXKR[<]{ps}{k'})  
      \} 
      \label{LnCitrus:DeleteRemoveCSLeftStructural}
    $ 
    ps$.$left$\leftarrow$cs$.$right  $\fbox{w.\ghost{\code{key}} \ensuremath{\leftarrow} \code{w.key}}$ $\label{LnCitrus:DeleteRemoveCSLeftBypass}$  
    if (cs$.$right $=$ $\NULL$)
      x.tag$\leftarrow$x.tag+1 $\label{LnCitrus:DeleteRemoveCSLeftIncrementTag}$
  return true                $\label{LnCitrus:DeleteRetTrue}$
\end{lstlisting}

&

\begin{lstlisting}
bool insert(int k, int d)  
  (x,tag,y)$\leftarrow$locate(k)
  if (y$\neq$null)
    $
        \{\PReachXK{y}{k}\land \QXK[=]{y}{k}\} \label{LnCitrus:InsertFalse}
    $
    return false $ \label{LnCitrus:InsertRetDel} $
  lock(x)
  if (x$.$rem) restart   
  $\{\Past{\QReachXKgap{x}{k} \land \QXV{x}{\mathit{tag}}}\}$   
  if (k<x$.$key $\land$ x.left$=$null $\label{LnCitrus:InsertIf}$
      $\land$ x.tag$=$tag)            $\label{LnCitrus:InsertIfTag}$
    $\{\QReachXK{x}{k} \land k<x.\fkey \land x.\ileft=\NULL  \} \label{LnCitrus:InsertLeft}$
    x.left $\leftarrow$ new N(k,d)
  else if (k>x$.$key $\land$  x.right$=$null)
    $
      \{\QReachXK{x}{k} \land k>x.\fkey \land x.\iright=\NULL  \} \label{LnCitrus:InsertRight}
    $      
    x.right $\leftarrow$ new N(k,d)
  else 
    restart
  return true

int$\cup$bool contains(int k)  
  (_,_,y)$\leftarrow$locate(k) 
  if (y$=$null) 
    $  
    \{ \Past{\QReachXK{\NULL}{k}} \} \label{LnCitrus:ContainsRetFalseNULL}
    $
    return false
  $  
    \{ \Past{\QReachXK{y}{k}  \land  \QXK{y}{k}} \} \label{LnCitrus:ContainsRetTrue}
  $ 
  return y$.$data  
\end{lstlisting}

\end{tabular} 
\vspace{-8mm}
\caption{\label{Fi:Citrus}
Citrus tree~\cite{Arbel:2014}. Ghost code is boxed and executed atomically with the preceding statement.
When a procedure terminates or \textbf{restart}s  it releases all the locks it acquired. The predicate $\XparentY{x}{y}$ holds when $\XleftY{x}{y} \lor \XrightY{x}{y}$. 
}
\end{figure} %

\subsection{The Citrus Tree}
\Cref{Fi:Citrus} presents the annotated pseudocode of the \textsc{Citrus} tree~\cite{Arbel:2014}.
The algorithm uses two forms of synchronization: standard locks and RCU locks~\cite{rcu,rcu-thesis,userrcu}.
An RCU lock governs the interaction between reader threads and writer threads in a rather nonstandard way: a reader 
traverses the tree under the protection of  the (non-exclusive) 
\code{rcu\_lock}. Writers synchronize  with   readers by invoking \code{synchronize\_rcu}.
When a writer thread $t$ invokes  \code{synchronize\_rcu} it gets blocked until all the threads which held the \code{rcu\_lock} at the time $t$ called \code{synchronize\_rcu} release  the \code{rcu\_lock}. 
Thus, it is   possible that the execution of the writer is resumed when some threads hold the \code{rcu\_lock}.
However, these threads obtained it while the writer was already blocked on the \code{rcu\_lock}.
To void deadlocks, threads never hold the  \code{rcu\_lock} and a standard lock at the same time.

Every operation starts with a call to \code{locate(k)}, which performs a standard
binary tree search to locate the node with the target
key $k$. The traversal is done under the protection of the RCU lock.
\code{locate(k)}   returns the last link it traverses, $(x,y)$ and the tag of $x$.  
Thus, if $k$ is found, $y.key=k$; if $k$ is not found, $y=\NULL$ and $x$ is the node that would be $k$'s
parent if $k$ were in the tree.
The \code{contains} operation returns immediately after the traversal.
However, \code{insert} and \code{delete} may continue in an attempt to insert, respectively, remove a node.
Arguably, the most interesting aspect of the algorithm is that it can physically remove a node $y$ after deleting it logically (by setting the \code{rem} flag) even if $y$ has two children. As explained in Figure~\ref{fig:citrus:delete},
this is done by creating a new node $w$ with the same key  and value as that of the successor $cs$ of $y$ in the tree and the same children as $y$;  marking $y$ as deleted; and then linking $y$'s parent $x$ to $w$.
At this point, the deleting thread invokes \code{synchronize\_rcu} and when it awakens,  it unlinks $cs$ from its parent $ps$.
The RCU lock is used to ensure that a thread $t$ looking for $cs$'s key would find it even if $t$'s traversal already passed through $y$. Note that the deletion of a node $y$ which has a single child (\cref{LnCitrus:DeleteOneBypass}) does not use the \code{rcu\_lock}.

One of the main challenges the algorithm has to overcome is preventing \code{insert} operations from violating the tree's structural invariants:
Mutation requires holding the node's exclusive lock. 
Thus, there is a gap between the time a thread releases the \code{rcu\_lock} and the time it acquires the standard lock. This gap  may lead, e.g., to  insert a key $k$ to the tree even though the tree already contains a node with that key.
This may happen if after the RCU-protected traversal of the thread $t$ executing an \code{insert(}$k$\code{)} operation and  while it waits for the lock of 
the node $ps$ to which it wishes to link $k$ as a left child, another thread $t'$ inserts a node $cs$ with key $k$ as a child of $ps$ and later on moves $k$ up the tree to facilitate the deletion of another node $y$ with two children (like in Figure~\ref{fig:citrus:delete}). 
As $t$ is not  protected by the RCU, thread $t'$ may unlink $cs$ from $ps$ while $t$ is still waiting for $ps$'s lock.
If $t$ gets the lock at this point, it finds that $ps$ has no left child. Thus, $t$ would continue with the insertion of $k$, violating linearizability since the key $k$ is already included in the key-value map.  
The algorithm  prevents errors due to this ABA problem using \emph{tags}: 
Every time a thread nullifies the  \code{left} field of a node, it increments the node's tag.
The \code{locate(k)} operation returns the last link it traversed $(x,y)$ and the tag of $x$ at some point during the traversal. An insertion attempts to link a node as a left child of $x$  only if $x$ has no left child (\cref{LnCitrus:InsertIf}) and the tag of $x$ has not changed since the time \code{locate(k)} read it (\cref{LnCitrus:InsertIfTag}).
This ensures that  the value of $x$'s left field was null \emph{since} that  time and prevents the aforementioned ABA problem.
Furthermore, it ensures that indeed $k$ should be inserted as a child of $x$.
An insertion of a right child is not affected by such a problem, thus, unlike~\cite{Arbel:2014}, 
we only use a single tag field in every node which counts the number of times the  \code{left}  field of the node was nullified.

\subsection{Proving Linearizability and Structural Invariants}\label{Se:CitrusLin}

Verifying that the citrus tree is linearizable is rather simple  assuming that the  assertions annotating the code hold: For example, by using an \emph{abstraction function}
$\repfunc: H \totalto (\mathbb{N}\partialto\mathbb{N})$ that maps a concrete memory state of the tree, $H$, to
the {\em abstract map} represented by this state, and showing that \code{contain}, \code{insert}, and \code{delete}
manipulate this abstraction according to their specification.
More specifically, we define $\repfunc$ to map $H$ to the set of key-value pairs residing in nodes that are reachable from the root on a valid search path for their key. (Recall that a search path to a key $k$ terminates when it reaches a node with key $k$.)
The aforementioned invariants and the assertions almost immediately imply that for every operation invocation $op$, there exists a state $H$ during
$op$'s execution for which the abstract state $\repfunc(H)$ agrees with $op$'s return value, and so $op$ can
be linearized at $H$.  
We need only make the following observations: 
\begin{compactitem}
\item \code{contains()} and a failed \code{delete()}
or \code{insert()} do not modify the memory, and so can be linearized at the point in time in which the assertions
before their \code{return} statements hold.  
 
\item In the state $H$ in which a successful  \code{insert(k)}  performs a write, the assertions in 
lines~\ref{LnCitrus:InsertLeft} and~\ref{LnCitrus:InsertRight} imply that $k \not\in \repfunc(H)$  ($\QReachXK{\NULL}{k}$) and  
that had $k$ been in the tree it would have been the left ($\QReachXK{x}{k} \land x.\ileft=\NULL$) resp. right child of $x$ ($\QReachXK{x}{k} \land x.\iright=\NULL$).

\item A successful \code{delete(k)} operation may delete a node by executing one of the following write actions:
\begin{compactitem}
\item 
The write in~\cref{LnCitrus:DeleteOneBypass} is justified by  the assertion on line~\ref{LnCitrus:DeleteOne} which ensures that $y$ does not have two children and thus bypassing it by directing its parent to point to $y$'s other child 
removes only $k$ itself from the abstract map.
\item 
The write in~\cref{LnCitrus:DeleteWithCopyBypass} is justified by the assertions on lines~\ref{LnCitrus:DeleteWithCopySuccessor} and~\ref{LnCitrus:DeleteWithCopy} using the same argument as before.
In particular, note that the assertion on \cref{LnCitrus:DeleteWithCopySuccessor} ensures that it bypasses $y$ with a node $w$ which (i) has the same children as $y$ and (ii) whose key $k_w$ is of the successor node $cs$ of $k$ (recall that $k$ is the key of $y$) in the tree.
The latter ensures that the operation does not change any search path except for the one looking for keys 
between $k$ and  $k_w$: 
A search path for a key $k'$ such that $k \leq k'<k_w$ would 
continue to $w$'s left subtree (before it stopped at $y$ or traversed $w$'s right subtree) and 
a search path for a key $k_w$ terminates at $w$ (before the write it reached $cs$).
As  the keys   $k \leq k'<k_w$ are not in the tree, and $k$ was to be removed, the change of the abstract map agrees with 
\code{delete}'s return value. %
\item All other writes performed in \code{delete(k)} operate on freshly allocated nodes which are not reachable, %
and thus do not change the abstract map. 
\end{compactitem} 
\end{compactitem}

Proving that the assertions annotating the code in~\Cref{Fi:Citrus} hold goes hand in hand with the verification of the following invariant: 
\begin{compactitem}
\item all the keys in the left subtree of a node $x$ are strictly smaller than its own key, while  the ones in $x$'s right subtree are greater or equal to it (the weak binary search property descrbied by~\cite{Arbel:2014}),    
\end{compactitem}
and the following temporal property:
\begin{compactitem}
\item if the tree contains nodes $u$ and $v$ such that $v.\ileft=\NULL$ and $v$'s key is the successor of $u$'s key, i.e.,  
$u.\ikey < v.\ikey$ and  no other node in the tree contains a key $k$ such that $u.\ikey < k < v.\ikey$,
then as long as  $v.\ileft$ does not change its value then the tree does not contain any node with a key $k$ such that $u.\ikey < k < v.\ikey$. 
\end{compactitem}
These properties can be established easily using induction and circular reasoning (relying on the correctness of the assertions).

\subsection{Relating ``Weak'' and ``Strong'' Reachability}

As we explained in \Cref{sec:citrus-short}, the traversals in this algorithm (invocations of \code{locate}) are correct w.r.t. the ``weak'' reachability predicate $\QReachXKgap{\cdot}{k}$, which implies that any output $x$ satisfies $\Past{\QReachXKgap{x}{k}}$. 
 The proof of the assertions for 
 \code{contains} (\cref{LnCitrus:ContainsRetFalseNULL,LnCitrus:ContainsRetTrue}), 
 \code{delete} (\cref{LnCitrus:DeleteYNull,LnCitrus:DeleteReachableNow})
 as well as \code{insert} (\cref{LnCitrus:InsertFalse,LnCitrus:InsertLeft,LnCitrus:InsertRight}), needs to deduce the existence of a standard $k$-search path in the tree from a traversal that guarantees $\PReachXKgap{x}{k}$. The following lemmas provide the necessary tools.

The first lemma states that whenever there exists a node with the ghost field \ghost{\code{key}} different from \code{key}, the successor relation it encodes still holds in the subtree.
\begin{lemma}
\label{lem:citrus-succ-subtree}
If $x.\code{key} \neq x.\ghost{\code{key}}$ for some node $x$, then all keys $k \in (x.\ghost{\code{key}},x.\code{key})$ are not present in the sub-tree rooted at $x$, and there exists a node $y\neq x$ such that $\QReachXYK{x}{y}{x.\code{key}} \land \QXK{y}{x.\code{key}}$.
\end{lemma}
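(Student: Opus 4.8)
The plan is to prove the statement as a \emph{structural invariant} of the algorithm, by induction over the writes in the execution, first pinning down exactly when the antecedent $x.\code{key} \neq x.\ghost{\code{key}}$ can hold. The ghost field $\ghost{\code{key}}$ is written at only two program points: it is set to $y$'s key $k$ atomically with linking $x$ to the fresh copy $w$ (\cref{LnCitrus:DeleteWithCopyBypass}, via \cref{LnCitrus:DeleteSetWKey}), and it is reset to $w.\code{key}$ when $cs$ is unlinked (\cref{LnCitrus:DeleteRemoveCSRightBypass,LnCitrus:DeleteRemoveCSLeftBypass}). Hence $x.\code{key} \neq x.\ghost{\code{key}}$ can hold only for a node $x = w$ produced by \code{duplicate} inside a \code{delete} that physically removes a node $y$ with two children, and only in the \emph{window} between these two writes. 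Throughout this window $x.\ghost{\code{key}} = k = y.\code{key}$ and $x.\code{key} = k' = cs.\code{key}$, where $cs$ is the in-tree successor of $y$ located by the loop of \crefrange{LnCitrus:SuccTraversalStart}{LnCitrus:SuccTraversalEnd}, so that $(x.\ghost{\code{key}}, x.\code{key}) = (k, k')$ with $k < k'$. I would establish this characterization once and reuse it for both conjuncts.

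For the first conjunct (no key of $(k,k')$ in $x$'s subtree), I would appeal to the temporal successor invariant proven mutually with the assertions in \refappendix{Se:CitrusLin}. At the moment $w$ is linked, the validation at \cref{LnCitrus:SuccEndNow} together with the emptiness assertion \cref{LnCitrus:DeleteSuccessorIntervalEmpty} guarantees that $cs$ is the successor of $y$, i.e.\ no key in $(k,k')$ is present and $cs.\code{left} = \NULL$. The invariant then states that as long as $cs.\code{left}$ is unchanged, no key in $(k,k')$ enters the tree; since $cs$ is held locked and $cs.\code{left}$ is not written during the window (it is touched only at the window's closing unlink), no such key can appear anywhere in the tree, a fortiori not in the subtree rooted at $x$.

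For the second conjunct I would take $y^{*} = cs$. The copy $w$ inherits $y$'s children (\code{duplicate} followed by \cref{LnCitrus:DeleteSetWKey} onward), so the node $x.\iright$ roots exactly the subtree that held $cs$ as its leftmost node, and $cs$ is neither unlinked nor relocated before the window closes; hence $cs \neq x$ remains in $x$'s right subtree with $cs.\code{key} = k' = x.\code{key}$. By the weak binary-search property every key in $x$'s right subtree is $\ge k'$, and $cs$, being leftmost there, attains the minimum $k'$; therefore the search for $x.\code{key} = k'$ descends from $x$ into the right subtree and then along $\fleft$-pointers down to $cs$, yielding $\QReachXYK{x}{cs}{x.\code{key}}$. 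This provides the required $y^{*} \neq x$ with $\QReachXYK{x}{y^{*}}{x.\code{key}} \land \QXK{y^{*}}{x.\code{key}}$.

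I expect the main obstacle to be the bookkeeping that sustains the window hypotheses against concurrent interference: that throughout the lifetime of the discrepancy $\ghost{\code{key}} \neq \code{key}$, the successor node $cs$ remains present, remains the leftmost node of $x$'s right subtree with $cs.\code{left} = \NULL$, and that no key is inserted into $(k,k')$---all of which must be argued from the locks held on $cs$ and $ps$, the RCU discipline, and the weak binary-search and successor invariants, which are themselves proven by the same mutual induction. A secondary subtlety is the reachability step at $x$ itself, where $x.\code{key}$ equals the search key $k'$; this is handled by the weak-BST placement of duplicate keys in the right subtree, which keeps $cs$ on the $k'$-search path strictly below $x$.
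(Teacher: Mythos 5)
Your proof is correct and takes essentially the same route as the paper's: the paper's one-sentence argument asserts exactly the three facts you establish in detail, namely that the discrepancy $x.\code{key} \neq x.\ghost{\code{key}}$ exists only in the window of a two-child removal, that $x.\code{key}$ is then the smallest key above $x.\ghost{\code{key}}$ in $x$'s subtree (which you derive from the temporal successor invariant of \refappendix{Se:CitrusLin}), and that the witness node is the locked, still-present successor $cs$. The one substantive deviation is your justification of the step at $x$ itself: under the paper's standard search-path rule a search for $x.\code{key}$ stops at a node whose key equals the search key (ties do not descend right), so weak-BST placement of duplicates cannot be what keeps $cs$ on the path; rather, the step from $x$ into its right subtree is licensed precisely by the lemma's own hypothesis $x.\ghost{\code{key}} \neq x.\code{key}$, via the case $(k=m \land \ghost{m}\neq m)$ of the weak-link relation in \Cref{eq:citrus-reachability-def}.
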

\begin{proof}
As long as $x.\code{key} \neq x.\ghost{\code{key}}$, $x.\code{key}$ is the smallest key bigger than $\ghost{\code{key}}$ in the sub-tree rooted at $x$, and $x$ is a copy of a node $cs$ storing the key $x.\code{key}$ which is locked and still present in the sub-tree rooted at $x$ (see Figure~\ref{fig:citrus:delete}).
\end{proof}

The following lemma is useful in reasoning about the possible trajectories of ``weak'' $k$-reachability paths: it says that they can deviate from standard $k$-reachability paths in a single node. Let $\mathit{weakLink}$ denote the relation between locations in a state $\sigma$ defined in~\Cref{sec:citrus-short} that we recall below:
\begin{align*}
\sigma\models\mathit{weakLink}(\ell,\ell') \mbox{ iff }&\mbox{$\ell = (o,\code{key})$, $\sigma(o,\code{key})=m$, $\sigma((o,\ghost{\code{key}}))=\ghost{m}$, and }\nonumber \\
&\hspace{1cm}\ell'=(o,\code{right})\mbox{ if $(k>\ghost{m} \land k\neq m) \lor (k=m \land \ghost{m}\neq m), and$}\\
&\hspace{1cm}\ell'=(o,\code{left})\mbox{ if $k<m$}\nonumber
\end{align*}

\begin{lemma}
\label{lem:citrus-ghost-path-breakdown}
If $\sigma \models \QReachXKgap{x}{k}$, then there are locations $\ell,\ell'$ s.t.\ $\sigma \models \QReachXK{\ell}{k}\land \mathit{weakLink}(\ell,\ell') \land \QReachXYK{\ell'}{x}{k}$.
\end{lemma}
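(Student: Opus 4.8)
The plan is to unfold the weak-reachability witness into an explicit sequence of locations, isolate the single key-field at which it departs from ordinary binary search, and argue that the path is a genuine standard $k$-search path both before and after that field. First I would expand $\sigma \models \QReachXKgap{x}{k}$ into a sequence $\ell_1 = (\rootobj,\fkey),\ldots,\ell_{n+1} = (x,\fkey)$ satisfying the clauses of \Cref{eq:citrus-reachability-def}, which alternates between key-fields and child-fields. The steps out of child-fields (from $(o,\code{left})$ or $(o,\code{right})$ to the key-field of the pointed node) are literally identical in the weak and the standard notions, so any discrepancy can only occur at a key-field $(o,\code{key})$. Writing $m=\sigma(o,\code{key})$ and $\ghost{m}=\sigma(o,\ghost{\code{key}})$ and comparing the weak branching rule with standard binary search, I would observe that the two rules agree unless $\ghost{m} < k \le m$ (which forces $\ghost{m}\neq m$, i.e. $o$ is one of the successor-copies $w$), and that in this case the weak rule goes right while the standard rule would go left, when $k<m$, or stop, when $k=m$. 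I call such a position a \emph{deviation}.

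Next I would take $\ell_i=(o,\code{key})$ to be the \emph{first} deviation on the sequence; if there is none I instead pick the last key-to-child step, which is simultaneously a weak step and a standard step so that the argument degenerates harmlessly. Setting $\ell=\ell_i$ and $\ell'=\ell_{i+1}=(o,\code{right})$, the relation $\mathit{weakLink}(\ell,\ell')$ holds by construction, and since no deviation occurs before $\ell_i$, the prefix $\ell_1,\ldots,\ell_i$ follows the standard rule at every key-field and hence witnesses $\sigma \models \QReachXK{\ell}{k}$. What remains, and what I expect to be the crux, is to prove $\QReachXYK{\ell'}{x}{k}$: that the suffix from $\ell'$ onward contains no further deviation.

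The main obstacle is thus the \emph{uniqueness} of the deviation. After the first deviation we have stepped right at the copy $w$ into $w$'s right subtree, where by the (weak) binary-search-tree invariant every key is $\ge m_w \ge k$, so a standard search now only ever moves left or stops; a second deviation at some $u$ in this subtree would require $u$ to be another copy with $\ghost{m_u} < k \le m_u$, i.e. $k$ lying inside a second half-open ghost interval nested below $w$. I would rule this out by establishing that the half-open ghost intervals of distinct copies lying on a common root-to-leaf path are pairwise disjoint: each such interval is the gap between a deleted key and its unique successor, so an overlap would, via \Cref{lem:citrus-succ-subtree} together with the weak-BST invariant and the uniqueness of present keys maintained by the algorithm (\Cref{Se:CitrusLin}), force a present key to sit inside some copy's provably empty ghost interval, a contradiction. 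Consequently $k$ can inhabit at most one such interval, the suffix after $\ell'$ deviates nowhere and is a standard path, and $\QReachXYK{\ell'}{x}{k}$ follows, completing the decomposition. I anticipate that this disjointness step, with its dependence on the precise concurrent invariants of the algorithm, will be the delicate part of the proof, whereas the prefix/suffix splitting and the characterization of deviations are routine.
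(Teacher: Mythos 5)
Your proposal is correct and takes essentially the same approach as the paper's proof: both characterize deviations as occurring only at key-fields of copy nodes for keys $k\in(\ghost{m},m]$, split the weak path at the (unique) deviation, and reduce everything to the disjointness of the ghost intervals $(\ghost{m},m]$ and $(\ghost{m'},m']$ of distinct copies along one path, established from the invariants that ghost keys and keys of distinct copies differ and that each copy's key is the smallest present key above its ghost key. Your extra care with the no-deviation degenerate case and the explicit first-deviation/prefix/suffix bookkeeping only spells out what the paper leaves implicit.
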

\begin{proof}
Recall that the predicate $\QReachXK{x}{k}$ holds in a state $\sigma$ if and only if there is a sequence of
locations $\ell_1,\ldots,\ell_{n+1}$ starting from $\ell_1 = (\rootobj,\fkey)$, ending in $\ell_{n+1} = (x, \fkey)$, and connected via the \code{left} and \code{right} fields in the following manner:
for every $i\in [0,n]$, if $\ell_i = (o,\code{left})$ or $\ell_i = (o,\code{right})$ with $\sigma(\ell_i)=(o',\fkey)$, then $\ell_{i+1} = (o',\fkey)$, and 
\begin{align}
&\mbox{if $\ell_i = (o,\code{key})$, $\sigma(o,\code{key})=m$, then }\nonumber \\
&\hspace{2cm}\ell_{i+1}=(o,\code{right})\mbox{ if $k>m$, and}\label{eq:strong-reach} \\
&\hspace{2cm}\ell_{i+1}=(o,\code{left})\mbox{ if $k<m$}\nonumber
\end{align}
Therefore, from a node with $\code{key}=m$ and $\ghost{\code{key}}=\ghost{m}\neq m$, $\mathit{weakLink}$ deviates from the relation defined in \Cref{eq:strong-reach}  
only in going to the right-subtree in more cases, and for keys in the interval $(\ghost{m},m]$. We want to prove that after one such deviation, the path cannot take another one. Hence,
it suffices to prove that if $x.\code{key}=m$ and $x.\ghost{\code{key}}=\ghost{m}$ and $y.\code{key}=m'$ and $y.\ghost{\code{key}}=\ghost{m'}$, 
and $y$ is in the right-subtree of $x$, then $(\ghost{m},m] \cap (\ghost{m'},m'] = \emptyset$. This follows from the fact that $x.\ghost{\code{key}}\neq y.\ghost{\code{key}}\land x.\code{key}\neq y.\code{key}$ is an invariant of this algorithm, and $x.\code{key}$, resp., $y.\code{key}$, is the smallest key bigger than $x.\ghost{\code{key}}$, resp., $y.\ghost{\code{key}}$.
\end{proof}

These two lemmas imply that $\exists x. \ \PReachXKgap{x}{k} \land \QXK{x}{k} \implies \exists x. \ \PReachXK{x}{k} \land \QXK{x}{k}$, and $\PReachXKgap{\NULL}{k} \implies \PReachXK{\NULL}{k}$. Note that it is not necessarily true that a real $k$-search path existed for \emph{intermediate} locations in a traversal---hence the ``weak'' $k$-reachability predicate---but does hold for the endpoints (finding the key / $\NULL$).
If a search for a key $k$ is affected by the ghost fields (otherwise they hold trivially), then let $\ell,\ell'$ be as in~\Cref{lem:citrus-ghost-path-breakdown}. In particular, $\mathit{weakLink}(\ell,\ell')$ holds and $k\in (\ghost{m},m]$. From \Cref{lem:citrus-succ-subtree}, the path from $\ell'$ must end in either $\NULL$ if $k\neq m$ or the $x$ with $\QXK{x}{m}=k$. In both cases, a standard $k$-search path exists at the point of writing the ghost field \ghost{\code{key}} to the node that contains $\ell$: for the former, the keys in $(\ghost{m},m)$ were not present in the tree at that point (since $m$ is the successor of $\ghost{m}$ in the tree---see~\cref{LnCitrus:DeleteSuccessorIntervalEmpty}); for the latter, $\ell$ was $m$-reachable at that point.

This is enough to prove the assertions at~\cref{LnCitrus:ContainsRetFalseNULL,LnCitrus:ContainsRetTrue} in \code{contains}, and those that justify a failed \code{insert} (\cref{LnCitrus:InsertFalse}) and a failed \code{delete} (\cref{LnCitrus:DeleteYNull}).

\subsection{Inferring Current State Reachability from Reachability in the Past}

The algorithm utilizes \emph{version numbers} (the \code{tag} fields) to ensure that nodes are reachable when modified. We show that the traversal that reads the version number guarantees $\Past{\QReachXKgap{x}{k} \land \QXV{x}{v}}$, and that $\QReachXK{x}{k}$ can be inferred in certain conditions by checking the version number.

\para{Strong Forepassed Condition}
We instantiate the extension for reachability with another field presented in \Cref{sec:reachability-with-field} for the reachability predicate $\QReachXKgap{x}{k}$ and the field $x.\fversion$. Every write $w$ by the algorithm either
\begin{inparaenum}
	\item does not reduce the reachability of $x$, 
	\item no write after $w$ would modify $x.\fversion$.
\end{inparaenum}
The only writes that reduce the reachability of $x$  happen after $x$ is marked as removed, and future writes to $x.\fversion$ first validate that $x$ is not marked. In particular, note that a traversal reads the $\fversion$ field \emph{before} \code{rcu\_read\_unlock}.

\para{Inferring Current State Reachability Using Tags}
The following lemma shows that the assertion in~\cref{LnCitrus:RetLocate} and the additional validations before~\cref{LnCitrus:InsertLeft} imply that the assertion at that line holds. 
\begin{lemma}
The following property holds in any state of the algorithm:
$$
		\Past{\QReachXKgap{x}{k} \land \QXV{x}{v}}
	\land
		x.\fleft = \NULL \land \QXV{x}{v} \land \QXR[\neg]{x} \land \unlocked{x}
	\implies
		\QReachXK{x}{k}.
$$
\end{lemma}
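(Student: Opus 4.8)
The plan is to read the temporal premise off the current state in two stages: first use the version field $x.\fversion$ as a certificate that $x$'s left pointer has been stably null, and then promote the ``weak'' past reachability $\QReachXKgap{x}{k}$ to ``standard'' present reachability $\QReachXK{x}{k}$. Unfolding $\Past{\QReachXKgap{x}{k}\land\QXV{x}{v}}$ --- which is precisely what the strong-forepassed instantiation of \Cref{thm:field-extension} delivers for the pair $(x,x.\fversion)$ read under the RCU lock at \cref{LnCitrus:TraverseEnd} --- gives a time $t_1$ in the current operation at which $\QReachXKgap{x}{k}$ and $\QXV{x}{v}$ hold together. Write $t_2$ for ``now'', where $x.\fleft=\NULL$, $\QXV{x}{v}$, $\QXR[\neg]{x}$, and $\unlocked{x}$ hold.

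The first, routine, step is a monotonicity argument on the tag. \code{setChild} (and the deletion code) increments $x.\fversion$ exactly on the writes that set $x.\fleft$ to $\NULL$, and the tag is never decremented; hence $\QXV{x}{v}$ at both $t_1$ and $t_2$ rules out any nullification of $x.\fleft$ in $(t_1,t_2]$. As any other write to $x.\fleft$ would leave it non-null, and $x.\fleft=\NULL$ at $t_2$, there is in fact no write to $x.\fleft$ in the window, so $x.\fleft=\NULL$ holds continuously on $[t_1,t_2]$; this is the formal counterpart of ``$x$'s left field has been null since \code{locate} read the tag'' that defeats the ABA scenario of \Cref{app:citrus}. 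Similarly, since $\frem$ is monotone (never reset to false on an existing node) and $\QXR[\neg]{x}$ holds at $t_2$, the node $x$ is unmarked throughout $[t_1,t_2]$.

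Next I would promote weak reachability to standard reachability. By \Cref{lem:citrus-ghost-path-breakdown} the weak $k$-path witnessing $\QReachXKgap{x}{k}$ at $t_1$ either coincides with a genuine $k$-search path or deviates exactly once, at a copy node $w$ with $w.\ghost{\code{key}}\neq w.\code{key}$, after which $x$ lies in the right subtree of $w$. The key observation is that $x.\fleft=\NULL$ eliminates the deviating case: by \Cref{lem:citrus-succ-subtree} every weakly reachable node strictly inside $w$'s right subtree that still has an \emph{unexplored} left pointer continues the search, so the only weakly reachable node there with a null left child is the leftmost one, namely the successor $cs$ of \Cref{fig:citrus:delete}. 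But $cs$ is locked by the deleting thread from before the ghost is installed (so the deviation can exist at $t_1$) until the deletion marks it removed and releases its lock; hence at $t_2$ either $cs$ is still locked, contradicting $\unlocked{x}$, or the deletion has finished and $cs.\frem$ is set, contradicting $\QXR[\neg]{x}$. So the path does not deviate, $\QReachXK{x}{k}$ holds at $t_1$, and I would transport it to $t_2$ using the forepassed property of $\QReachXKgap{\cdot}{k}$ from \Cref{sec:citrus-short-applying}: since $x$ is unmarked on $[t_1,t_2]$ the only reachability-reducing write that could touch a null-left node is the ghost-``collapse'', which again would force that node to be the now-removed $cs$; ruling this out leaves $x$ (weakly, hence standardly given the non-deviation) reachable at $t_2$.

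I expect the deviating-case exclusion, and its companion in the forward transport, to be the main obstacle, because it is exactly where the three mechanisms of the algorithm have to be reconciled: the per-traversal ghost discipline (which makes $w.\ghost{\code{key}}\neq w.\code{key}$ witness a concurrent two-child deletion observed by this operation), the RCU synchronization (whose \code{synchronize\_rcu} forces that deletion's ghost-collapsing unlink of $cs$ to wait for the current traversal to finish), and the lock/removed-flag protocol on $cs$ (which turns ``$x$ is the leftmost deviated node'' into a contradiction with $\unlocked{x}\land\QXR[\neg]{x}$). The arithmetic of the tags, by contrast, is straightforward; the difficulty is entirely in arguing that these interacting invariants leave no room for $x$ to have been reached off a genuine $k$-search path.
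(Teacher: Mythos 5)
Your proof is correct and is essentially the paper's own argument: the same tag-monotonicity observation (a write nullifying a left child increments the tag, and tags never decrease) giving $x.\fleft=\NULL$ and unmarkedness throughout the window, the same deviation-exclusion via \Cref{lem:citrus-succ-subtree,lem:citrus-ghost-path-breakdown} forcing any deviated null-left node to be the locked-then-marked successor $cs$, and the same marked-or-locked case analysis on the reachability-reducing writes (bypass and two-child removal); the paper merely packages the transport step as an induction on the number of interfering writes rather than your direct interval argument. One wording fix: in your final parenthetical, ``weakly, hence standardly given the non-deviation'' should not appeal to the non-deviation established at $t_1$ (the weak path witnessing reachability at $t_2$ is a different path in a different state), but should re-run the same exclusion argument at $t_2$, which your premises $\unlocked{x}\land\QXR[\neg]{x}$ at $t_2$ support verbatim.
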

\begin{proof}
This is proved by induction on the number of interfering writes between the moment in the past where $\QReachXKgap{x}{k} \land \QXV{x}{v}$ holds and the current time.

For the base case (no interfering writes), if $\QReachXK{x}{k} \land \QXV{x}{v}$ holds in the past, then $\QReachXK{x}{k}$ follows easily. 
Otherwise, the traversal ending in $x$ must go through a node $y$ with $y.\code{key}=m$ and $y.\ghost{\code{key}}=\ghost{m}\neq m$, and 
$k \in (\ghost{m},m]$. By \Cref{lem:citrus-succ-subtree,lem:citrus-ghost-path-breakdown}, $x$ can only be a node holding the key $m$ which is however locked (locked at \cref{LnCitrus:UsefulLock1} by a concurrent removal of a node with 2 children). This implies that the claim holds vacuously.

For the induction step, in the last preceding time when $x$ was unlocked, $x$ had also a $\NULL$ left child, since writes that set the left child to $\NULL$ increase the version number (calls to \code{setChild} in~\cref{LnCitrus:DeleteOneBypass,LnCitrus:DeleteRemoveCSLeftIncrementTag}). It was also unmarked, and hence $\QReachXK{x}{k}$. What writes could have modified this fact? 
\begin{itemize}
	\item A bypass (\cref{LnCitrus:DeleteOneBypass}), but then $x$ would have been marked before the locks were released (\cref{LnCitrus:BypassMarkY}).
	\item the removal of a node with 2 children (\cref{LnCitrus:DeleteWithCopyBypass}), but then $x$ would either be still locked or marked (recall that $x.\fleft=\NULL$).
\end{itemize}
\end{proof}

\para{Insertion as a right child}
Insertion to a right child (\cref{LnCitrus:InsertRight}) is performed without checking the version number. The reachability in the current state follows from the following property concerning the reachability of $\fright$ fields only:
$$
\PReachXKgap{x.\fright}{k} \land \QXR[\neg]{x} \land \unlocked{x} \implies \QReachXK{x.\fright}{k}
$$
The reason this holds is that the first step in removing a node with 2 children (\cref{LnCitrus:DeleteWithCopyBypass}) does not reduce reachability of $\fright$ fields, but only $\fleft$ fields (the predicate $\QReachXKgap{x}{k}$ makes it possible to reach more locations by going to the right sub-tree instead of the left sub-tree). %

\para{Delete}
For \code{delete}, the assertion $\QReachXK{y}{k}$ in~\cref{LnCitrus:DeleteReachableNow} is deduced without alluding to version numbers, using a simpler invariant, that unremoved, unlocked nodes are reachable for their own key: $\Past{\QReachXKgap{x}{k}}\land \QXK{x}{k} \land \QXR[\neg]{x} \land \unlocked{x} \implies \QReachXK{x}{k}$. Note that a removal of a node with 2 children generates two copies of the same key, but marks the obsolete one before unlocking it (\cref{LnCitrus:DeleteSetYMark}). $\QReachXK{x}{k}$ follows since $\QXR[\neg]{x}$ and it is an invariant that a node has at most one unmarked, unlocked parent.

\subsection{Searching for the Successor Key}

The traversal in~\code{delete} that finds the successor of the deletion target, in~\crefrange{LnCitrus:SuccTraversalStart}{LnCitrus:SuccTraversalEnd}, differs from the previous traversals, in that it is not performed in the scope of an RCU lock. It also does not utilize version numbers. 
This is justified by observing that the problematic interfering writes, the final stage in removing a node with 2 children---which would have violated the forepassed condition of the standard traversals (see \Cref{sec:citrus-short}) had it been performed without the RCU---is impossible in the successor traversal, because the node \code{y} (with key $k$) is \emph{locked}.

Formally, we consider instances of reachability predicates $\QReachXKsucc{\ell}{k}$, with $\epsilon > 0$ chosen s.t.\ $k+\epsilon < k''$ for every $k''$ that may be present in the tree during the traversal.\footnote{Choosing such $\epsilon$ is trivial when keys are integers, but since this is only an artifact of the proof it can be chosen as prophecy concerning the keys that are present and those that would be inserted.}

The guarantee from the traversal is $\PReachXKsucc{\code{ps}}{k}$ in~\cref{LnCitrus:SuccTraversalPast}. It holds because the traversal starts from $\code{y}$ that has $\QReachXKsucc{\code{y}}{k}$---this follows from $\QReachXK{\code{y}}{k}$ and the fact that $\QXK{\code{y}}{k}$ (\cref{LnCitrus:DeleteFoundKey,LnCitrus:DeleteReachableNow})---and that the \emph{strong forepassed} condition holds: that only writes that reduce $k+\epsilon$-reachability are
\begin{inparaenum}
	\item bypass (including the final stage in removing a node with 2 children), which marks the node, as usual, and
	\item the first step of a \emph{different}\footnote{The same operation would also perform this write later, but this is only \emph{after} the successor traversal completes.} remove2c (\cref{LnCitrus:DeleteWithCopyBypass}), deleting $k_0$ with copying its successor with key $k_1$. However, this only reduces the reachability of nodes in the subtree of said $k_0$ for keys $\tilde{k} \in (k_0,k_1]$. Note that $k \neq k_0$ the node with key $k$ is locked. Hence, similarly to the argument in \Cref{lem:citrus-ghost-path-breakdown}, a different removal of a node with 2 children must operate on a disjoint interval: $k \not\in (k_0,k_1]$ and hence also $k+\epsilon \not\in (k_0,k_1]$. 
\end{inparaenum}

From a search path in the past we obtain $\QReachXKsucc{\code{ps}}{k}$ in the current state in~\cref{LnCitrus:SuccTraversalNow} by a simple invariant: while $\code{y}$ is (continuously) locked, $\PReachXKsucc{\ell}{k} \land \QXR[\neg]{\ell} \implies \QReachXKsucc{\ell}{k}$. The argument is the same as in proving the strong forepassed condition.

The assertions concerning $\QReachXKsucc{\code{cs}}{k}$ (\cref{LnCitrus:SuccEndNow}) are deduced in a straightforward manner from the preceding assertions and reading the link between $\code{ps}$ and $\code{cs}$ under the protection of a lock.  
\fi

\end{document}